\documentclass[a4paper,UKenglish,cleveref,autoref,thm-restate]{lipics-v2021}

\pdfoutput=1 \hideLIPIcs

\title{Counting on Nowhere Dense Classes}

\author{Steffen {van Bergerem}}{Humboldt-Universität zu Berlin, Germany}{steffen.van.bergerem@hu-berlin.de}{https://orcid.org/0000-0002-5212-8992}{}
\author{Nicole Schweikardt}{Humboldt-Universität zu Berlin, Germany}{schweikn@hu-berlin.de}{https://orcid.org/0000-0001-5705-1675}{}

\authorrunning{Steffen van Bergerem and Nicole Schweikardt}

\Copyright{Steffen van Bergerem and Nicole Schweikardt}

\ccsdesc[500]{Theory of computation~Complexity theory and logic}
\ccsdesc[500]{Theory of computation~Finite Model Theory}
\ccsdesc[500]{Theory of computation~Fixed parameter tractability}

\keywords{First-order logic with counting, algorithmic metatheorems,
nowhere dense, fixed-parameter almost-linear, query answering, enumeration}

\category{} 

\relatedversion{}

\useofai{No AI tools were used for conducting the research presented in this paper.}

\funding{This work was funded by the Deutsche Forschungsgemeinschaft
(DFG, German Research Foundation) --- project number 541000908
(gefördert durch die Deutsche Forschungsgemeinschaft (DFG) --- Projektnummer 541000908).}

\nolinenumbers

\EventEditors{John Q. Open and Joan R. Access}
\EventNoEds{2}
\EventLongTitle{42nd Conference on Very Important Topics (CVIT 2016)}
\EventShortTitle{CVIT 2016}
\EventAcronym{CVIT}
\EventYear{2016}
\EventDate{December 24--27, 2016}
\EventLocation{Little Whinging, United Kingdom}
\EventLogo{}
\SeriesVolume{42}
\ArticleNo{23}

\usepackage{stmaryrd}
\usepackage{xspace}
\usepackage{aligned-overset}
\usepackage{thmtools}
\usepackage[capitalise,noabbrev]{cleveref}
\usepackage{xcolor}

\definecolor{ibm-ultramarine}{HTML}{648fff}
\definecolor{ibm-indigo}{HTML}{785ef0}
\definecolor{ibm-magenta}{HTML}{dc267f}
\definecolor{ibm-orange}{HTML}{fe6100}
\definecolor{ibm-gold}{HTML}{ffb000}

\newcommand{\N}{\ensuremath{\mathbb{N}}}
\newcommand{\Npos}{\ensuremath{\mathbb{N}_{\geq 1}}}
\newcommand{\Z}{\ensuremath{\mathbb{Z}}}
\newcommand{\Q}{\ensuremath{\mathbb{Q}}}
\newcommand{\Qpos}{\ensuremath{\mathbb{Q}_{> 0}}}

\newcommand{\A}{\ensuremath{\mathcal{A}}}
\newcommand{\B}{\ensuremath{\mathcal{B}}}
\newcommand{\G}{\ensuremath{\mathcal{G}}}

\newcommand{\CM}{\ensuremath{\mathcal{M}}}

\newcommand{\K}{\ensuremath{\mathcal{K}}}
\DeclareMathOperator{\overlap}{overlap}

\newcommand{\removaldist}[1]{\circledcirc_{#1}}
\newcommand{\removaldistr}{\removaldist{r}}
\newcommand{\removalsig}[2]{#1^{(#2)}}
\newcommand{\removalsigr}[1]{\removalsig{#1}{r}}
\newcommand{\removalsigrs}[1]{\removalsig{#1}{r,s}}

\renewcommand{\phi}{\varphi}
\renewcommand{\epsilon}{\varepsilon}

\newcommand{\logic}[1]{\ensuremath{\mathsf{#1}}}
\newcommand{\FO}{\logic{FO}}
\newcommand{\locFO}{\logic{locFO}}
\newcommand{\FOplus}{\FO^+}
\newcommand{\FOplusParams}[3]{\FOplus[#1,#2,#3]}
\newcommand{\FOplusSigma}{\FOplus[\sigma]}
\newcommand{\FOplusSigmaParams}[2]{\FOplusParams{\sigma}{#1}{#2}}

\newcommand{\FOplusSigmaPQ}{\FOplusSigmaParams{p}{q}}
\newcommand{\locFOplus}{\locFO^+}
\newcommand{\locFOplusParams}[3]{\locFOplus[#1,#2,#3]}
\newcommand{\locFOplusSigma}{\locFOplus[\sigma]}
\newcommand{\locFOplusSigmaParams}[2]{\locFOplusParams{\sigma}{#1}{#2}}
\newcommand{\locFOplusPQ}{\locFOplus[p,q]}
\newcommand{\locFOplusSigmaPQ}{\locFOplusSigmaParams{p}{q}}
\newcommand{\FOC}{\logic{FOC}}
\newcommand{\cgFOC}{\logic{cgFOC}}

\newcommand{\vars}{\logic{vars}}
\newcommand{\sem}[1]{\llbracket #1 \rrbracket} \newcommand{\Land}{\ensuremath{\bigwedge}}

\newcommand{\FOCCount}[2]{\ensuremath{\# {#1}.{#2}}}
\newcommand{\I}{\ensuremath{\mathcal{I}}} 

\DeclareMathOperator{\ar}{ar}

\DeclareMathOperator*{\free}{free}
\DeclareMathOperator{\centre}{cen}
\DeclareMathOperator{\WReach}{WReach}

\newcommand{\complexityclass}[1]{\ensuremath{\mathsf{#1}}}
\newcommand{\AWstar}{\ensuremath{\complexityclass{AW}[*]}}

\newcommand{\bigO}{\ensuremath{\mathcal{O}}}

\newcommand{\bigmid}{:}

\newcommand{\set}[1]{\ensuremath{\{#1\}}} \newcommand{\setc}[2]{\ensuremath{\set{#1:#2}}}  \newcommand{\bigset}[1]{\ensuremath{\bigl\{ #1 \bigr\}}}
\newcommand{\bigsetc}[2]{\bigset{#1 \bigmid #2}}

\newcommand{\abs}[1]{\left\lvert#1\right\rvert}
\newcommand{\bigabs}[1]{\bigl\lvert#1\bigr\rvert}
\newcommand{\smallabs}[1]{\lvert#1\rvert}
\newcommand{\norm}[1]{\left\lVert#1\right\rVert}
\newcommand{\deff}{\coloneqq}
\newcommand{\ffed}{\eqqcolon}

\DeclareMathOperator{\dist}{dist}

\newcommand{\ie}{\mbox{i.e.}\xspace}

\newcommand{\C}{\ensuremath{\mathcal{C}}}
\newcommand{\PP}{\ensuremath{\mathbb{P}}}
\newcommand{\Pred}{\ensuremath{\mathsf{P}}}

\newcommand{\tuple}[1]{\ensuremath{\bar{#1}}}

\newcommand{\tu}{\tuple{u}}
\newcommand{\tv}{\tuple{v}}
\newcommand{\tw}{\tuple{w}}
\newcommand{\tx}{\tuple{x}}
\newcommand{\ty}{\tuple{y}}
\newcommand{\tz}{\tuple{z}}
\newcommand{\tC}{\tuple{C}}

\newcommand{\neighb}[3]{\ensuremath{N_{#1}^{#2}(#3)}} \newcommand{\neighbr}[2]{\neighb{r}{#1}{#2}} \newcommand{\neighbA}[2]{\neighb{#1}{\A}{#2}} \newcommand{\Neighb}[3]{\ensuremath{\mathcal{N}_{#1}^{#2}(#3)}} \newcommand{\Neighbr}[2]{\Neighb{r}{#1}{#2}} \newcommand{\NeighbA}[2]{\Neighb{#1}{\A}{#2}} \newcommand{\nrA}[1]{\ensuremath{\neighb{r}{\A}{#1}}}
\newcommand{\NrA}[1]{\ensuremath{\Neighb{r}{\A}{#1}}}

\newcommand{\fNC}{f_{\textup{NC}}}
\newcommand{\fFOTE}{f_{\textup{TE}}}
\newcommand{\fSize}{f_{\textup{size}}}
\newcommand{\fGameTree}{f_{\textup{GT}}}
\newcommand{\fBase}{f_{\textup{base}}}
\newcommand{\fCount}{f_{\textup{count}}}
\newcommand{\fCountQA}{f_{\textup{countQA}}}
\newcommand{\fPre}{f_{\textup{pre}}}
\newcommand{\fRemoval}{f_{\textup{removal}}}
 
\begin{document}

\maketitle

\begin{abstract}
  For every effectively nowhere dense class \(\C\) of relational structures,
  we present an algorithm that runs an almost-linear-time preprocessing step
  on a given structure \(\A \in \C\) and a first-order formula
  \(\phi(x_1, \dots, x_k, y_1, \dots, y_\ell)\).
  After the preprocessing, whenever given a tuple \(\tv \in A^k\),
  the algorithm computes the number of tuples \(\tw \in A^\ell\)
  that satisfy \(\A \models \phi(\tv, \tw)\) in constant time.
  Building on this, we provide an algorithm for constant-time query answering
  and constant-delay enumeration after almost-linear-time preprocessing
  for the recently introduced logic
  clique-guarded first-order logic with counting~(\(\cgFOC\))
  on effectively nowhere dense classes.
  This generalises the testing and enumeration results for first-order logic
  [Schweikardt, Segoufin, and Vigny, JACM~2022]
  and the evaluation result for the first-order logic with counting~\(\FOC_1\)
  [Grohe and Schweikardt, PODS~2018] on nowhere dense classes.
\end{abstract}

\section{Introduction}
\label{sec:intro}

In a seminal result, Grohe, Kreutzer, and Siebertz~\cite{GroheKreutzerSiebertz_2017_NowhereDense}
gave an almost-linear-time algorithm for the model-checking problem for first-order logic (\(\FO\))
on nowhere dense graph classes.
That is, for every nowhere dense graph class \(\C\),
they presented an algorithm that checks whether a given \(\FO\) sentence \(\phi\)
holds in a given graph \(G \in \C\) in time \(\bigO_{\C, \phi, \epsilon}(n^{1+\epsilon})\)
for every \(\epsilon > 0\),
where \(n\) denotes the number of vertices in \(G\).
The subscripts indicate that the constants in the \(\bigO\)-notation
may depend on \(\C\), \(\phi\), and \(\epsilon\).
This result generalises prior results for classes of
bounded degree~\cite{Seese_1996_LinearTimeBoundedDegree},
bounded expansion,
and locally bounded expansion~\cite{DvorakKralThomas_2013_LocallyBoundedExpansion}
to \emph{nowhere dense} classes,
a robust notion of sparsity introduced by Nešetřil
and Ossona de Mendez~\cite{NesetrilOssonaDeMendez_2011_NowhereDense}.
Furthermore, Kreutzer and Dawar~\cite{KreutzerDawar_2009_SomewhereDense} showed that
for graph classes closed under taking subgraphs,
the result of \cite{GroheKreutzerSiebertz_2017_NowhereDense} optimal.
That is, for every such class that is \emph{not} nowhere dense,
the \(\FO\) model-checking problem is as hard as the model-checking problem
on the class of all graphs,
which is hard for the parameterised complexity class \(\AWstar\).

Schweikardt, Segoufin, and Vigny~\cite{SchweikardtSegoufinVigny_2022_Enumeration}
generalised the model-checking result of~\cite{GroheKreutzerSiebertz_2017_NowhereDense}
to testing and enumeration on nowhere dense classes,
and they also proved the result for nowhere dense classes of relational structures.
On a nowhere dense graph class \(\C\),
this shows that for every \(\FO\) formula \(\phi\) and for every relational structure \(\A\)
whose Gaifman graph \(G_\A\) is contained in \(\C\),
after almost-linear-time preprocessing,
one can test whether a given tuple of elements satisfies \(\phi\) on \(\A\) in constant time,
and one can enumerate all tuples satisfying \(\phi\) on \(\A\) without duplicates with constant delay.
Generalising the result of~\cite{GroheKreutzerSiebertz_2017_NowhereDense} in another direction,
Grohe and Schweikardt~\cite{GroheSchweikardt_2018_FOC1} gave an almost-linear-time algorithm
for nowhere dense classes that counts the number of tuples satisfying a given \(\FO\) formula
on a given relational structure.

As our main technical contribution,
we generalise the \(\FO\) counting result of Grohe and Schweikardt~\cite{GroheSchweikardt_2018_FOC1}
and the \(\FO\) testing result of Schweikardt, Segoufin,
and Vigny~\cite{SchweikardtSegoufinVigny_2022_Enumeration}.
We present an algorithm that runs an almost-linear-time preprocessing step
on a given relational structure \(\A\) and a given \(\FO\) formula \(\phi\).
After the preprocessing, whenever given a partial assignment for the free variables of \(\phi\),
it takes constant time to count the number of satisfying assignments for \(\phi\)
extending the given partial assignment.
Formally, we prove the following result.

\begin{theorem}[Counting]
  \label{thm:fo-counting}
  Let \(\C\) be an effectively nowhere dense graph class.
  There is a computable function \(\fCountQA\) and an algorithm that does the following.
  Given an \(\FO\) formula \(\phi(\tx, \ty)\),
  a \(\sigma\)-structure \(\A\) for some \(\sigma \supseteq \sigma(\phi)\) with \(G_\A \in \C\),
  and given an \(\epsilon \in \Qpos\),
  after preprocessing in time \(\fCountQA(\phi, \sigma, \epsilon) \cdot \abs{A}^{1+\epsilon}\),
  the algorithm can answer the following queries in time \(\fCountQA(\phi, \sigma, \epsilon)\):
  given a tuple \(\tv \in A^{\abs{\tx}}\),
  output \(\abs{\setc{\tw \in A^{\abs{\ty}}}{\A \models \phi(\tv, \tw)}}\).
\end{theorem}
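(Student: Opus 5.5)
We prove the statement by induction on the number \(\ell = \abs{\ty}\) of counted variables, combining the Grohe–Schweikardt counting machinery for \(\FO\) on nowhere dense classes~\cite{GroheSchweikardt_2018_FOC1} with the constant-time testing result of~\cite{SchweikardtSegoufinVigny_2022_Enumeration}.

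\textbf{Base case and locality.} For \(\ell = 0\), the count is \(1\) or \(0\) according to whether \(\A \models \phi(\tv)\). This is exactly the testing problem, which~\cite{SchweikardtSegoufinVigny_2022_Enumeration} solves with constant-time answers after almost-linear preprocessing.

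For \(\ell \geq 1\), we first apply Gaifman's theorem and rewrite \(\phi(\tx,\ty)\) as a Boolean combination of basic local sentences and \(r\)-local formulas \(\psi(\tx,\ty)\). The basic local sentences are evaluated once during preprocessing, using~\cite{GroheKreutzerSiebertz_2017_NowhereDense}, so they become constants. For the local part, we split according to the \(2r'\)-distance pattern among all free variables, for a suitable \(r' = r'(\phi)\). We then count via inclusion–exclusion over the ``far'' constraints. A tuple \(\ty\) decomposes into clusters, and each cluster is either close to some component of \(\tx\) or far from all of \(\tx\) and from the other clusters.

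\textbf{Clusters far from \(\tx\).} The count factorises into a product of per-cluster counts. A cluster far from \(\tx\) contributes a global number, which is computable in preprocessing by~\cite{GroheSchweikardt_2018_FOC1}. The constraint that the cluster is far from \(\tx\) and from the other clusters is handled by inclusion–exclusion, replacing ``far'' with ``unconstrained minus close''. Each ``close'' term merges clusters or attaches a cluster to \(\tx\), so it reduces to the next case. The terms form a well-founded recursion on the number of far clusters, and their number depends only on \(\phi\).

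\textbf{Clusters close to \(\tx\); main obstacle.} The hard case is counting, for a given \(\tv\), the tuples \(\tw\) inside the \(r'\)-neighbourhood of \(\tv\) that satisfy a local formula. Neighbourhoods in nowhere dense classes can be large, so we cannot enumerate them at query time. My plan is to use the splitter-game / neighbourhood-cover approach of~\cite{GroheKreutzerSiebertz_2017_NowhereDense,SchweikardtSegoufinVigny_2022_Enumeration}.

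First, compute a sparse \((r',2r')\)-neighbourhood cover with degree \(\abs{A}^{\epsilon}\). Each \(\tv\) is then assigned a cluster \(X\) that contains its \(r'\)-neighbourhood. Within each cluster we recurse: Splitter's move removes a vertex \(s\), and we mark the neighbours of \(s\) with fresh unary predicates, giving the removal structure. The formula is rewritten accordingly, and the count splits into the tuples that use \(s\) in some position and those that avoid \(s\). Tuples that use \(s\) are handled by fixing that coordinate to \(s\), which moves one variable from \(\ty\) into \(\tx\), so the induction hypothesis on \(\ell\) applies. Tuples that avoid \(s\) are handled by the recursion into the smaller-radius game.

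Since the game has depth bounded by \(\C\) and \(r'\), the recursion depth is constant. For each cluster and game position we store a data structure answering the count queries for a fixed finite set of formulas. The total size stays almost linear because the cover has degree \(\abs{A}^{\epsilon}\). At query time we locate the cluster of \(\tv\) and follow Splitter's moves relevant to \(\tv\), which takes a constant number of steps. Here we use constant-time lookups in the style of~\cite{SchweikardtSegoufinVigny_2022_Enumeration}, which store for each vertex its cluster and its distances to the splitter vertices.

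The main difficulty is controlling the interaction between inclusion–exclusion and the removal recursion. We must keep the set of derived formulas finite and dependent only on \((\phi,\sigma,\epsilon)\), so that \(\fCountQA\) is computable. We also need the per-cluster counts to be correct even though clusters overlap: each tuple must be counted exactly once, which we ensure by always using the cluster assigned to \(\tv\).
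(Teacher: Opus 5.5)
Your overall skeleton matches the paper's: split by distance pattern, use inclusion--exclusion to reduce to connected patterns, take global counts for pure-\(\ty\) components, and handle mixed connected patterns inside the cluster of \(v_1\) using covers, the splitter game and vertex removal. Handling tuples that hit the removed vertex by the induction hypothesis on \(\ell\), with \(s\) as an extra query argument, is a workable alternative to the paper's maps \(\mu\) in \cref{cor:removal}. The gap is in the other branch, `tuples that avoid \(s\) are handled by the recursion into the smaller-radius game'.

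After \(s\) is removed, you must count tuples for the rewritten formula on \(X\) minus \(s\), and that formula is not local there. The distance pattern was measured in \(X\), where paths through \(s\) exist. So a satisfying \(\tw\) need not lie near \(v_1\) in the new structure, and a connected pattern can split into a part near \(\tv\) and a part that is close only to \(s\). Before you may descend into the cluster of \(v_1\) at the next level, you therefore have to re-normalise into sentences plus local formulae and redo the pattern split and inclusion--exclusion at every node. With Gaifman's theorem this breaks down. The locality radius depends on the quantifier rank, and the rank grows both under the normal-form construction and under your rewriting: with only the neighbours of \(s\) marked, old distances through \(s\) need extra quantifiers. So the radius grows from level to level. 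But the cover radius \(r'\), and with it the game depth \(t(2r')\), must be fixed before the recursion starts, which is circular. Note also that the radius of the splitter game does not shrink; what must stay fixed is the locality radius.

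The paper breaks this circularity with rank preservation. It works in \(\FOplusSigmaPQ\). The Removal Lemma (\cref{lem:removal}, iterated in \cref{cor:removal}) marks all distances \(j \le r\) to removed vertices by \(D_j\) and keeps projections \(R_I\). It maps \(\FOplusSigmaPQ\) formulae to \(\FOplusParams{\removalsigrs{\sigma}}{p}{q}\) formulae with the same \(p,q\). The rank-preserving normal form (\cref{thm:rank-preserving}) then returns sentences and \(\locFOplusParams{\removalsigrs{\sigma}}{p}{q}\) formulae. These are \(\delta(p,q)\)-local with the same \(p,q\), and come as a disjoint case split suitable for summing counts.

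Consequently \(r = \delta(p,q)\) and \(r' = r(k+\ell)\) are fixed once, and a single game tree of height \(t(2r')\) can be built (\cref{lem:game-tree}). At every node, \cref{lem:reduction-connected} yields connected pieces: pure-\(\ty\) pieces become static counts on that node's structure, pure-\(\tx\) pieces become tests, and mixed pieces recurse. \cref{lem:delta-connected-local} justifies restricting mixed pieces to the cluster of \(v_1\). All formulae involved lie in \(\FOplusParams{\removalsig{\sigma}{r,\lambda(2r')}}{p}{q}\), which keeps them finite in number and the bounds computable. This is exactly the `interaction between inclusion--exclusion and the removal recursion' you flag as the main difficulty; your proposal names it but does not supply the idea that resolves it.
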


Our proof utilises the \emph{splitter game} and \emph{sparse neighbourhood covers}
from~\cite{GroheKreutzerSiebertz_2017_NowhereDense}
as well as a very recent \emph{rank-preserving Gaifman normal form}
from~\cite{GroheSchweikardt_2026_Locality}.

For our second main result,
we build upon \cref{thm:fo-counting} and generalise the testing and enumeration results
for \(\FO\) by Schweikardt, Segoufin, and Vigny~\cite{SchweikardtSegoufinVigny_2022_Enumeration}
to \emph{expressions}, that is, formulae or \emph{counting terms},
of the logic \emph{clique-guarded first-order logic with counting} (\(\cgFOC\)).
This logic has been introduced very recently by van Bergerem, Lange,
and Schweikardt~\cite{vanBergeremLangeSchweikardt_2026_cgFOC}.
For an expression \(\xi(\tx)\), a structure \(\A\),
and a tuple of elements \(\tv \in A^{\abs{\tx}}\),
we let \(\sem{\xi(\tv)}^\A\) denote the result of \(\xi\)
when evaluated on \(\A\) while interpreting the variables in \(\tx\)
with the vertices in \(\tv\) (see \cref{sec:cgfoc} for a formal definition).

\begin{restatable}[Query Answering and Enumeration for cgFOC]{theorem}{answeringAndEnumeration}
  \label{thm:answering-enumeration}
  Let \(\C\) be an effectively nowhere dense graph class.
  There is a computable function \(f\) and an algorithm that does the following.
  Given a \(\cgFOC\) expression \(\xi(\tx)\),
  a \(\sigma\)-structure \(\A\) for some \(\sigma \supseteq \sigma(\xi)\)
  with \(G_\A \in \C\),
  and given an \(\epsilon \in \Qpos\),
  after preprocessing in time \(f(\xi, \sigma, \epsilon) \cdot \abs{A}^{1+\epsilon}\),
  \begin{alphaenumerate}
    \item\label{item:answering}
      the algorithm can answer the following queries in time \(f(\xi, \sigma, \epsilon)\):
      given a tuple \(\tv \in A^{\abs{\tx}}\), output \(\sem{\xi(\tv)}^\A\), and
    \item\label{item:enumeration}
      if \(\xi\) is a formula, then the algorithm can enumerate
      all tuples \(\tv \in A^{\abs{\tx}}\) such that \(\A \models \xi(\tv)\)
      with \(f(\xi, \sigma, \epsilon)\) delay in lexicographic order, without duplicates.
  \end{alphaenumerate}
\end{restatable}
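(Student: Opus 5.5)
We prove \cref{thm:answering-enumeration} by structural induction on the \(\cgFOC\) expression \(\xi\), with \cref{thm:fo-counting} as the main tool. The idea is to replace each maximal counting subterm by fresh relation symbols, materialised in almost-linear time, so that what remains at each stage is a plain \(\FO\) formula over an expanded structure whose Gaifman graph is still in \(\C\). The clique guard makes this work. In a clique-guarded counting term \(\#(\ty).\psi(\tx,\ty)\), the free variables \(\tx\) that the term depends on are guarded by a clique in \(G_\A\). So the value is needed only on tuples \(\tv\) that form a clique, or on a constant number of global values when \(\tx\) is empty.

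\textbf{Step 1: the clique tuples.} On a nowhere dense class, cliques have bounded size, say \(\omega\). Moreover, the number of clique tuples of length \(k\) is \(\bigO(\abs{A}^{1+\epsilon})\), because nowhere dense classes have degeneracy \(\abs{A}^{\epsilon}\). We list all of them in almost-linear time using a degeneracy ordering.

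\textbf{Step 2: materialising counting terms.} Process the counting subterms bottom-up. By induction, the inner formula \(\psi\) has already been turned into an \(\FO\) formula \(\psi'\) over an expansion \(\A'\). This \(\A'\) adds relations only on clique tuples, so \(G_{\A'} = G_\A\). Apply \cref{thm:fo-counting} to \(\psi'\) and query it once for each clique tuple \(\tv\). This gives the value \(\sem{\#(\ty).\psi(\tv)}^\A\) for all \(\tv\) in total time \(f\cdot\abs{A}^{1+\epsilon}\), the number of queries being bounded by Step 1. Counting terms are combined by \(+\) and \(\cdot\), and formulas compare terms via numerical predicates. Because of the guards, these compositions are evaluated pointwise on the same clique tuples, and the integer results are stored in a lookup table indexed by tuple; constant-time lookup needs a standard hashing-free trie or array representation over the degeneracy order. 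Each atomic numerical subformula \(\Pred(t_1,\dots,t_m)\) then becomes a new relation \(R_\Pred\) on clique tuples, which we add to the structure. The result is an \(\FO\) formula \(\xi'\) over \(\sigma'\supseteq\sigma\) with \(G_{\A'}=G_\A\in\C\), where \(\sigma'\) depends only on \(\xi\).

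\textbf{Step 3: answering and enumeration.} For (a), if \(\xi\) is a formula we run the testing algorithm, which is \cref{thm:fo-counting} with \(\ty\) empty. If \(\xi\) is a term, its top level is a polynomial in counting terms, so we apply \cref{thm:fo-counting} to each outermost counting subterm with the given \(\tv\) and combine the answers. For unguarded top-level free variables, \cref{thm:fo-counting} handles arbitrary \(\tv\) directly. For (b), we reduce to constant-delay enumeration for \(\FO\) on nowhere dense classes in lexicographic order~\cite{SchweikardtSegoufinVigny_2022_Enumeration}, applied to \(\xi'\) on \(\A'\).

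\textbf{Main obstacle.} The hardest part is Step 2's bookkeeping. First, the counting values must be available in constant time per clique tuple, which requires the bounded-size clique enumeration and lookup structure. Second, we must check that adding relations on clique tuples does not change the Gaifman graph, which it does not, since every tuple in such a relation is already a clique. If the lexicographic enumeration result for \(\FO\) is only available in a weaker form, I would instead derive it from (a). The plan is to combine constant-time testing with a bounded-depth search, using \cref{thm:fo-counting} counts of extensions to skip empty branches; this yields constant delay in lexicographic order.
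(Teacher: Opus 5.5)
Your proposal is correct and takes essentially the paper's route. Like your Step 2, the paper's \cref{lem:cgfoc-preprocessing} replaces each guarded predicate application by a fresh relation on clique tuples, whose number is bounded via \cref{lem:nowhere-dense-cliques}, and fills it by querying \cref{thm:fo-counting} on the recursively translated inner formulae; the theorem then follows from \cref{thm:fo-testing-enumeration} and \cref{thm:fo-counting}.

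The remaining differences are cosmetic. You list clique tuples via a degeneracy ordering rather than by FO enumeration of a clique formula. Your lookup table is unnecessary, since you can query \cref{thm:fo-counting} directly at the projected tuple. Your fallback for lexicographic enumeration is not needed, because \cref{thm:fo-testing-enumeration} already gives lexicographic order. As sketched it would not give constant delay anyway: a count for a fixed prefix does not locate the next non-empty branch without scanning candidates.
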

\medskip

The logic \(\cgFOC\) is a fragment of the first-order logic with counting \(\FOC\).
The logic \(\FOC\) has been introduced by Kuske and Schweikardt~\cite{KuskeSchweikardt_2017_FOCN},
and it extends first-order logic by \emph{counting terms} and \emph{numerical predicates}.
Counting terms are polynomials of \emph{$\#$-terms} of the form \(\FOCCount{\tx}{\phi}\),
which provide a means of counting the number of satisfying assignments of a formula~\(\phi\).
The numerical predicates allow for the comparison of counting terms;
that is, for counting terms \(t_1, \dots, t_m\) and a numerical predicate \(\Pred \subseteq \Z^m\),
\(\Pred(t_1, \dots, t_m)\) is a formula.
For example, the numerical predicate \(\Pred_= \deff \setc{(i,i)}{i \in \Z}\)
facilitates checking that the results of two counting terms are equal.
Kuske and Schweikardt~\cite{KuskeSchweikardt_2017_FOCN} gave algorithms for constant-time testing
and constant-delay enumeration for \(\FOC\) on classes of bounded degree.
Beyond classes of bounded degree, Grohe and Schweikardt~\cite{GroheSchweikardt_2018_FOC1}
showed that the model-checking problem for \(\FOC\)
is already \(\AWstar\)-hard on the very restrictive class of (unranked) trees of height \(3\).
The authors also defined the fragment \(\FOC_1\) of \(\FOC\)
and showed that the model-checking problem for \(\FOC_1\) can be solved in almost-linear time
on effectively nowhere dense classes \cite{GroheSchweikardt_2018_FOC1}.
The logic \(\FOC_1\) restricts \(\FOC\) by requiring that the counting terms \(t_1, \dots, t_m\)
in formulae of the form \(\Pred(t_1, \dots, t_m)\) have at most one free variable in total.
The logic \(\cgFOC\) of~\cite{vanBergeremLangeSchweikardt_2026_cgFOC} generalises \(\FOC_1\)
by requiring that every pair of free variables of the counting terms \(t_1, \dots, t_m\)
is guarded by an atomic formula.
Thus, \cref{thm:answering-enumeration} generalises the result of
\cite{GroheSchweikardt_2018_FOC1} from \(\FOC_1\) to the stronger logic \(\cgFOC\)
and from model checking to constant-time query answering and constant-delay enumeration.
In addition to introducing the logic \(\cgFOC\),
\cite{vanBergeremLangeSchweikardt_2026_cgFOC}
also gave algorithms for constant-time query answering and constant-delay enumeration
for \(\cgFOC\) on classes of locally bounded expansion.
Hence, \cref{thm:answering-enumeration} generalises
the algorithmic results of~\cite{vanBergeremLangeSchweikardt_2026_cgFOC}
from classes of locally bounded expansion to nowhere dense classes.

\Cref{thm:answering-enumeration} is optimal in the following sense.
First, as discussed above,
by the results of Kreutzer and Dawar~\cite{KreutzerDawar_2009_SomewhereDense},
the \(\FO\) model-checking problem is \(\AWstar\)-hard on every graph class
that is closed under taking subgraphs and not nowhere dense.
In particular, this also shows that the \(\cgFOC\) model-checking problem is \(\AWstar\)-hard
on such classes, and \cref{thm:answering-enumeration} even generalises the model-checking problem.
Secondly, the article \cite{vanBergeremLangeSchweikardt_2026_cgFOC}
presents an \(\FOC\) formula \(\phi\) using constructs of the form
\(E(x,y) \land E(y,z) \land \Pred_=(t_1(x), t_2(z))\)
such that the model-checking problem for \(\phi\) is already \(\AWstar\)-hard
on a class of coloured trees of height at most \(2\).
Thus, slightly generalising \(\cgFOC\) by guarding free variables with a path of length \(2\)
instead of a single atom leads to a logic that is too powerful,
even for very simple sparse classes.
Lastly, the article \cite{vanBergeremLangeSchweikardt_2026_cgFOC}
also gives a \(\cgFOC\) formula \(\psi\) using only constructs of the form
\(E(x,y) \land \Pred_=(t_1(x), t_2(y))\)
with only two free variables that are guarded by an edge,
and it shows that the model-checking problem for \(\psi\) is already \(\AWstar\)-hard
on a class of coloured graphs of shrub-depth at most \(2\).
Hence, even very restrictive variants of \(\cgFOC\) are too powerful
for very simple dense classes.
Meanwhile, the evaluation of \(\FOC_1\) expressions on well-behaved \emph{dense} classes
remains as an interesting open question.

In addition to the query-answering and enumeration results for \(\cgFOC\),
the article~\cite{vanBergeremLangeSchweikardt_2026_cgFOC} also gives algorithms
for probably approximately correct (PAC) learning for \(\cgFOC\)-definable concepts
on classes of effectively locally bounded expansion.
We remark that it is straightforward to generalise these results
to effectively nowhere dense classes by using \cref{thm:answering-enumeration}
instead of the corresponding result in~\cite{vanBergeremLangeSchweikardt_2026_cgFOC}
for classes of locally bounded expansion.

The remainder of this paper is organised as follows.
In \cref{sec:prelims}, we fix the basic notation for this paper,
we introduce nowhere dense classes via the game characterisation
from~\cite{GroheKreutzerSiebertz_2017_NowhereDense},
and we introduce the rank-preserving Gaifman normal form from~\cite{GroheSchweikardt_2026_Locality}.
In \cref{sec:counting}, we prove \cref{thm:fo-counting}.
For this, we first discuss all major ingredients,
and we give the proof of our main result at the end of the section.
In \cref{sec:cgfoc}, we formally define the logic \(\cgFOC\)
and prove \cref{thm:answering-enumeration} based on \cref{thm:fo-counting}.
Lastly, in \cref{sec:game-tree-proof,sec:tools-formulae-proofs},
we provide proofs for the tools discussed in \cref{sec:counting}.
 \section{Preliminaries}
\label{sec:prelims}

We let \(\Z\), \(\N\), \(\Npos\), \(\Q\), and \(\Qpos\) denote the sets of
integers, non-negative integers, positive integers, rationals, and positive rationals, respectively.
For \(m, n \in \Z\), we let \([m, n] \deff \setc{i \in \Z}{m \leq i \leq n}\)
and \([n] \deff [1, n]\).
For a set \(I \subseteq \Z\) and an integer \(k \in \Z\),
we let \(I + k \deff \setc{i+k}{i \in I}\).
For a set \(X\) and \(k \in \N\),
we let \(\binom{X}{k} \deff \setc{Y}{Y \subseteq X, \abs{Y} = k}\).

For a \(k\)-tuple \(\tv = (v_1, \dots, v_k)\),
we write \(\abs{\tv}\) to denote its \emph{length} \(k\),
and we let \(\tilde{v} \deff \set{v_1, \dots, v_k}\).
For \(k \in \N\), a \(k\)-tuple \(\tv = (v_1, \dots, v_k)\),
and for a set \(I \subseteq \Z\),
we let \(\tv_I\) be the tuple obtained from \(\tv\)
by dropping all entries \(v_i\) with \(i \not\in I\).
For \(\ell \in \N\) with \(\ell \geq k\)
and for a function \(\mu \colon [\ell] \to S\) for some set \(S \subseteq \Z\),
we let \(\tv_\mu \deff \tv_I\) for \(I \deff \setc{i \in [\ell]}{\mu(i) = 0}\).

When speaking of \emph{graphs}, we mean undirected graphs without self-loops.
For a graph \(G\), we write \(V(G)\) and \(E(G)\)
to denote its vertex set and edge set, respectively.
For \(V' \subseteq V(G)\), we write \(G[V']\) to denote the subgraph of \(G\) induced by \(V'\).
For \(k \in \Npos\),
let \(\G_k\) be the set of all graphs \(G\) with \(V(G) = [k]\).

For two vertices \(v,w\) of a graph \(G\), we let \(\dist^G(v,w)\),
called the \emph{distance} between \(v\) and \(w\) in \(G\),
be the \emph{length} (that is, the number of edges) of a shortest path between \(v\) and \(w\).
If there is no such path, then \(\dist^G(v,w) \deff \infty\).
For \(r \in \N\), we let \(\neighbr{G}{v} \deff \setc{w \in V(G)}{\dist^G(v,w) \leq r}\)
be the \emph{\(r\)-neighbourhood} of \(v\) in \(G\),
we let \(\Neighbr{G}{v} \deff G[\neighbr{G}{v}]\) be the corresponding induced subgraph,
and we extend this to tuples \(\tv \in \bigl(V(G)\bigr)^k\) for \(k \in \Npos\)
by setting \(\neighbr{G}{\tv} \deff \bigcup_{i \in [k]} \neighbr{G}{v_i}\)
and \(\Neighbr{G}{\tv} \deff G[\neighbr{G}{\tv}]\).
We say that a graph \(G\) has \emph{radius at most \(r\)} if there
exists a \(v \in V(G)\) such that \(G = \Neighbr{G}{v}\),
and we let \(c(G) \subseteq 2^{V(G)}\) be the vertex sets of the connected components of \(G\).

A \emph{(relational) signature} is a finite set of relation symbols.
Every relation symbol \(R\) has an \emph{arity} \(\ar(R) \in \N\).
Let \(\sigma\) be a signature.
A \emph{\(\sigma\)-structure \(\A\)}
consists of a finite non-empty set \(A\), called the \emph{universe} of \(\A\),
and a relation \(R(\A) \subseteq A^{\ar(R)}\) for every \(R \in \sigma\).
A \emph{(relational) structure} is a \(\sigma\)-structure for some signature \(\sigma\),
and we denote the signature of a structure \(\A\) by \(\sigma(\A)\).
Unless explicitly stated otherwise,
we use \(\A\) and \(\B\) and variations such as \(\A_1\), \(\A'\), etc.~to denote structures,
and we denote their universes by \(A\), \(B\), \(A_1\), \(A'\), etc., respectively.
For a signature \(\sigma' \supseteq \sigma\), a \(\sigma'\)-structure \(\A'\)
is a \emph{\(\sigma'\)-expansion} of \(\A\) if the universe of \(\A'\) is \(A' = A\)
and it holds that \(R(\A') = R(\A)\) for all \(R \in \sigma\).
If \(\A'\) is a \(\sigma'\)-expansion of \(\A\),
then \(\A\) is the \emph{\(\sigma\)-reduct} of \(\A'\).
We let the \emph{representation size} of a \(\sigma\)-structure \(\A\) be
\(\norm{\A} \deff \abs{A} + \sum_{R \in \sigma} \max(\ar(R),1) \cdot \abs{R(\A)}\).

The \emph{Gaifman graph} of a \(\sigma\)-structure \(\A\)
is the graph \(G_\A\) with \(V(G_\A) = A\) and where,
for \(v, w \in A\), we have \(\set{v,w} \in E(G_\A)\) if and only if \(v \neq w\)
and \(v,w \in \tilde{v}\) for some \(\tv \in R(\A)\) with \(R \in \sigma\).
For a number \(r \in \N\) and elements \(v, w \in A\),
the notions of \emph{distance} \(\dist^\A(v,w)\) and \emph{neighbourhood} \(\nrA{v}\)
naturally extend from graphs to structures by considering the Gaifman graph \(G_\A\) of \(\A\).
We write \(\NrA{v}\) to denote the induced substructure
\(\A[\nrA{v}]\) of \(\A\) on the set \(\nrA{v}\).
For \(r \in \N\) and a tuple \(\tv \in A^k\) for some \(k \in \Npos\),
we let \(G^\A_{\tv, r} \in \G_k\) be the graph with \(V(G^\A_{\tv, r}) = [k]\)
and \(\set{i,j} \in E(G)\) if and only if \(i \neq j\) and \(\dist^\A(v_i, v_j) \leq r\).

For our algorithmic results, we assume a linear order \(<\) on the universe \(A\).
For example, in our enumeration result, we will output tuples in lexicographic order
based on the linear order on \(A\),
and we could assume that \(A \subset \N\).
However, we do not assume that this linear order is realised by some relation symbol \(R \in \sigma\).
In fact, the classes of structures that we consider in this paper
will rule out linear orders of arbitrary size being implemented in \(\sigma\).

Next, we define the logic \(\FOplus[p,q]\) and its fragment \(\locFOplusPQ\)
that were introduced in~\cite{GroheSchweikardt_2026_Locality}.
We assume familiarity with first-order logic (\(\FO\))
and notions such as the free variables \(\free(\phi)\) of a formula \(\phi\).
A \emph{sentence} is a formula without free variables.
For details on \(\FO\) and related notions,
we refer to~\cref{sec:cgfoc}.
The logic \(\FOplus\) extends \(\FO\)
by adding atomic formulae \(\dist(x,y) \leq d\),
called \emph{distance atoms},
for variables \(x,y\) and \(d \in \N\),
and we let \(\A \models \dist(v,w) \leq d\)
for a structure \(\A\) and \(v,w \in A\)
if and only if \(\dist^\A(v,w) \leq d\).
Note that \(\FOplus\) is a purely syntactic extension of \(\FO\),
since the distance atoms can also be expressed with \(\FO\) formulae.

Let \(\delta \colon \N^2 \to \N, (p,q) \mapsto (4q)^p\).
For \(p,q \in \N\) with \(p \leq q\) and a signature \(\sigma\),
we define the subset \(\FOplusSigmaPQ\) of \(\FOplus[\sigma]\) as follows.
We let \(\FOplusSigmaParams{0}{q}\) be the set of all formulae \(\phi\)
such that \(\abs{\free(\phi)} \leq q\)
and \(\phi\) is a Boolean combination of atomic \(\FO[\sigma]\) formulae
and of distance atoms of the form \(\dist(x,y) \leq d\)
with \(d \leq \delta(0,q)\).
For \(p > 0\), we let \(\FOplusSigmaPQ\) be the set of all formulae \(\phi\)
with \(\abs{\free(\phi)} \leq q-p\) such that \(\phi\) is a Boolean combination of
\(\FOplusSigmaParams{p-1}{q}\) formulae,
distance atoms of the form \(\dist(x,y) \leq d\) with \(d \leq \delta(p,q)\),
formulae of the form \(\exists y\, \psi\) for an \(\FOplusSigmaParams{p-1}{q}\) formula \(\psi\),
and formulae of the form \(\exists y\, \bigl(\dist(x,y) \leq d \land \psi\bigr)\)
for an \(\FOplusSigmaParams{p-1}{q}\) formula \(\psi\) and \(d \leq \delta(p,q) - \delta(p-1,q)\).
Note that \(\FOplusSigma = \bigcup_{q \in \N, p \in [0,q]} \FOplusSigmaPQ\),
and every formula in \(\FOplusSigmaPQ\) has at most \(q-p\) free variables.

The fragment \(\locFOplusSigmaPQ\) of \(\FOplusSigmaPQ\)
forbids unrestricted quantification of the form \(\exists y\, \psi\).
That is, \(\locFOplusSigmaParams{0}{q} \deff \FOplusSigmaParams{0}{q}\),
and \(\locFOplusSigmaPQ\) is the set of all formulae \(\phi\)
with \(\abs{\free(\phi)} \leq q-p\) such that \(\phi\) is a Boolean combination of
\(\locFOplusSigmaParams{p-1}{q}\) formulae,
distance atoms of the form \(\dist(x,y) \leq d\) with \(d \leq \delta(p,q)\),
and formulae of the form \(\exists y\, \bigl(\dist(x,y) \leq d \land \lambda\bigr)\)
for a \(\locFOplusSigmaParams{p-1}{q}\) formula \(\lambda\) and \(d \leq \delta(p,q) - \delta(p-1,q)\).
By \(\FOplus[p,q]\), we denote the union of all \(\FOplusSigmaPQ\) for arbitrary signatures \(\sigma\),
and we do the same for \(\locFOplusPQ\)
and let \(\locFOplusSigma \deff \bigcup_{q \in \N, p \in [0,q]} \locFOplusSigmaPQ\).

\begin{lemma}[{\cite[Lemma~5.2]{GroheSchweikardt_2026_Locality}}]
  \label{lem:locfo-locality}
  Let \(k \in \Npos\) and \(p, q \in \N\) with \(p \leq q\).
  Every formula \(\phi(x_1, \dots, x_k) \in \locFOplusPQ\) is \(\delta(p,q)\)-local,
  that is, for every \(\sigma\)-structure \(\A\) with \(\sigma(\phi) \subseteq \sigma\)
  and for all \(\tv \in A^k\),
  we have \(\A \models \phi(\tv)\) if and only if \(\NeighbA{\delta(p,q)}{\tv} \models \phi(\tv)\).
\end{lemma}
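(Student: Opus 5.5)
We prove the statement by induction on \(p\), and in fact show something slightly stronger that survives the induction. For \(\phi \in \locFOplusPQ\) with free variables among \(\tx\), and for every set \(B\) with \(\NeighbA{\delta(p,q)}{\tv} \subseteq \A[B]\), meaning \(\neighbA{\delta(p,q)}{\tv} \subseteq B\), we claim that \(\A \models \phi(\tv)\) if and only if \(\A[B] \models \phi(\tv)\). Taking \(B = \neighbA{\delta(p,q)}{\tv}\) gives the lemma.

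Two basic facts about induced substructures are used throughout. Distances can only grow when passing to \(\A[B]\). Moreover, if \(\neighbA{R}{v} \subseteq B\), then for all \(w\) we have \(\dist^\A(v,w) \leq R\) if and only if \(\dist^{\A[B]}(v,w) \leq R\), since every shortest path of length at most \(R\) from \(v\) stays inside \(\neighbA{R}{v}\).

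The base case is \(p=0\). Atomic \(\FO[\sigma]\) formulae evaluated on elements of \(B\) have the same truth value in \(\A\) and \(\A[B]\), because the substructure is induced. For a distance atom \(\dist(x_i,x_j) \leq d\) with \(d \leq \delta(0,q)\), we have \(\neighbA{d}{v_i} \subseteq B\), so the second fact above applies. Boolean combinations are immediate. One has to handle the case where the tuple \(\tv\) lists only the free variables actually occurring; this is fine because enlarging \(\tv\) only enlarges the required neighbourhood.

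For the inductive step \(p>0\), the atomic and distance-atom cases with \(d \leq \delta(p,q)\) and the Boolean combinations go as in the base case. Subformulae in \(\locFOplusSigmaParams{p-1}{q}\) follow from the induction hypothesis, since \(\delta(p-1,q) \leq \delta(p,q)\) and so \(B\) contains their required neighbourhood. The key case is \(\exists y\,(\dist(x_i,y) \leq d \land \lambda)\) with \(\lambda \in \locFOplusSigmaParams{p-1}{q}\) and \(d \leq \delta(p,q)-\delta(p-1,q)\).

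\begin{itemize}
\item Suppose \(\A \models\) the formula, witnessed by \(w\). Then \(\dist^\A(v_i,w) \leq d\), so \(w \in B\), and by the second fact \(\dist^{\A[B]}(v_i,w) \leq d\). Also \(\neighbA{\delta(p-1,q)}{\tv w} \subseteq \neighbA{\delta(p,q)}{\tv} \subseteq B\) by the triangle inequality, since \(d+\delta(p-1,q) \leq \delta(p,q)\). Hence the induction hypothesis applied to \(\lambda\) with the tuple \(\tv w\) and the same \(B\) gives \(\A[B] \models \lambda(\tv,w)\).
\item Conversely, take a witness \(w \in B\) in \(\A[B]\). Then \(\dist^\A(v_i,w) \leq \dist^{\A[B]}(v_i,w) \leq d\), and the same neighbourhood inclusion lets us transfer \(\lambda\) back to \(\A\).
\end{itemize}

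This is exactly why the claim must be stated for an arbitrary superset \(B\) rather than for the exact neighbourhood: in the inductive call the relevant neighbourhood of \(\tv w\) is smaller than \(B\).

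The main obstacle is this bookkeeping in the existential case. One must check that the \(\delta(p-1,q)\)-neighbourhood of the extended tuple \(\tv w\) lies inside \(B\), which is where the bound \(d \leq \delta(p,q)-\delta(p-1,q)\) is used. One must also check that the free-variable bound \(q-p\) leaves room for the added variable \(y\) in \(\lambda\), which has at most \(q-p+1\) free variables, so the induction hypothesis for \(p-1\) applies. The rest of the argument is routine.
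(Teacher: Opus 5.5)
Your proof is correct. The paper gives no proof of this lemma; it imports it from \cite{GroheSchweikardt_2026_Locality} (Lemma~5.2 there), so there is no in-paper argument to compare with. Your argument is a sound self-contained proof: induction on \(p\), strengthened to arbitrary supersets \(B \supseteq \neighbA{\delta(p,q)}{\tv}\) so that the existential case can reuse the same \(B\) for the extended tuple \(\tv w\). You use the inequality \(d + \delta(p-1,q) \leq \delta(p,q)\) exactly where it is needed.

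One justification should be tightened. Your ``second fact'' is true, but the reason you give only shows that the \emph{vertices} of a shortest path of length at most \(R\) from \(v\) lie in \(B\). It does not show that its \emph{edges} survive in \(\A[B]\). For relations of arity at least \(3\), \(G_{\A[B]}\) can be a proper subgraph of \(G_\A[B]\). For example, if \((a,b,c) \in R(\A)\) is the only tuple containing \(a\) and \(b\), and \(c \notin B\), then \(a\) and \(b\) are adjacent in \(G_\A\) but not in \(G_{\A[B]}\).

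The fix is short. Take a shortest path \(v = u_0, u_1, \dots, u_m = w\) with \(m \leq R\). The edge \(\set{u_j, u_{j+1}}\) is witnessed by a tuple all of whose entries equal or are adjacent to \(u_j\). These entries are therefore at distance at most \(j+1 \leq R\) from \(v\), so they lie in \(\neighbA{R}{v} \subseteq B\). Hence the tuple belongs to \(\A[B]\) and the edge is present in \(G_{\A[B]}\). The same argument covers the base-case distance atoms with \(d \leq 1\).

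Finally, your remark about the free-variable bound \(q-p\) is unnecessary. \(\lambda \in \locFOplusSigmaParams{p-1}{q}\) holds by hypothesis, and locality does not depend on the number of free variables.
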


\begin{theorem}[Rank-Preserving Normal Form, {\cite[Theorem~7.1]{GroheSchweikardt_2026_Locality}}]
  \label{thm:rank-preserving}
  There is an algorithm that, upon input of a signature \(\sigma\),
  numbers \(p,q \in \N\) with \(p < q\),
  and a formula \(\phi(x_1, \dots, x_k) \in \FOplusSigmaPQ\)
  with \(k = \abs{\free(\phi)} \geq 1\),
  lets \(r \deff \delta(p, q)\) and computes for each \(G \in \G_k\) a number \(m_G \in \N\)
  and, for each \(i \in [m_G]\), an \(\FO[\sigma]\) sentence \(\xi^i_G\)
  and, for each \(i \in [m_G]\) and each \(I \in c(G)\),
  a formula \(\psi^i_{G,I}(\tx_I) \in \locFOplusSigmaPQ\) such that the following holds.
  \begin{bracketenumerate}
    \item For all \(\sigma\)-structures \(\A\) and all \(\tv \in A^k\),
      we have \(\A \models \phi(\tv)\) if and only if
      there exists an \(i \in [m_G]\) for \(G \deff G^\A_{\tv,r}\)
      such that \(\A \models \xi^i_G\) and \(\A \models \psi^i_{G,I}(\tv_I)\)
      for every connected component \(I\) of \(G\).
    \item For all \(\sigma\)-structures \(\A\) and all \(\tv \in A^k\),
      there is at most one \(i \in [m_G]\) for \(G \deff G^\A_{\tv,r}\)
      such that \(\A \models \xi^i_G\) and \(\A \models \psi^i_{G,I}(\tv_I)\)
      for every connected component \(I\) of \(G\).
  \end{bracketenumerate}
\end{theorem}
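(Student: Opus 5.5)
The plan is an induction on \(p\), with \(q\) and \(\sigma\) fixed. Throughout, \emph{exclusivity} (item~(2)) will be obtained uniformly. At each stage, the number of sentences and local formulae that occur is finite and computable. We then enumerate complete truth assignments (``types'') to all of them and keep only those types that satisfy \(\phi\). Two distinct types can never hold simultaneously, so the ``at most one \(i\)'' condition comes for free. The real content is therefore item~(1): expressing \(\phi\) as a disjunction, over each \(G \in \G_k\), of conjunctions of a sentence with component-wise \(\locFOplus\) formulae of the right parameters \((p,q)\).

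\textbf{Base case \(p=0\).} Here \(r = \delta(0,q) = 1\) and \(\phi\) is a Boolean combination of atoms and of distance atoms with \(d \le 1\). Fix \(G \in \G_k\) and assume \(G = G^\A_{\tv,1}\).
\begin{itemize}
  \item Any atom mentioning variables from two different components of \(G\) is false. A relational atom forces its arguments to be at distance \(\le 1\), and a distance atom needs \(d \le 1 = r\).
  \item Substituting \(\bot\) for all such atoms turns \(\phi\) into a Boolean combination of formulae, each using variables from a single component.
  \item Enumerating the complete types of these per-component atoms, and additionally fixing the distance-\(\le 1\) pattern inside each component so that \(G\) is respected, yields the formulae \(\psi^i_{G,I}\) as conjunctions of literals. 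These lie in \(\locFOplus[\sigma,0,q]\).
  \item We take \(\xi^i_G \deff \top\).
\end{itemize}

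\textbf{Inductive step.} Write \(r' \deff \delta(p-1,q)\), so that \(r = 4q\,r'\). Since Boolean combinations are handled by the type-enumeration trick above, it suffices to treat the two quantified forms, \(\exists y\,(\dist(x,y) \le d \land \psi)\) and \(\exists y\,\psi\), with \(\psi \in \FOplus[\sigma,p-1,q]\) having at most \(k+1\) free variables. Apply the induction hypothesis to \(\psi(\tx,y)\) at scale \(r'\). For each \(G' \in \G_{k+1}\) this yields sentences and \(r'\)-local formulae.

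The first difficulty is the mismatch of scales between \(G'\) (scale \(r'\)) and the target graph \(G\) (scale \(r\)). Since \(r > r'\):
\begin{itemize}
  \item variables in different components of \(G\) are also nonadjacent in \(G'\);
  \item the finer \(r'\)-pattern among variables inside one \(G\)-component is expressible by distance atoms \(\dist(x_i,x_j) \le r'\), which are local.
\end{itemize}
So I enumerate all \(G'\) compatible with \(G\) and push these distance atoms into the component formulae. Where \(y\) lands then splits the argument into two cases.

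\emph{Case \(y\) near the tuple.} Suppose \(y\) is within distance \(2r'\) (roughly) of some \(x_j\); in the bounded-quantifier case this always holds, because \(d \le \delta(p,q) - r'\). Then the components of \(G'\) not containing \(y\) are unaffected. The component containing \(y\) is absorbed into the \(G\)-component \(I\) of \(x_j\), giving the formula
\[
\exists y\,\bigl(\dist(x_j,y) \le d' \land \textstyle\bigwedge \psi' \land (\text{distance constraints})\bigr).
\]
This is in \(\locFOplus[\sigma,p,q]\) as long as \(d'\) stays below \(\delta(p,q) - \delta(p-1,q)\). The factor \(4q\) in \(\delta\) is exactly the budget needed, since a chain of at most \(q\) variables at pairwise distance \(\le r'\) plus one step to \(y\) stays inside radius \(r\).

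\emph{Case \(y\) far from the tuple.} Here \(\{y\}\) is its own component of \(G'\), and the condition reads: some \(i\) with \(\xi^i_{G'}\), the \(x\)-components, and \(\exists y\,\bigl(\bigwedge_j \dist(x_j,y) > 2r' \land \psi^i_{G',\{y\}}(y)\bigr)\). The last conjunct is not local, and I would treat it by Gaifman's scattered-witness argument, with \(\chi \deff \psi^i_{G',\{y\}}\):
\begin{itemize}
  \item Let \(\theta_s\) be the sentence ``there are \(s\) witnesses of \(\chi\) pairwise at distance \(> 4r'\)''.
  \item If \(\theta_{k+1}\) holds, some witness is far from every \(x_j\), so the conjunct is true.
  \item Otherwise, fix the exact maximal number \(s \le k\) of scattered witnesses. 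All witnesses then lie near a bounded scattered set, and ``some witness is far from \(\tx\)'' is decided by how the \(x_j\) sit relative to such witnesses. This is expressible by a sentence \(\exists z_1 \dots z_s(\dots)\) combined with local conditions around the \(x_j\); in the latter, the fixed witnesses must be handled with additional existentially quantified variables whose distance to the \(x_j\) is bounded by \(r\).
\end{itemize}

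\textbf{Main obstacle.} I expect the main obstacle to be this last sub-case, done while keeping the formulae inside \(\locFOplus[\sigma,p,q]\) and the sentences of the prescribed rank. The quantifiers over scattered witnesses must be charged against the \(q-p\) budget of free variables. This is precisely why \(\phi\) is required to have at most \(q-p\) free variables and why \(p<q\) is needed. The distance bounds must also fit into \(\delta(p,q)\). My first attempt would be to prove a separate ``rank-preserving scattered-witness'' lemma at each level \(p\), with the radii doubling through the constant \(4\) in \(\delta\). I would then assemble everything via the type-enumeration described in the first paragraph, which simultaneously guarantees the computability of all the \(m_G\), \(\xi^i_G\), \(\psi^i_{G,I}\) and their mutual exclusivity.
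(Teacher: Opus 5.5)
The paper does not prove this theorem itself. It imports it from \cite[Theorem~7.1]{GroheSchweikardt_2026_Locality}, remarking only that the proof there yields local formulae in \(\locFOplus[\sigma,p,q]\). Much of your scaffolding is fine: the base case; the type enumeration, which does give item~(2) for free and reduces Boolean combinations to their constituents; refining \(G\) by the compatible \(G' \in \G_{k+1}\) at scale \(r' = \delta(p-1,q)\); and the case where \(y\) is \(G'\)-adjacent to some \(x_m\). One correction there. In the bounded case, \(y\) need not lie within \(2r'\) of \(x_j\). What makes that case work is \(d + r' \le r\), so all \(G'\)-constraints between \(y\) and variables of other \(G\)-components hold automatically. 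In the unbounded near case you simply use the bound \(r' \le \delta(p,q)-\delta(p-1,q)\). Neither case involves a chain argument or the factor \(4q\).

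The genuine gap is the far case. It carries the entire content of the theorem, and you leave it as an announced ``first attempt''. You must rewrite \(\exists y\,\bigl(\Land_{m} \neg\dist(x_m,y)\le r' \wedge \chi(y)\bigr)\), with \(\chi \in \locFOplus[\sigma,p-1,q]\), as a Boolean combination of sentences and \(\locFOplus[\sigma,p,q]\) formulae in the \(\tx_I\). Your sketch does not fit this format.

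A sentence \(\exists z_1 \cdots z_s(\dots)\) shares no variables with the local parts. So everything about how the \(x_j\) sit relative to the witnesses has to be expressed inside the local formulae. At level \(p\), however, a local formula has room for only one bounded quantifier \(\exists z\,(\dist(x_m,z)\le d \wedge \lambda)\) over a level-\((p-1)\) formula \(\lambda\), and \(\chi\) is already of level \(p-1\). A local formula can therefore detect a single witness in a region described by distances to \(\tx_I\). It cannot relate two witnesses to each other; for instance, it cannot count \(2r'\)-scattered witnesses near \(\tx_I\), or decide whether witnesses near \(x_m\) and near \(x_{m'}\) are far apart. Yet this is exactly the information needed to compare the local witness configuration with the global counts your sentences \(\theta_s\) provide.

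Your ``additional existentially quantified variables'' near the \(x_j\) would nest bounded quantifiers and so leave \(\locFOplus[\sigma,p,q]\). This is precisely how Gaifman's classical argument loses rank and radius. A rank-preserving treatment with radius \((4q)^p\) needs a genuinely new idea, the one supplied in \cite{GroheSchweikardt_2026_Locality}. Your proposal neither provides it nor identifies where the factor \(4q\) is really used. Consequently item~(1) is not established in the inductive step.
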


We remark that \cite[Theorem~7.1]{GroheSchweikardt_2026_Locality} only states that
the formulae \(\psi^i_{G,I}(x_I)\) are \(r\)-local formulae in \(\FOplusSigmaPQ\).
However, the proof of the result in~\cite{GroheSchweikardt_2026_Locality}
explicitly states that the formulae are contained in \(\locFOplusSigmaPQ\).
Meanwhile, \cite[Theorem~7.1]{GroheSchweikardt_2026_Locality} gives more details
on the sentences \(\xi^i_G\).
For our purposes, it suffices to know that these are \(\FO[\sigma]\) sentences.

For \(k \in \Npos\), a graph \(G \in \G_k\), and \(r \in \N\), let
\[\delta_{G,r}(x_1, \dots, x_k) \deff \Land_{\set{i,j} \in E(G)} \dist(x_i,x_j) \leq r
\ \land\ \Land_{\set{i,j} \in \binom{[k]}{2} \setminus E(G)} \neg \dist(x_i,x_j) \leq r.\]
Note that \(\delta_{G,r} \in \locFOplus[p,q]\) for all \(p,q \in \N\)
with \(k + p \leq q\) and \(r \leq \delta(p,q)\).

For a formula \(\phi(\tx)\), a structure \(\A\),
and a tuple \(\tv \in A^k\) for some \(k \leq \abs{\tx}\),
we let \(\phi(\A, \tv) \deff \setc{\tw \in A^{\abs{\tx} - k}}{\A \models \phi(\tv\tw)}\).
Furthermore, we let \(\phi(\A) \deff \phi(\A, ())\).

\paragraph*{Nowhere Dense Classes}
For the definition of nowhere dense classes,
we use the game characterisation of such classes introduced
in~\cite[Section~4]{GroheKreutzerSiebertz_2017_NowhereDense}.
Here, we use the definitions from \cite[Section~9.1]{GroheSchweikardt_2026_Locality},
which slightly differ from the original definitions
from \cite{GroheKreutzerSiebertz_2017_NowhereDense}.

Let \(G\) be a graph, and let \(\lambda, r \in \N\).
The \emph{\((\lambda, r)\)-splitter game} on \(G\) is played by two players,
\emph{Splitter} and \emph{Connector}.
Let \(G^{(0)} \deff G\).
For \(i \in \Npos\), in round \(i\) of the game,
Connector selects a vertex \(v^{(i)} \in V(G^{(i-1)})\).
Next, Splitter selects a set \(W^{(i)} \subseteq \neighbr{G^{(i-1)}}{v^{(i)}}\)
such that \(\sum_{j=1}^i \abs{W^{(i)}} \leq \lambda\).
If \(\neighbr{G^{(i-1)}}{v^{(i)}} \setminus W^{(i)} = \emptyset\),
then Splitter wins.
Otherwise, the play continues with
\(G^{(i)} \deff \Neighbr{G^{(i-1)}}{v^{(i)}} \setminus W^{(i)}\).
If the play never ends, then Connector wins.

A class \(\C\) of graphs is \emph{nowhere dense} if for every \(r \in \N\)
there is a \(\lambda(r) \in \N\) such that, for all graphs \(G \in \C\),
Splitter has a winning strategy for the \((\lambda(r), r)\)-splitter game on \(G\).
If \(\lambda(r)\) is computable from \(r\), then \(\C\) is \emph{effectively nowhere dense}.

In the \emph{generalised \((\lambda, r)\)-splitter game} on a graph \(G\), in round \(i\),
Connector chooses a subgraph \(H^{(i)}\) of \(G^{(i-1)}\)
and selects a vertex \(v^{(i)} \in V(H^{(i)})\).
Splitter then selects a set \(W^{(i)} \subseteq \neighbr{H^{(i)}}{v^{(i)}}\).
Analogously to the \((\lambda, r)\)-splitter game,
Splitter wins if \(\neighbr{H^{(i)}}{v^{(i)}} \setminus W^{(i)} = \emptyset\),
and otherwise, the play continues with
\(G^{(i)} \deff \Neighbr{H^{(i)}}{v^{(i)}} \setminus W^{(i)}\).
The following lemma states that Splitter can also win the generalised splitter game
on nowhere dense classes, and the strategy can be computed efficiently.

\begin{lemma}[{\cite[Lemma~9.2 and Remark~9.3]{GroheSchweikardt_2026_Locality}}]
  \label{lem:winning-strategy}
  For every nowhere dense graph class \(\C\),
  there are functions \(\lambda, t \colon \N \to \N\)
  such that, for every graph \(G \in \C\),
  Splitter has a winning strategy in the generalised \((\lambda(r), r)\)-splitter game
  to win in at most \(t(r)\) rounds.
  Moreover, the winning strategy can be chosen such that, in round \(i\),
  using data structures computed in the previous rounds,
  Splitter's answer \(W^{(i)}\) can be computed and the data structures can be updated
  in time \(\bigO\bigl(\norm{H^{(i)}} + ri\bigr)\).
  If \(\C\) is effectively nowhere dense, then \(\lambda\) and \(t\)
  can be chosen to be computable functions.
\end{lemma}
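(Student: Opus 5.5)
The plan is to reduce to the original splitter-game theorem of Grohe, Kreutzer, and Siebertz~\cite{GroheKreutzerSiebertz_2017_NowhereDense} and check that its proof tolerates the extra move of Connector. That theorem says: if \(\C\) is nowhere dense, then for every \(r\) there is an \(\ell(r)\) such that Splitter wins the \((\ell(r), r)\)-splitter game on every \(G \in \C\) within \(\ell(r)\) rounds. Moreover, Splitter deletes a single vertex per round, and \(\ell\) is computable when \(\C\) is effectively nowhere dense. I would set \(t(r) \deff \ell(r)\) (possibly adjusted by a constant for the slightly different conventions here) and \(\lambda(r) \deff t(r)\). One removed vertex per round then keeps \(\sum_j \abs{W^{(j)}} \le \lambda(r)\).

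\textbf{Step 1 (the generalised game).} Connector now first passes to an arbitrary subgraph \(H^{(i)}\) of the current arena. I would revisit the GKS correctness argument. It shows that a play lasting more than \(\ell(r)\) rounds produces, inside the initial graph \(G\), a depth-\(\bigO(r)\) shallow minor (an \(r\)-subdivided large clique) that nowhere denseness forbids. The key observation is that every arena \(G^{(i)}\) is still a subgraph of \(G\). Every path, ball, and branch set used in that construction lives in some \(H^{(i)} \subseteq G\), so the shallow minor obtained is a shallow minor of \(G\) itself. Hence the same bound on the number of rounds holds even though Connector may shrink the arena. Equivalently: a Connector strategy in the generalised game on \(G\) induces a play in the ordinary game in which every arena is a subgraph of \(G\), and the GKS counting argument goes through unchanged.

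\textbf{Step 2 (running time).} In round \(i\), Splitter runs a BFS of depth \(r\) from \(v^{(i)}\) in \(H^{(i)}\), in time \(\bigO(\norm{H^{(i)}})\). From this BFS layering, together with the stored information about earlier moves \(v^{(1)}, \dots, v^{(i-1)}\) and answers \(w^{(1)}, \dots, w^{(i-1)}\), it reads off the GKS answer vertex. The stored information is the BFS data and distances from previous rounds, which I would keep per round as the ``data structures''. Updating these records costs \(\bigO(ri)\): one record per previous round, each of length at most \(r\).

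\textbf{Main obstacle.} The main obstacle is that GKS's Splitter choice must be determined using only local BFS information relative to the current arena \(H^{(i)}\), and not global information about \(G\) that may become invalid once Connector deletes parts of the graph. I would try first to phrase Splitter's rule purely in terms of the BFS tree from \(v^{(i)}\) in \(H^{(i)}\) and the positions of earlier moves in it. I would then reprove the minor-extraction step directly for this rule, tracking that each extracted path is contained in the respective \(H^{(j)}\) and therefore in \(G\).
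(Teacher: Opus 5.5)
\textbf{Verdict: there is a gap.} Your route is the natural one: take the Grohe--Kreutzer--Siebertz splitter theorem and transfer it to the generalised game. The paper itself gives no proof here; it imports the lemma from Grohe and Schweikardt.

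The gap sits exactly where you place your ``main obstacle'': you never specify Splitter's rule. Step~1 asserts that the GKS analysis goes through unchanged. Step~2 asserts that the GKS answer can be read off from a depth-$r$ BFS in $H^{(i)}$ plus $O(r)$ stored data per earlier round. Neither claim is checked against an actual rule.

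The $O(\norm{H^{(i)}} + ri)$ accounting is taken from the statement, not derived. Stored BFS data of round $j$ has size about $\norm{H^{(j)}}$, not $r$. So only something like a path of length at most $r$ back to $v^{(j)}$ would fit your budget.

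This efficiency claim is the real content of the lemma. In \cref{lem:game-tree}, $H^{(i)}$ is a single cluster, and the almost-linear bound relies on Splitter paying only about $\norm{H^{(i)}}$ there. What is needed is:
\begin{itemize}
\item a concrete rule whose answer depends only on $H^{(i)}$, $v^{(i)}$ and such short records; and
\item a proof that this same rule, evaluated on Connector's arbitrary subgraphs, still wins within $t(r)$ rounds using at most $\lambda(r)$ deletions.
\end{itemize}
Saying you would rephrase the rule via BFS and reprove the minor extraction is a plan for this, not an argument.

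For the existence part alone you do not need to reopen the GKS proof. Your ``Equivalently'' sentence does not give a valid correspondence: ordinary arenas are subgraphs of $G$ anyway, and what matters is that the generalised arena sits inside an ordinary one. Replace it with the following simulation.
\begin{itemize}
\item Fix a winning strategy for the ordinary game. Run a shadow ordinary play on $G$, keeping the invariant that the generalised arena $G^{(i)}$ is a subgraph of the shadow arena $\hat G^{(i)}$.
\item When Connector picks $H^{(i)} \subseteq G^{(i-1)}$ and $v^{(i)}$, let the shadow Connector play $v^{(i)}$. This is legal, since $v^{(i)} \in V(\hat G^{(i-1)})$.
\item Obtain the shadow answer $\hat W^{(i)}$ and answer $W^{(i)} \deff \hat W^{(i)} \cap N_r^{H^{(i)}}(v^{(i)})$.
\end{itemize}
Since $H^{(i)}$ is a subgraph of $\hat G^{(i-1)}$, distances in $H^{(i)}$ are no smaller, so $N_r^{H^{(i)}}(v^{(i)}) \subseteq N_r^{\hat G^{(i-1)}}(v^{(i)})$. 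Hence the invariant is preserved and the generalised play ends no later than the shadow play. Also $\sum_j \abs{W^{(j)}} \le \sum_j \abs{\hat W^{(j)}}$. So $t = \lambda = \ell$ works, and these are computable whenever $\ell$ is.

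This black-box argument does not give the time bound. Computing $\hat W^{(i)}$ involves the shadow ball $N_r^{\hat G^{(i-1)}}(v^{(i)})$, which can be far larger than $H^{(i)}$. That is why the analysis of a strategy local to $H^{(i)}$, which you postponed, cannot be skipped.
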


A \emph{partial play} in the generalised \((\lambda, r)\)-splitter game on a graph \(G\)
is a sequence of the form
\(\bigl(H^{(1)}, v^{(1)}, W^{(1)}, H^{(2)}, v^{(2)}, W^{(2)}, \dots,
H^{(k)}, v^{(k)}, W^{(k)}\bigr)\)
for some \(k \in \N\), where \(H^{(i)}\) and \(v^{(i)}\) are the choices made by Connector
and \(W^{(i)}\) is the choice made by Splitter in round \(i\) of a play on \(G\).

For functions \(\lambda, t \colon \N \to \N\),
if Splitter wins the generalised \((\lambda(r), r)\)-splitter game on a graph \(G\)
in at most \(t(r)\) rounds for every \(r \in \N\),
then the same also holds for every subgraph of \(G\).
Hence, in the remainder of this paper, without loss of generality,
we assume all nowhere dense graph classes to be closed under taking subgraphs.

For \(r \in \N\), a \emph{distance-\(r\) neighbourhood cover} of a graph \(G\)
is a mapping \(\K \colon V(G) \to 2^{V(G)}\)
of vertices of \(G\) to subsets of vertices of \(G\), called \emph{clusters},
such that, for each vertex \(v \in V(G)\),
it holds that \(\neighbr{G}{v} \subseteq \K(v)\).
We write \(C \in \K\) to express that \(C\) is a cluster of \(\K\),
that is, \(C = \K(v)\) for some \(v \in V(G)\).
The \emph{radius} of \(\K\) is the maximum radius of all subgraphs \(G[C]\)
for a cluster \(C \in \K\).
The \emph{overlap} of \(\K\) is the maximum number of clusters in \(\K\) that contain the same vertex,
that is, \(\overlap(\K) \deff \max_{v \in V(G)} \bigabs{\setc{C \in \K}{v \in C}}\).
Note that \(\sum_{C \in \K} \abs{C} \leq \abs{V(G)} \cdot \overlap(\K)\).
As shown in~\cite{GroheKreutzerSiebertz_2017_NowhereDense}, for nowhere dense classes,
we can efficiently compute neighbourhood covers with small radius and small overlap.

\begin{theorem}[{\cite[Theorem~6.2]{GroheKreutzerSiebertz_2017_NowhereDense}}]
  \label{thm:neighbourhood-cover}
  Let \(\C\) be a nowhere dense graph class.
  There is a function \(\fNC\) and an algorithm that does the following.
  Given an \(\epsilon \in \Qpos\), an \(r \in \N\),
  and a graph \(G \in \C\),
  for \(n \deff \abs{V(G)}\),
  the algorithm computes in time \(\fNC(r, \epsilon) \cdot n^{1+\epsilon}\)
  a distance-\(r\) neighbourhood cover \(\K\) of \(G\)
  of radius at most \(2r\) and overlap at most \(\fNC(r, \epsilon) \cdot n^\epsilon\)
  and a centre function \(\centre \colon \K \to V(G)\)
  such that \(C = \neighb{2r}{G[C]}{\centre(C)}\) for every \(C \in \K\).
  Furthermore, if \(\C\) is effectively nowhere dense, then \(\fNC\) is computable.
\end{theorem}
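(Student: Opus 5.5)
The plan is to follow the approach of~\cite{GroheKreutzerSiebertz_2017_NowhereDense} and build the cover from a linear order of small \emph{weak colouring number}, not from the splitter game. For a linear order \(L\) on \(V(G)\) and \(s \in \N\), let \(\WReach_s[G,L,x]\) be the set of vertices \(u \leq_L x\) such that some path from \(x\) to \(u\) of length at most \(s\) has all its vertices \(\geq_L u\). Let \(\mathrm{wcol}_s(G,L) \deff \max_x \abs{\WReach_s[G,L,x]}\). I would use two facts from the theory of nowhere dense classes, which I do not re-prove here. First, for every nowhere dense \(\C\), every \(s\) and every \(\epsilon>0\), each \(G \in \C\) has an order \(L\) with \(\mathrm{wcol}_s(G,L) \leq f(s,\epsilon)\cdot n^{\epsilon}\). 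Second, such an order, together with all sets \(\WReach_s[G,L,x]\), can be computed in time \(f(s,\epsilon)\cdot n^{1+\epsilon}\), with \(f\) computable if \(\C\) is effectively nowhere dense. I apply both with \(s \deff 2r\).

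\emph{Construction.} For each \(u\), let \(X_u \deff \setc{x \in V(G)}{u \in \WReach_{2r}[G,L,x]}\). Inverting the weak-reachability lists gives all \(X_u\) in time linear in \(\sum_x \abs{\WReach_{2r}[G,L,x]} \leq n \cdot \mathrm{wcol}_{2r}(G,L)\). For each vertex \(w\), let \(m(w)\) be the \(L\)-minimum of \(\neighbr{G}{w}\), and set \(\K(w) \deff X_{m(w)}\) and \(\centre(X_u) \deff u\).

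\emph{Correctness.} I verify the three properties in turn.
\begin{itemize}
\item Cover property. Let \(x \in \neighbr{G}{w}\) and \(u \deff m(w)\). Concatenate a shortest \(x\)–\(w\) path with a shortest \(w\)–\(u\) path. All vertices of this walk lie in \(\neighbr{G}{w}\), so all are \(\geq_L u\), and its length is at most \(2r\). Shortening it to a path gives \(u \in \WReach_{2r}[G,L,x]\), that is, \(x \in X_u = \K(w)\).
\item Overlap. A vertex \(x\) lies in \(X_u\) only if \(u \in \WReach_{2r}[G,L,x]\). So \(x\) is in at most \(f(2r,\epsilon)\, n^\epsilon\) clusters.
\item Radius and centre. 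Let \(x \in X_u\), witnessed by a path \(P\) from \(x\) to \(u\) of length at most \(2r\) with all vertices \(\geq_L u\). Every vertex \(y\) on \(P\) also lies in \(X_u\), witnessed by the subpath of \(P\) from \(y\) to \(u\). Hence \(P\) lies inside \(G[X_u]\), so \(X_u = \neighb{2r}{G[X_u]}{u}\) and \(G[X_u]\) has radius at most \(2r\).
\end{itemize}

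\emph{Running time.} Computing \(m(w)\) for all \(w\) by a naive BFS in each \(r\)-ball could exceed the time budget. Instead, I would observe that \(m(w)\) is the \(L\)-least vertex \(u\) with \(\dist(w,u) \leq r\). Every such \(u\) is a vertex of \(\WReach_r[G,L,w]\): take the \(L\)-least vertex on a shortest \(w\)–\(u\) path; it is in the ball, so it must be \(u\) itself. Therefore \(m(w) = \min_L \setc{u \in \WReach_r[G,L,w]}{\dist(w,u)\leq r}\). The distances can be recorded while computing the weak-reachability sets, by storing with each reached vertex the length of the shortest witnessing path. This stays within \(f(r,\epsilon)\cdot n^{1+\epsilon}\).

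The main obstacle is the assumed ingredient: bounding \(\mathrm{wcol}_{2r}\) by \(n^{\epsilon}\) on nowhere dense classes and computing a witnessing order in almost-linear time. In \cite{GroheKreutzerSiebertz_2017_NowhereDense} this rests on the characterisation of nowhere dense classes by subpolynomial generalised colouring numbers and a greedy ordering algorithm. If a self-contained argument were required, I would first try to derive the order from the Splitter strategy of \cref{lem:winning-strategy}, placing the vertices Splitter deletes earlier in \(L\).
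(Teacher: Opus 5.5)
Your proposal is correct and follows the same route as the paper. The clusters are exactly the sets \(X_{2r}[G,<,t]\) obtained from an order of small weak \(2r\)-colouring number, as in \cite{GroheKreutzerSiebertz_2017_NowhereDense}. Your subpath argument is precisely the appendix lemma the paper uses to strengthen \(C \subseteq \neighb{2r}{G}{\centre(C)}\) to \(C = \neighb{2r}{G[C]}{\centre(C)}\). Stating the weak-colouring bound as \(f(2r,\epsilon)\cdot n^{\epsilon}\) for all \(n\) also covers the small-graph case that the paper handles separately.
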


As remarked in~\cite[Section~8.1]{GroheSchweikardt_2018_FOC1},
for a given neighbourhood cover \(\K\) of \(G\),
we can compute in linear time a data structure that associates with each \(C \in \K\)
the list of all \(v \in V(G)\) with \(\K(v) = C\).
We note that \cite[Theorem~6.2]{GroheKreutzerSiebertz_2017_NowhereDense}
states \cref{thm:neighbourhood-cover} for graphs of size \(\abs{V(G)} \geq \fNC(r, \epsilon)\).
However, for graphs of size \(\abs{V(G)} < \fNC(r, \epsilon)\),
a neighbourhood cover contains less than \(\fNC(r, \epsilon)\) clusters,
so every vertex has overlap less than \(\fNC(r, \epsilon)\).
Hence, the result also holds for graphs of size less than \(\fNC(r, \epsilon)\).

The following result generalises the \(\FO\) model-checking result
on effectively nowhere dense classes
from~\cite[Theorems~1.1 and 8.1]{GroheKreutzerSiebertz_2017_NowhereDense}.

\begin{theorem}[Testing and Enumeration,
  {\cite[Lemma~2.2 and Corollaries~2.4 and 2.5]{SchweikardtSegoufinVigny_2022_Enumeration}}]
  \label{thm:fo-testing-enumeration}
  Let \(\C\) be an effectively nowhere dense graph class.
  There is a computable function \(\fFOTE\) and an algorithm that does the following.
  Given an \(\FO\) formula \(\phi(\tx)\),
  a \(\sigma\)-structure \(\A\) for some \(\sigma \supseteq \sigma(\phi)\)
  with \(G_\A \in \C\),
  and given an \(\epsilon \in \Qpos\),
  after preprocessing in time \(\fFOTE(\phi, \sigma, \epsilon) \cdot \abs{A}^{1+\epsilon}\),
  \begin{alphaenumerate}
    \item\label{item:fo-testing}
      the algorithm can answer the following queries in time \(\fFOTE(\phi, \sigma, \epsilon)\):
      given a tuple \(\tv \in A^{\abs{\tx}}\), decide whether \(\A \models \phi(\tv)\), and
    \item\label{item:fo-enumeration}
      the algorithm can enumerate all tuples \(\tv \in A^{\abs{\tx}}\)
      such that \(\A \models \phi(\tv)\)
      with \(\fFOTE(\phi, \sigma, \epsilon)\) delay in lexicographic order, without duplicates.
  \end{alphaenumerate}
\end{theorem}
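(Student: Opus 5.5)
This statement is taken from prior work, so the plan is to show that the cited results of Schweikardt, Segoufin, and Vigny~\cite{SchweikardtSegoufinVigny_2022_Enumeration} match the formulation used here, rather than to reprove anything. First, I would check that the setting agrees. Their results are stated for relational \(\sigma\)-structures whose Gaifman graph lies in a fixed nowhere dense class. Their preprocessing time has the form \(f(\phi,\sigma,\epsilon)\cdot\abs{A}^{1+\epsilon}\), and their answer time and delay depend only on \(\phi\), \(\sigma\) and \(\epsilon\). For item~\ref{item:fo-testing}, I would then invoke their constant-time testing result directly. If \(\C\) is effectively nowhere dense, the splitter-game bounds \(\lambda,t\) of \cref{lem:winning-strategy} and the function \(\fNC\) of \cref{thm:neighbourhood-cover} are computable. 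The preprocessing of \cite{SchweikardtSegoufinVigny_2022_Enumeration} is built from exactly these ingredients, so the resulting function \(\fFOTE\) is computable as well.

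For item~\ref{item:fo-enumeration}, I would obtain the lexicographic order and the absence of duplicates from their \emph{next-solution} primitive (their Lemma~2.2). After the preprocessing, given a prefix \(\tv_{[i-1]}\) and an element \(a\), it returns in constant time the \(<\)-least \(b \geq a\) such that \(\A \models \exists x_{i+1}\cdots\exists x_k\,\phi(\tv_{[i-1]},b,x_{i+1},\dots,x_k)\), or reports that no such \(b\) exists. Given this, I would enumerate by a depth-first traversal over the coordinates. At each level, the next candidate is obtained from the primitive applied to the successor of the current value. Every branch entered is guaranteed to extend to a solution, so the delay is bounded by \(\bigO(\abs{\tx})\) calls to the primitive. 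I would preprocess the finitely many prefix formulae \(\exists x_{i+1}\cdots\exists x_k\,\phi\) for \(i\in[\abs{\tx}]\) separately, which only enlarges \(\fFOTE\) by a factor depending on \(\phi\).

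The step I expect to need the most care is confirming that the cited next-solution lemma really refers to the given linear order \(<\) on \(A\). The order is not part of \(\sigma\), so it must not be treated as a relation of the structure, since that would destroy sparsity. If the cited statement were phrased differently, my fallback would be to rebuild the primitive. Since testing is available, I would use the locality of \cref{thm:rank-preserving} to split candidates \(b\) into those far from \(\tv_{[i-1]}\) and those near it. Far candidates are handled by precomputed sorted lists, with \(<\) used only algorithmically. Near candidates are handled inside the neighbourhood-cover clusters of \cref{thm:neighbourhood-cover}, recursing along the splitter game.
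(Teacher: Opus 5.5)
Your proposal is correct and takes the same approach as the paper. The paper gives no proof of this theorem and simply imports it from Schweikardt, Segoufin, and Vigny; your depth-first derivation of lexicographic, duplicate-free enumeration from a next-solution primitive on the prefix formulae \(\exists x_{i+1}\cdots\exists x_k\,\phi\) is sound and costs only \(\bigO(\abs{\tx})\) primitive calls per output.
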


In addition, \cite[Corollary~5.6]{GroheSchweikardt_2018_FOC1} implies the following restriction
of \cref{thm:fo-counting} to the case where \(\tx = ()\).
\begin{theorem}[Counting, {\cite[Corollary~5.6]{GroheSchweikardt_2018_FOC1}}]
  \label{thm:fo-counting-static}
  Let \(\C\) be an effectively nowhere dense graph class.
  There is a computable function \(\fCount\) and an algorithm that does the following.
  Given an \(\FO\) formula \(\phi(\tx)\),
  a \(\sigma\)-structure \(\A\) for some \(\sigma \supseteq \sigma(\phi)\)
  with \(G_\A \in \C\),
  and given an \(\epsilon \in \Qpos\),
  the algorithm computes \(\abs{\phi(\A)}\)
  in time \(\fCount(\phi, \sigma, \epsilon) \cdot \abs{A}^{1+\epsilon}\).
\end{theorem}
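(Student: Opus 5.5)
The plan is to obtain the statement directly from the \(\FOC_1\) evaluation result of Grohe and Schweikardt rather than reprove it. The point is that \(\abs{\phi(\A)}\) is exactly the value of the \#-term \(\FOCCount{\tx}{\phi}\) in \(\A\). This term has no free variables, so it is a ground counting term and lies in \(\FOC_1\). Corollary~5.6 of~\cite{GroheSchweikardt_2018_FOC1} provides, for effectively nowhere dense classes, an algorithm that computes the value of such ground counting terms in time \(f(\cdot)\cdot\abs{A}^{1+\epsilon}\). The work therefore consists of matching the hypotheses and the parameter dependence.

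\textbf{Step 1 (signature).} First I pass from \(\A\) to its \(\sigma(\phi)\)-reduct \(\A'\). Its Gaifman graph \(G_{\A'}\) is a subgraph of \(G_\A\). Since we assume without loss of generality that \(\C\) is closed under taking subgraphs, \(G_{\A'} \in \C\). Computing \(\A'\) takes time \(\bigO(\norm{\A})\). Because \(G_\A\) belongs to a nowhere dense class, \(\norm{\A}\) is bounded by \(g(\sigma,\epsilon)\cdot\abs{A}^{1+\epsilon}\): nowhere dense classes have at most \(\abs{A}^{1+\epsilon}\) edges up to a constant, and each tuple lies in a clique of \(G_\A\) of size at most \(\max\ar(R)\), so the number of tuples is bounded in terms of the number of small cliques, which is again almost linear. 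This is where the dependence on \(\sigma\) enters \(\fCount\). Clearly \(\phi(\A)=\phi(\A')\).

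\textbf{Step 2 (degenerate case).} If \(\tx=()\), then \(\abs{\phi(\A)}\in\set{0,1}\), and it is obtained by \(\FO\) model checking. This is covered by \cref{thm:fo-testing-enumeration} applied to the empty tuple, or equally by the term \(\FOCCount{()}{\phi}\).

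\textbf{Step 3 (general case).} Otherwise, I invoke the cited evaluation algorithm on the ground term \(\FOCCount{\tx}{\phi}\), the structure \(\A'\) and \(\epsilon\). This yields the number within the claimed bound. Its running-time function is computable in the term, hence in \(\phi\), and in \(\epsilon\), because \(\C\) is effectively nowhere dense; setting \(\fCount\) to the sum of this function and the bound from Step~1 completes the argument.

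The only step needing care is checking that the cited corollary is indeed stated for relational structures whose Gaifman graph lies in \(\C\), with a computable parameter function, rather than only for graphs. If it were only for graphs, I would use the standard encoding of a \(\sigma\)-structure as a coloured incidence graph. This encoding preserves nowhere denseness, since its shallow minors relate to those of \(G_\A\) with bounded depth blow-up. I would then translate \(\phi\) accordingly, with the counted variables relativised to original elements, which keeps the count unchanged.
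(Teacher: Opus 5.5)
Your proposal is correct and takes the same route as the paper, which gives no independent proof and simply derives the statement from \cite[Corollary~5.6]{GroheSchweikardt_2018_FOC1}. That cited result already concerns counting the satisfying tuples of an \(\FO\) formula on nowhere dense classes of structures, so your detour through ground \(\FOC_1\) terms, the reduct step, and the incidence-graph fallback are harmless but not needed.
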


Lastly, we will use the following bounds on the size of a structure
and the number of cliques in nowhere dense classes.

\begin{lemma}[{\cite[Lemma~10]{vanBergeremLangeSchweikardt_2026_cgFOC}}]
  \label{lem:nowhere-dense-cliques}
  Let \(\C\) be a nowhere dense graph class.
  There is a function \(\fSize \colon \N \times \Qpos \to \N\) such that
  \begin{enumerate}
    \item
      \label{item:nowhere-dense-cliques}
      for every \(k \in \Npos\), \(\epsilon \in \Qpos\), and \(G \in \C\),
      it holds that \(G\) contains at most \(\fSize(k,\epsilon) \cdot \abs{V(G)}^{1+\epsilon}\)
      cliques of size at most \(k\), and
    \item
      \label{item:nowhere-dense-size}
      for every signature \(\sigma\),
      every \(\epsilon \in \Qpos\),
      and every \(\sigma\)-structure \(\A\) with \(G_\A \in \C\),
      it holds that \(\norm{\A} \leq \abs{\sigma} \cdot \fSize(k, \epsilon) \cdot \abs{A}^{1+\epsilon}\),
      where \(k\) is the maximum arity of a relation symbol in~\(\sigma\).
  \end{enumerate}
  If \(\C\) is effectively nowhere dense, then \(\fSize\) is computable.
\end{lemma}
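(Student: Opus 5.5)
The plan is to derive both items from one known density fact about nowhere dense classes. Then count cliques through a degeneracy ordering, and count tuples through the cliques they span in the Gaifman graph.

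\textbf{Step 1 (density and degeneracy).} I would use the standard characterisation of Nešetřil and Ossona de Mendez. For a nowhere dense class \(\C\), which we may assume closed under subgraphs, there is a function \(c\) such that for every \(\epsilon \in \Qpos\) every \(G \in \C\) has at most \(c(\epsilon)\cdot\abs{V(G)}^{1+\epsilon}\) edges. If \(\C\) is effectively nowhere dense, \(c\) is computable: the proof only goes through the effectively bounded size of excluded shallow clique minors, which one can read off from the computable function \(\lambda\) of the splitter game. I expect this to be the main obstacle, namely justifying the bound and its effectiveness within the game-based definition used here. I would not re-derive it but cite it, checking that the constants stay computable.

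Since \(\C\) is closed under subgraphs, every subgraph \(H\) of \(G\) satisfies the same edge bound. Hence \(H\) has a vertex of degree at most \(2c(\epsilon)\abs{V(H)}^{\epsilon} \le 2c(\epsilon) n^{\epsilon}\), where \(n \deff \abs{V(G)}\). So \(G\) is \(d\)-degenerate for \(d \deff 2c(\epsilon)n^\epsilon\). Fix an ordering of \(V(G)\) in which every vertex has at most \(d\) neighbours later in the ordering.

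\textbf{Step 2 (item 1).} Charge each clique of size \(j \le k\) to its earliest vertex \(v\). The remaining \(j-1\) vertices are later neighbours of \(v\), so each \(v\) is charged at most \(\sum_{j<k}\binom{d}{j} \le k\, d^{k-1}\) cliques, counting the empty set of later neighbours, i.e.\ the singleton clique \(\set{v}\). The total number of cliques of size at most \(k\) is therefore at most
\[k\,(2c(\epsilon))^{k-1}\, n^{1+\epsilon(k-1)}.\]
Applying this with \(\epsilon' \deff \epsilon/\max(k-1,1)\) instead of \(\epsilon\), and letting \(\fSize(k,\epsilon) \deff k\,(2c(\epsilon'))^{k-1}\), gives item 1. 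This choice is computable whenever \(c\) is.

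\textbf{Step 3 (item 2).} Let \(\tv \in R(\A)\) with \(\ar(R) \le k\). Then \(\tilde v\) is a clique of size at most \(k\) in \(G_\A\). Conversely, a fixed clique \(S\) is the underlying set of at most \(\abs{S}^{\ar(R)} \le k^k\) tuples. Hence \(\abs{R(\A)} \le k^k\) times the number of cliques of size at most \(k\), which by item 1 is at most \(k^k\, \fSize(k,\epsilon)\abs{A}^{1+\epsilon}\).

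Summing over \(R \in \sigma\), including the factor \(\max(\ar(R),1) \le \max(k,1)\), and adding \(\abs{A}\) gives
\[\norm{\A} \le \abs{A} + \abs{\sigma}\,\max(k,1)\,k^k\,\fSize(k,\epsilon)\abs{A}^{1+\epsilon}.\]
Enlarging \(\fSize\) by the factor \(\max(k,1)k^k+1\) absorbs all of this into \(\abs{\sigma}\cdot\fSize(k,\epsilon)\cdot\abs{A}^{1+\epsilon}\), provided \(\sigma \neq \emptyset\). The degenerate case \(\sigma=\emptyset\) needs the trivial convention that the \(\abs{A}\) term is allowed. The enlarged function remains computable, so both items hold with a single computable \(\fSize\) in the effective case.
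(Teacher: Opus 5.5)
Your argument is correct. The paper does not prove this lemma itself; it only imports it from \cite{vanBergeremLangeSchweikardt_2026_cgFOC}. Your route is the standard derivation behind such a bound: the edge density $c(\epsilon)\,n^{1+\epsilon}$ of the subgraph-closed class gives $2c(\epsilon)n^{\epsilon}$-degeneracy, you charge each clique to its earliest vertex, and you bound $\abs{R(\A)}$ via the clique $\tilde v$ spanned by each tuple.

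Only trivial bookkeeping remains:
\begin{itemize}
  \item Assume $c(\epsilon)\ge 1$. Otherwise $d<1$ is possible, and then $k\,d^{k-1}$ undercounts the singleton cliques (e.g.\ for edgeless classes). The same assumption makes $\fSize\ge 1$, which your absorption of the $\abs{A}$ term needs.
  \item Define $\fSize(0,\epsilon)$ separately, say as $2$, for signatures with only nullary symbols.
  \item Your remark that the statement literally fails for $\sigma=\emptyset$ is right.
\end{itemize}
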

 \section{First-Order Counting on Nowhere Dense Classes}
\label{sec:counting}

In this section, we prove the following result, which implies \cref{thm:fo-counting}.

\begin{restatable}{theorem}{foplusCounting}
  \label{thm:foplus-counting}
  Let \(\C\) be an effectively nowhere dense graph class.
  There is a computable function \(f\) and an algorithm that does the following.
  Given \(k, \ell, p, q \in \N\) with \(k+\ell+p \leq q\),
  a signature \(\sigma\),
  an \(\FOplusSigmaPQ\) formula \(\phi(x_1, \dots, x_k, y_1, \dots, y_\ell)\),
  a \(\sigma\)-structure \(\A\) with \(G_\A \in \C\),
  and given an \(\epsilon \in \Qpos\),
  after preprocessing in time \(f(q, \sigma, \epsilon) \cdot \abs{A}^{1+\epsilon}\),
  the algorithm can answer the following queries in time \(f(q, \sigma, \epsilon)\):
  given a tuple \(\tv \in A^k\), output \(\bigabs{\phi(\A, \tv)}\).
\end{restatable}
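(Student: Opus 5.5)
We prove \cref{thm:foplus-counting} by induction on \(\ell\), the number of counted variables, using the rank-preserving normal form to split the counted tuple into local and far-apart pieces.

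\textbf{Base case and reduction to local formulae.} For \(\ell = 0\), the answer is \(1\) or \(0\) according to whether \(\A \models \phi(\tv)\), so \cref{thm:fo-testing-enumeration}\ref{item:fo-testing} suffices, after translating the distance atoms into \(\FO\). For \(\ell \geq 1\), we apply \cref{thm:rank-preserving} to \(\phi(\tx\ty)\) with \(r = \delta(p,q)\). The sentences \(\xi^i_G\) are evaluated once during preprocessing. For a fixed \(\tv\), the graph \(H \deff G^\A_{\tv,r}\) restricted to \([k]\) is determined and can be computed in constant time: we precompute for each vertex a structure (for instance via the neighbourhood cover of \cref{thm:neighbourhood-cover}) that answers distance-\(\leq r\) queries, or we simply use the \(\FO\) testing of \cref{thm:fo-testing-enumeration}. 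The count then equals a sum over all graphs \(G \in \G_{k+\ell}\) extending \(H\) and all \(i\) with \(\A \models \xi^i_G\). Because of the uniqueness part of \cref{thm:rank-preserving}, each summand is the number of \(\tw\) with \(G^\A_{\tv\tw,r} = G\) and \(\A \models \psi^i_{G,I}\) for every component \(I\).

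\textbf{Handling far-apart components.} Fix \(G\) and \(i\), and group the components of \(G\) into those containing some \(x\)-index and those consisting only of \(y\)-indices. The condition ``\(G^\A_{\tv\tw,r} = G\)'' requires certain pairs to be at distance \(> r\). We eliminate these negative distance constraints by inclusion--exclusion over sets of pairs forced to be close. Each term then becomes a product of counts for the individual components, namely local counts in which all involved variables are pairwise \(r\)-connected. Components containing no \(x\)-variable contribute global numbers \(\abs{\psi(\A)}\), which we precompute with \cref{thm:fo-counting-static}. Components containing \(x\)-variables contribute counts of tuples \(\tw_I\) lying in the \(r\cdot(k+\ell)\)-neighbourhood of \(\tv_I\) and satisfying a \(\locFOplus\) formula. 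So the problem reduces to the \textbf{local counting problem}: for a \(\locFOplus\) formula \(\psi(\tx',\ty')\) that implies all \(y'\)'s are within distance \(R\) of \(x'_1\), answer \(\abs{\psi(\A,\tv')}\) in constant time.

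\textbf{Local counting, the main obstacle.} We compute a distance-\(R'\) neighbourhood cover \(\K\) with \(R'\) large enough, via \cref{thm:neighbourhood-cover}. By \cref{lem:locfo-locality}, the count for \(\tv'\) can be taken inside \(\A[\K(v'_1)]\), since the formula is local. Counting all tuples with \(x'_1 \mapsto v\) for every \(v\) naively costs cluster size per vertex, which is too expensive. Instead we recurse via the splitter game, following \cite{GroheKreutzerSiebertz_2017_NowhereDense,GroheSchweikardt_2018_FOC1}. For each cluster \(C\) (radius \(2R'\)), Splitter answers the centre with a set \(W\) of size at most \(\lambda\). We encode \(W\) by fresh colours and by unary predicates recording adjacency to each \(w \in W\), as in the removal construction. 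We then delete \(W\) and rewrite \(\psi\) into a formula over the smaller structure, splitting each counted variable into the cases ``equals \(w_j\)'' and ``lies outside \(W\)''. Distances change when \(W\) is deleted, so distance atoms must be rewritten using distances through \(W\); these are also encoded by unary predicates giving the distance to each \(w_j\) up to the relevant bound. The recursion has depth at most \(t(R')\). At each level the sum of cluster sizes grows by at most a factor \(\fNC \cdot n^{\epsilon'}\), so choosing \(\epsilon' = \epsilon/t\) gives total preprocessing of \(n^{1+\epsilon}\). At the leaves the graphs are trivial, and we tabulate the answers for every element.

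A query then follows the single path for \(\tv'\): cluster \(\K(v'_1)\), then Splitter's set, then membership of the entries of \(\tv'\) in \(W\), and so on for at most \(t\) levels. Along the way it sums a bounded number of stored values. Getting this rewriting right (the \(\FOplus\) rank bookkeeping so that \(q\) stays bounded, and ensuring constant-time lookup of each entry's cluster and residual position) is the delicate part. The inductive hypothesis on \(\ell\) may be needed for the cases where some \(y\)-variables are mapped into \(W\), which fixes them and lowers \(\ell\).
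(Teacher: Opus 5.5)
\textbf{There is a genuine gap in your recursion for the local counting problem.} Your outer reduction essentially matches the paper: the case $\ell=0$, summing over distance graphs $G$, \cref{thm:rank-preserving}, inclusion--exclusion into connected pieces as in \cref{lem:reduction-connected}, and global counts for $x$-free pieces via \cref{thm:fo-counting-static}. The problem comes after you delete Splitter's set $W$ from a cluster and rewrite the formula via the Removal Lemma (\cref{cor:removal}). At that point the invariant you recurse on no longer holds.

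The rewritten formula is only an $\FOplus$ formula over the extended signature, not a $\locFOplus$ formula in the new structure. A bounded quantifier $\exists z\,(\dist(x,z)\le d \land \lambda)$ may now be witnessed by an element $z$ whose short path to $x$ went through $W$. Once $W$ is deleted, such a $z$ can be far from $x$ or disconnected from it. The $D_j$-predicates only record these distances; they do not make the formula local again. Likewise, the entries of $\tv'$ and of the counted tuple need no longer be $r$-connected after the deletion. So at the next level you cannot confine the count to the single cluster of $v'_1$ in the new cover, and the recursion does not close.

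The paper's fix is to re-apply \cref{thm:rank-preserving} to each rewritten formula \emph{at every node} of the game tree. At each node it then:
\begin{itemize}
\item evaluates the resulting sentences on the cluster structure $\A_{\tC C}$;
\item splits again with \cref{lem:reduction-connected};
\item counts pieces without $x$-variables globally in $\A_{\tC C}$ with \cref{thm:fo-counting-static};
\item tests pieces without $y$-variables with \cref{thm:fo-testing-enumeration};
\item recurses only on the mixed pieces.
\end{itemize}
Rank preservation is exactly what keeps $p,q$ fixed across levels, and hence also $r=\delta(p,q)$, the cover radius $r'$, and the splitter parameters $\lambda(2r')$ and $t(2r')$. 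With a normal form that raises the rank, the locality radius would grow from level to level, and with it the game depth that is supposed to bound the number of levels. You flag this as ``rank bookkeeping'', but it is the central idea of the proof, and your plan applies the normal form only once, at the top.

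Two further points are symptoms of the same gap. First, your query phase follows ``the single path for $\tv'$''. After re-splitting, however, different mixed pieces are anchored at different $x$-entries, and these lie in different clusters of $\K_{\tC C}$. A query may therefore have to descend into several children at each level, though the number stays bounded in terms of $k$ and the tree height. Second, induction on $\ell$ is not what handles $y$-entries mapped into $W$. These are simply eliminated by summing over the maps $\mu$ in \cref{cor:removal}. Moreover, the main case, where no $y$-entry lies in $W$, does not decrease $\ell$ at all, so an induction on $\ell$ cannot replace the per-node re-normalisation.
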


First, we discuss all key ingredients of the proof.
In the preprocessing phase for \cref{thm:foplus-counting},
we build a ‘game tree’ (or ‘recursion tree’)
based on winning strategies in the splitter game from \cref{lem:winning-strategy}
and neighbourhood covers from \cref{thm:neighbourhood-cover}.
All computations in the remaining preprocessing phase as well as in the query-answering phase
will be based on this game tree, which we now define.

For \(\lambda, r \in \N\) and a graph \(G\),
a \emph{\((\lambda, 2r)\)-game tree for \(G\)}
consists of a rooted tree \(T\);
a graph \(G_u\), a distance-\(r\) neighbourhood cover \(\K_u\) of \(G_u\) of radius at most \(2r\),
and a function \(\centre_u \colon \K_u \to V(G_u)\) for every non-leaf node \(u \in V(T)\);
and a graph \(H_u\), a vertex \(v_u \in V(H_u)\), and a set \(W_u \subseteq V(H_u)\)
for every non-root node \(u \in V(T)\).

Intuitively, every path from the root of \(T\) to a leaf
represents a play in the generalised \((\lambda, 2r)\)-splitter game on \(G\),
and a path from the root to a node at depth \(i\) represents a partial play after \(i\) rounds.
The root of the tree represents the initial partial play \(()\) on \(G\).
For a node \(u\) of the tree at depth \(i\) and the corresponding graph
\(G_u = G^{(i)}\) after \(i\) rounds of the game,
moving to a child of \(u\) in \(T\) corresponds to a single round in the game
where Connector chooses \(H^{(i+1)} \deff G^{(i)}[C]\) for a cluster \(C \in \K_u\),
and \(v^{(i+1)} \deff \centre_u(C)\) is chosen to be the centre of the cluster.
Thus, the tree \(T\) represents all plays where Connector is required to
restrict the current graph to a cluster in the neighbourhood cover
and choose the centre of the cluster as the vertex for the next round,
and Splitter plays according to a fixed strategy.

Formally, we represent the nodes of \(T\) as tuples.
The root of \(T\) is the empty tuple \(()\),
and we have \(G_{()} \deff G\).
Furthermore, \(\K_{()}\) is a distance-\(r\) neighbourhood cover of \(G\),
and \(\centre_{()} \colon \K_{()} \to V(G)\) is a function
with \(C = \neighb{2r}{G[C]}{\centre_{()}(C)}\) for every cluster \(C \in \K_{()}\).
This implies that \(\K_{()}\) has radius at most \(2r\).
Let \(\pi_{()} \deff ()\) be the (initial) partial play
in the generalised \((\lambda, 2r)\)-splitter game on \(G\).

For every non-leaf node \(\tC = (C_1, \dots, C_i)\) of \(T\) for some \(i \in \N\)
(for \(i = 0\), we have \(\tC = ()\)) and every cluster \(C \in \K_{\tC}\),
we let \(\tC\) have a child \(\tC C = (C_1, \dots, C_i, C)\)
with \(H_{\tC C} \deff G_{\tC}[C]\), \(v_{\tC C} \deff \centre_{\tC}(C)\),
and \(W_{\tC C}\) is Splitter's answer according to a fixed strategy
for the generalised \((\lambda, 2r)\)-splitter game on \(G\)
when Connector continues the partial play \(\pi_{\tC}\) by choosing \(H_{\tC C}\) and \(v_{\tC C}\).
Let \(\pi_{\tC C}\) be the resulting partial play after these choices.
If \(W_{\tC C} = V(H_{\tC C})\),
then \(\neighb{2r}{H_{\tC C}}{v_{\tC C}} \setminus W_{\tC C} = \emptyset\),
so Splitter wins.
In this case, \(\tC C\) is a leaf of the tree.
Otherwise, we have \(W_{\tC C} \subsetneq V(H_{\tC C})\).
Since \(V(H_{\tC C}) = C = \neighb{2r}{H_{\tC C}}{\centre_{\tC}(C)}\),
this implies that \(\neighb{2r}{H_{\tC C}}{v_{\tC C}} \setminus W_{\tC C} \neq \emptyset\),
so the play continues, and \(\tC C\) is not a leaf.
We let \(G_{\tC C} \deff \Neighb{2r}{H_{\tC C}}{v_{\tC C}} \setminus W_{\tC C}
= H_{\tC C} \setminus W_{\tC C}\).
Furthermore, we let \(\K_{\tC C}\) be a distance-\(r\) neighbourhood cover of \(G_{\tC C}\)
of radius at most \(2r\),
and we let \(\centre_{\tC C} \colon \K_{\tC C} \to V(G)\) be a function
with \(C' = \neighb{2r}{G_{\tC C}[C']}{\centre_{\tC C}(C')}\)
for every cluster \(C' \in \K_{\tC C}\).

By combining \cref{lem:winning-strategy} and \cref{thm:neighbourhood-cover},
we prove in \cref{sec:game-tree-proof} that such game trees exist
and can be computed efficiently for nowhere dense classes,
as stated in the following lemma.

\begin{restatable}{lemma}{gameTree}
  \label{lem:game-tree}
  Let \(\C\) be an effectively nowhere dense graph class,
  let \(\lambda, t\) be the computable functions from \cref{lem:winning-strategy},
  and let \(\fNC\) be the computable function from \cref{thm:neighbourhood-cover} for \(\C\).
  There is a computable function \(\fGameTree\) and an algorithm that,
  given a radius \(r \in \N\), a graph \(G \in \C\), and an \(\epsilon \in \Qpos\),
  computes a \((\lambda(2r), 2r)\)-game tree for \(G\)
  in time \(\fGameTree(r, \epsilon) \cdot \abs{V(G)}^{1+\epsilon}\).
  The algorithm can be chosen such that the tree has height at most \(t(2r)\)
  and, for some \(\epsilon' \in \Qpos\) with \((1+\epsilon')^{t(2r)+1} \leq 1 + \epsilon\),
  for every node \(\tC\) of the tree,
  the neighbourhood cover \(\K_{\tC}\) has overlap at most
  \(\fNC(r, \epsilon') \cdot \abs{V(G_{\tC})}^{\epsilon'}\).
\end{restatable}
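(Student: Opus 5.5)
We build the tree top-down, level by level, using the cover algorithm of \cref{thm:neighbourhood-cover} and Splitter's strategy from \cref{lem:winning-strategy}. Fix \(\epsilon' \in \Qpos\) with \((1+\epsilon')^{t(2r)+1} \leq 1+\epsilon\); such an \(\epsilon'\) exists and is computable from \(r\) and \(\epsilon\), for instance \(\epsilon' \deff \epsilon/(2(t(2r)+1))\) for \(\epsilon \leq 1\), or a suitable rational otherwise. Since \(\C\) is closed under subgraphs, every graph \(G_{\tC}\) lies in \(\C\). So at every non-leaf node \(\tC\) we may run the algorithm of \cref{thm:neighbourhood-cover} with parameters \(r\) and \(\epsilon'\) on \(G_{\tC}\). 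This yields \(\K_{\tC}\) of radius at most \(2r\) and overlap at most \(\fNC(r,\epsilon')\cdot\abs{V(G_{\tC})}^{\epsilon'}\), together with the centre function, which gives exactly the required property \(C = \neighb{2r}{G_{\tC}[C]}{\centre_{\tC}(C)}\).

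For each cluster \(C\), we form the child \(\tC C\) with \(H_{\tC C} = G_{\tC}[C]\) and \(v_{\tC C} = \centre_{\tC}(C)\). We then compute \(W_{\tC C}\) by Splitter's strategy, continuing the data structures stored along the path \(\pi_{\tC}\); each path needs its own copy of these data structures. We stop when \(W_{\tC C} = V(H_{\tC C})\), and otherwise set \(G_{\tC C} = H_{\tC C}\setminus W_{\tC C}\). Every root-to-leaf path is a play of the generalised \((\lambda(2r),2r)\)-splitter game in which Splitter follows the winning strategy. Hence, by \cref{lem:winning-strategy}, the height is at most \(t(2r)\).

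The main obstacle is the running-time bound. Let \(n_i \deff \sum_{\tC \text{ at depth } i} \abs{V(G_{\tC})}\). By the overlap bound, the children of \(\tC\) have total vertex count at most \(\sum_{C\in\K_{\tC}}\abs{C} \leq \abs{V(G_{\tC})}\cdot \fNC(r,\epsilon')\abs{V(G_{\tC})}^{\epsilon'}\). Since \(\abs{V(G_{\tC})} \leq n\), this gives \(n_{i+1} \leq \fNC(r,\epsilon') n^{\epsilon'} n_i\), and inductively \(n_i \leq \fNC(r,\epsilon')^i n^{1+i\epsilon'}\). The exponent \(1+i\epsilon'\) is at most \((1+\epsilon')^{i+1}\), which is at most \(1+\epsilon\) for \(i \le t(2r)\). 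This is precisely why the condition \((1+\epsilon')^{t(2r)+1}\le 1+\epsilon\) is chosen; the slightly wasteful power form also absorbs the superlinear cost of the cover computation at each node.

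At a node at depth \(i\), the work consists of three parts. The cover computation costs \(\fNC(r,\epsilon')\abs{V(G_{\tC})}^{1+\epsilon'}\). Splitter's moves for all children cost \(\bigO(\sum_C \norm{H_{\tC C}} + 2r\,t(2r))\). Building the induced subgraphs costs about the same as Splitter's moves. Sizes of the form \(\norm{H}\) are bounded by \(\fSize\)-type bounds, or more simply by the edge counts of subgraphs of graphs in \(\C\) via \cref{lem:nowhere-dense-cliques}; each such bound is at most \(\fSize(2,\epsilon')\abs{C}^{1+\epsilon'}\). We sum all of this over the at most \(t(2r)+1\) levels using the bound on \(n_i\). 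Convexity gives \(\sum \abs{C}^{1+\epsilon'} \le (\sum\abs{C})^{1+\epsilon'}\). Together these yield a total of \(\fGameTree(r,\epsilon)\cdot n^{(1+\epsilon')^{t(2r)+1}} \le \fGameTree(r,\epsilon)\cdot n^{1+\epsilon}\), where \(\fGameTree\) is computable because \(\lambda\), \(t\), \(\fNC\) and \(\fSize\) are. The careful step is getting the exponent bookkeeping right: each level multiplies in at most one factor \(n^{\epsilon'}\) and one power \(1+\epsilon'\).
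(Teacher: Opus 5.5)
Your proposal is correct and follows the paper's proof. You run \cref{thm:neighbourhood-cover} with \(\epsilon'\) on every \(G_{\tC}\), which lies in \(\C\) by closure under subgraphs. You let Splitter's strategy from \cref{lem:winning-strategy} determine the children and the height bound \(t(2r)\), and you bound the work level by level via the overlap bound; the only difference is harmless bookkeeping, since you use \(\abs{V(G_{\tC})}^{\epsilon'} \le \abs{V(G)}^{\epsilon'}\) to get a linear recurrence with exponent \(1+i\epsilon'\), whereas the paper applies convexity across each level to get exponent \((1+\epsilon')^i\), and both end at \(\abs{V(G)}^{(1+\epsilon')^{t(2r)+1}} \le \abs{V(G)}^{1+\epsilon}\).
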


When given a \(\sigma\)-structure \(\A\)
and an \(\FOplusSigmaPQ\) formula \(\phi(x_1, \dots, x_k, y_1, \dots, y_k)\)
as input for the algorithm for \cref{thm:foplus-counting},
we set \(r \deff \delta(p,q)\), \(r' \deff r \cdot (k+\ell)\),
and we build a \((\lambda(2r'), 2r')\)-game tree for \(G_\A\).
Next, we compute a rank-preserving Gaifman normal form for \(\phi\)
using \cref{thm:rank-preserving},
and we evaluate the resulting \(\FO[\sigma]\) sentences \(\xi^i_G\) on \(\A\)
using \cref{thm:fo-testing-enumeration}.
This turns the evaluation of the \(\FOplusSigmaPQ\) formula \(\phi\)
into an evaluation of \(\locFOplusSigmaPQ\) formulae, which have the same rank.
However, although the formulae are local,
an input tuple \(\tv \in A^k\) might be scattered across the structure,
and the evaluation of the local formulae could lead to computations
in various regions of the structure that all depend on each other.
The following result, which we prove in \cref{sec:tools-formulae-proofs}
by the inclusion--exclusion principle,
turns this into independent computations in connected regions of the structure.

\begin{restatable}{lemma}{reductionConnected}
  \label{lem:reduction-connected}
  There is an algorithm that, given \(k \in \Npos\) and a graph \(G \in \G_k\),
  outputs \(m \in \N\),
  connected graphs \(G_1, \dots, G_m\) with \(V_i \deff V(G_i) \subseteq V(G)\) for all \(i \in [m]\),
  and a polynomial \(P(X_1, \dots, X_m) \in \Z[X_1, \dots, X_m]\)
  such that the following holds.

  Let \(\sigma\) be a signature,
  let \(p, q \in \N\) with \(k+p \leq q\),
  let \(\tx = (x_1, \dots, x_k)\)
  and, for every \(I \in c(G)\),
  let \(\psi_{G,I}(\tx_I) \in \FOplusSigmaPQ\).
  Let
  \begin{align*}
    \phi_G(\tx)
    &\deff \delta_{G,r}(\tx) \land \Land_{I \in c(G)} \psi_{G,I}(\tx_I),\\
    \phi_{G,i}(\tx_{V_i})
    &\deff \delta_{G_i, r}(\tx_{V_i}) \land \Land_{I \in c(G), I \subseteq V_i}
    \psi_{G,I}(\tx_I),
  \end{align*}
  let \(\A\) be a \(\sigma\)-structure,
  and let \(\tv \in A^\ell\) for some \(\ell \leq k\).
  It holds that
  \[\abs{\phi_G(\A, \tv)}
  = P\bigl(\abs{\phi_{G,1}(\A, \tv_{V_1})}, \dots, \abs{\phi_{G,m}(\A, \tv_{V_m})}\bigr).\]
\end{restatable}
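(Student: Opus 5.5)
Write \(c(G)=\{I_1,\dots,I_s\}\) and \(\beta(\tx)\deff\Land_{I\in c(G)}\psi_{G,I}(\tx_I)\). Then \(\phi_G=\delta_{G,r}\land\beta\). The constraint \(\delta_{G,r}\) has two parts: the positive atoms \(\dist(x_i,x_j)\le r\) for \(\{i,j\}\in E(G)\), and the negative atoms \(\neg\dist(x_i,x_j)\le r\) for non-edges. Negative atoms inside one component cause no trouble, since they are local to that component. The difficulty is the negative atoms between different components: they couple the components, so \(\abs{\phi_G(\A,\tv)}\) is not just a product of per-component counts.

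The plan is to remove this coupling by inclusion--exclusion. Let \(F\) be the set of pairs \(\{i,j\}\) with \(i,j\) in different components of \(G\). Every assignment satisfying the intra-component part of \(\phi_G\) falls into exactly one type, given by the set \(S\subseteq F\) of cross pairs at distance \(\le r\). Hence
\[\abs{\phi_G(\A,\tv)}=\sum_{S\subseteq F}(-1)^{\abs{S}}N_S,\]
where \(N_S\) counts the assignments that satisfy the intra-component constraints together with \(\dist(x_i,x_j)\le r\) for all \(\{i,j\}\in S\), with no condition on the remaining cross pairs. Each \(N_S\) needs to be expressed through the connected subgraphs.

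To do this, let \(H_S\) be the graph \(G\) with the edges of \(S\) added. The constraints in \(N_S\) are then all pairs of \(G\) inside components together with \(S\), and they decompose along the components \(J\) of \(H_S\). The constraints on different \(J\)'s are independent, and so are the fixed coordinates \(\tv\) (the first \(\ell\) entries). Therefore \(N_S=\prod_{J\in c(H_S)}M_{S,J}\), where \(M_{S,J}\) counts the assignments to the free variables among \(\tx_J\), extending \(\tv_J\), that satisfy the constraints inside \(J\). The catch is that \(M_{S,J}\) is not itself of the form \(\abs{\phi_{G,i}}\). The formula \(\phi_{G,i}\) imposes \(\delta_{G_i,r}\), i.e.\ it fixes exactly which pairs in \(V_i\) are close, whereas \(M_{S,J}\) leaves the pairs in \(F\setminus S\) inside \(J\) unconstrained. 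I therefore refine \(M_{S,J}\) by summing over every graph \(G'\) on vertex set \(J\) that agrees with \(G\) on intra-component pairs and contains \(S\cap\binom{J}{2}\). Each such \(G'\) is connected, because it contains \(H_S[J]\), which is connected. Each summand counts assignments satisfying \(\delta_{G',r}(\tx_J)\land\Land_{I\subseteq J}\psi_{G,I}\), and this is exactly \(\phi_{G,i}\) for \(G_i\deff G'\), \(V_i\deff J\). The components \(I\subseteq J\) are precisely the components of \(G\) contained in \(V_i\), because \(J\) is a union of components of \(G\).

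The algorithm collects all connected graphs \(G'\) obtained this way, for all \(S\) and \(J\), and lists them without repetition as \(G_1,\dots,G_m\). It outputs
\[P\deff\sum_{S\subseteq F}(-1)^{\abs S}\prod_{J\in c(H_S)}\;\sum_{G'}X_{\mathrm{idx}(G')}.\]
This is computable from \(k\) and \(G\) alone, and it is independent of \(\sigma\), the \(\psi_{G,I}\), \(\A\), \(r\) and \(\ell\), as the statement requires. Note that \(\tv_{V_i}\) is the restriction of \(\tv\) to the indices in \(V_i\) that are at most \(\ell\), which matches the paper's \(\tv_I\) convention.

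The main obstacle is the correctness bookkeeping rather than any single deep step. First, the inclusion--exclusion must be over the right set: only cross-component pairs. Second, the factorisation of \(N_S\) into a product over \(c(H_S)\) has to respect the partial assignment \(\tv\); it does, because each \(\psi_{G,I}\) and each distance atom mentions only variables inside one \(J\). Third, the refinement sum over \(G'\) must be a disjoint case distinction, which holds because every assignment determines its closeness graph uniquely. Finally, the hypothesis \(k+p\le q\) plays no role in the counting identity. It only ensures that the formulae \(\phi_{G,i}\) stay in the fragment used later, and the identity itself is purely combinatorial.
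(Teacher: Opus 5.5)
Your proof is correct. It reaches the identity by a different route than the paper.

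\textbf{The paper's route.} The paper argues by induction on \(\abs{c(G)}\). It splits off the component \(V_1\) containing vertex \(1\) and uses the disjoint decomposition \(\phi_{G,1}(\A)\times\phi_{\tilde G}(\A)=\phi_G(\A)\uplus\biguplus_{H}\phi_{H}(\A)\). Here \(H\) ranges over the graphs that agree with \(G\) on \(V_1\) and on \(\tilde V\) but have at least one edge between them. Its polynomial is therefore a product minus a sum of correction terms. It then recurses on \(\tilde G\) and on every such \(H\); these have fewer components but may still be disconnected, which is why induction is needed.

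\textbf{Your route.} You do a single M\"obius inversion over all subsets \(S\) of cross-component pairs. You then factor each ``at least \(S\)'' count \(N_S\) over the components of \(H_S\). Only after that do you split each factor into exact closeness graphs \(G'\), which are automatically connected because they contain \(H_S[J]\).

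\textbf{What each approach buys.} Your version gives an explicit, non-recursive formula for \(P\). It also makes two facts explicit that the paper uses only tacitly. First, every output vertex set \(V_i\) is a union of components of \(G\). The paper needs this when it applies the induction hypothesis to \(H_i\): its formula \(\phi_{H_i}\) is a conjunction over \(c(G)\) rather than \(c(H_i)\), so the recursively produced \(\phi_{H_i,j}\) match the \(\phi_{G,\mu(i,j)}\) only under that invariant. Second, the factorisation respects the fixed prefix \(\tv\), because \([\ell]\) is an initial segment of every \(V_i\). The paper's recursion is shorter to state and avoids your two-level bookkeeping (a signed sum over \(S\), then a refinement over \(G'\)), at the cost of leaving that invariant implicit.
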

For the evaluation of the statements on the connected regions of the structure,
the following result allows us to restrict the computation to a single cluster
in a neighbourhood cover of \(\A\) with a sufficiently large radius.
In fact, a distance-\(r'\) neighbourhood cover suffices for this,
and this is exactly the distance we chose for the neighbourhood covers
in our game tree.
We prove the result in \cref{sec:tools-formulae-proofs}.

\begin{restatable}{lemma}{deltaConnectedLocal}
  \label{lem:delta-connected-local}
  Let \(p \in \N\) and \(k,q \in \Npos\) with \(k + p \leq q\),
  \(r \deff \delta(p,q)\),
  let \(G \in \G_k\) be a connected graph, let \(\sigma\) be a signature,
  \(\tx \deff (x_1, \dots, x_k)\),
  and let \(\psi(\tx) \in \locFOplusSigmaPQ\).
  For \(\phi(\tx) \deff \delta_{G,r}(\tx) \land \psi(\tx)\),
  for every \(\sigma\)-structure \(\A\),
  all \(\tv \in A^k\),
  and every set \(S \subseteq A\) with \(\neighbA{kr}{v_1} \subseteq S\),
  we have \(\A \models \phi(\tv)\)
  if and only if \(\tv \in S^k\) and \(\A[S] \models \phi(\tv)\).
\end{restatable}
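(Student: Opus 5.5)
The proof splits into the two directions, and the key observation is that all entries of \(\tv\) lie within distance \((k-1)r\) of \(v_1\). Indeed, if \(\A \models \delta_{G,r}(\tv)\), then every edge \(\set{i,j}\) of \(G\) gives \(\dist^\A(v_i,v_j) \leq r\). Since \(G\) is connected on \([k]\), every \(j \in [k]\) is joined to \(1\) by a path in \(G\) of length at most \(k-1\). By the triangle inequality, \(\dist^\A(v_1,v_j) \leq (k-1)r\).

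We first show \(\tv \in S^k\). Fix \(j \in [k]\) and a path \(1 = i_0, i_1, \dots, i_s = j\) in \(G\). Concatenating shortest paths in \(G_\A\) between consecutive \(v_{i_t}\) gives a walk \(P\) from \(v_1\) to \(v_j\) of length at most \((k-1)r\). Every vertex on this walk has distance at most \((k-1)r\) from \(v_1\). This already shows \(\tv \in S^k\), because \(S \supseteq \neighbA{kr}{v_1}\).

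The substantive step is transferring truth between \(\A\) and \(\A[S]\). We compare both \(\A\) and \(\A[S]\) with the common substructure \(\NeighbA{r}{\tv}\), using locality of \(\locFOplus\) formulae and of distance atoms. Every \(w\) with \(\dist^\A(v_j,w) \leq r\) satisfies \(\dist^\A(v_1,w) \leq kr\), so \(\neighbA{r}{\tv} \subseteq \neighbA{kr}{v_1} \subseteq S\). Moreover, distances up to \(r\) between vertices of \(\neighbA{kr}{v_1}\) are preserved in \(\A[S]\) in the relevant cases: any path of length at most \(r\) starting at some \(v_j\) stays inside \(\neighbA{kr}{v_1}\). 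Consequently \(\dist^{\A[S]}(v_i,v_j) \leq r\) if and only if \(\dist^\A(v_i,v_j) \leq r\), since a short path in \(\A\) lies in \(S\) and \(\A[S]\) has fewer edges. Hence \(\delta_{G,r}\) has the same truth value in \(\A\) and \(\A[S]\) on \(\tv\).

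The same argument shows that the \(r\)-neighbourhoods coincide: \(\Neighb{r}{\A[S]}{\tv} = \NeighbA{r}{\tv}\) as structures. The \(\supseteq\) inclusion of universes holds because paths of length at most \(r\) from \(v_j\) lie within \(\neighbA{kr}{v_1} \subseteq S\), and the induced relations agree since both are induced substructures of \(\A\). Note that \(\delta(p,q) = r\). Applying \cref{lem:locfo-locality} to \(\psi\) in both \(\A\) and \(\A[S]\) gives
\[\A \models \psi(\tv) \iff \NeighbA{r}{\tv} \models \psi(\tv) \iff \A[S] \models \psi(\tv).\]

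For the converse direction, assume \(\tv \in S^k\) and \(\A[S] \models \phi(\tv)\). Then \(\A[S] \models \delta_{G,r}(\tv)\). Distances in \(\A[S]\) are at least those in \(\A\), so \(\dist^\A(v_i,v_j) \leq r\) for every edge of \(G\). This again yields \(\dist^\A(v_1,v_j) \leq (k-1)r\), and the argument above applies verbatim.

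The main obstacle is this symmetric handling. We must establish the connectivity bound in \(\A\) from either hypothesis before invoking the neighbourhood-equality argument, and we must check the non-edge conditions \(\neg\dist \leq r\). The latter follow from the equivalence of distances at most \(r\) between the \(v_i\) in \(\A\) and \(\A[S]\) established above.
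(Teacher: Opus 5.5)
Your proof is correct and takes essentially the same route as the paper. Connectivity of \(G\) gives \(\dist^\A(v_1,v_j)\leq (k-1)r\); in the backward direction this uses that distances in \(\A[S]\) are at least those in \(\A\). Hence \(\neighbA{r}{\tv}\subseteq\neighbA{kr}{v_1}\subseteq S\), and \(r\)-locality (\cref{lem:locfo-locality}) then transfers truth between \(\A\) and \(\A[S]\) through the common neighbourhood \(\NrA{\tv}\).

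The differences are minor. You treat \(\delta_{G,r}\) by a direct distance argument rather than folding it into the local formula. You also make explicit the identity \(\Neighb{r}{\A[S]}{\tv}=\NrA{\tv}\), which the paper uses without comment. For relations of arity at least \(3\), that identity needs one more remark: every tuple witnessing an edge on a path of length at most \(r\) from some \(v_j\) lies inside \(\neighbA{r}{\tv}\subseteq S\), so those edges survive in \(G_{\A[S]}\).
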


All in all, these results turned the problem of counting
the number of satisfying assignments of a formula on the whole structure \(\A\)
into independent counting problems on \(\A[C]\)
for single clusters \(C \in \K_{()}\) of the neighbourhood cover of \(\A\).
By the definition of the game tree,
we know that removing the vertices in \(W_{(C)}\) from \(\A[C]\)
will complete one round of the splitter game,
which is equivalent to moving from the root \(()\)
to one of its children \((C)\).
After repeating this process for at most \(t(2r')\) rounds, we reach a leaf.
There, the number of remaining vertices in the structure is at most \(\lambda(2r')\),
and we simply solve the counting problem by brute force.
After the preprocessing phase, when given a tuple \(\tv \in A^k\),
we only need to follow at most \(k\) paths from the root to a leaf
and combine the results bottom-up.

Thus, the last remaining key tool is a \emph{Removal lemma},
which can be used to remove vertices from a structure without information loss.
For that, we add new relations that encode the removed information,
and we modify formulae to work with these new relations.

Let \(\sigma\) be a signature, and let \(r \in \N\).
For every relation symbol \(R \in \sigma\) of arity \(k \geq 1\)
and for every set \(I \subsetneq [k]\),
we introduce a fresh \(\abs{I}\)-ary relation symbol \(R_I\).
Moreover, we introduce fresh unary relation symbols \(D_j\) for all \(j \in [r]\).
Let
\[\removalsigr{\sigma} \deff \sigma \ \uplus \ \setc{R_I}{R \in \sigma, I \subsetneq [\ar(R)]}
\ \uplus \ \setc{D_j}{j \in [r]}.\]
For every \(\sigma\)-structure \(\A\) with \(\abs{A} \geq 2\) and every \(d \in A\),
we let \(\A \removaldistr d\) be the \(\removalsigr{\sigma}\)-structure
with universe \(A \setminus \set{d}\) and relations
\(R(\A \removaldistr d) \deff R(\A)\)
for \(R \in \sigma\),
\[R_I(\A \removaldistr d) \deff \bigsetc{\tv_I}
{\tv \in R(\A) \text{ and } I = \setc{i \in [k]}{v_i \neq d}}\]
for \(R \in \sigma\) and \(I \subsetneq [\ar(R)]\), and
\[D_j(\A \removaldistr d) \deff \setc{w \in A \setminus \set{d}}{\dist^{\A}(d,w) \leq j}\]
for \(j \in [r]\).
Note that there is a computable function \(\fRemoval\)
such that we can compute \(\A \removaldistr d\)
from \(\A\) and \(d\) in time \(\fRemoval(r, \sigma) \cdot \norm{\A}\).

\begin{lemma}[Removal Lemma, {\cite[Lemma~9.6]{GroheSchweikardt_2026_Locality}}]
  \label{lem:removal}
  Let \(k, p, q \in \N\) with \(k+p \leq q\), let \(r \deff \delta(p, q)\),
  and let \(\sigma\) be a signature.
  For every \(\FOplusSigmaPQ\) formula \(\phi(\tx)\) with \(\tx = (x_1, \dots, x_k)\)
  and for every set \(I \subseteq [k]\),
  there is an \(\FOplusParams{\removalsigr{\sigma}}{p}{q}\) formula \(\phi_I(\tx_I)\)
  such that for all \(\sigma\)-structures \(\A\) with \(\abs{A} \geq 2\),
  all \(d \in A\), and all \(\tv = (v_1, \dots, v_k) \in A^k\) such that
  \(I = \setc{i \in [k]}{v_i \neq d}\), we have
  \(\A \models \phi(\tv) \iff \A \removaldistr d \models \phi_I(\tv_I)\).
  Furthermore, there is an algorithm that computes \(\phi_I(\tx_I)\)
  from \(p\), \(q\), \(\phi(\tx)\) and \(I\).
\end{lemma}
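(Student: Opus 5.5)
The plan is to construct \(\phi_I\) by induction on the structure of \(\phi\), simultaneously for all \(I \subseteq [k]\). The invariant is that the translated formula stays in \(\FOplusParams{\removalsigr{\sigma}}{p}{q}\) with the same layering as \(\phi\). Throughout, fix \(\A\), \(d\) and \(\tv\) with \(I = \setc{i}{v_i \neq d}\). Variables indexed outside \(I\) are interpreted by \(d\), which is absent from \(\A \removaldistr d\), so every occurrence of such a variable must be replaced by information stored in the new relations.

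\textbf{Basic observation.} The Gaifman graph of \(\A \removaldistr d\) is exactly \(G_\A - d\). Indeed, an edge between two elements different from \(d\) in \(G_\A\) comes from some \(\tu \in R(\A)\). If \(d \notin \tilde u\), the tuple \(\tu\) survives in \(R\) restricted to the new universe. Otherwise the projection \(\tu_J\) onto the positions not equal to \(d\) lies in \(R_J(\A \removaldistr d)\) and still contains both endpoints. Conversely, no new edges are created.

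\textbf{Atoms.}
\begin{itemize}
  \item A relational atom \(R(x_{i_1},\dots,x_{i_a})\) is translated as follows. Let \(J \subseteq [a]\) be the set of positions \(j\) with \(i_j \in I\). If \(J = [a]\), keep \(R(\dots)\). Otherwise use \(R_J(\tx_{J})\), which for \(J = \emptyset\) is a \(0\)-ary symbol.
  \item For an equality \(x_i = x_j\): keep it if \(i,j \in I\), replace it by \emph{true} if \(i,j \notin I\), and by \emph{false} otherwise.
  \item For a distance atom \(\dist(x_i,x_j) \leq e\) with \(e \leq r\): a shortest path in \(\A\) either avoids \(d\) or passes through it. So if \(i,j \in I\), translate it as
    \[\dist(x_i,x_j) \leq e \ \lor \bigvee_{a,b \geq 1,\ a+b \leq e} \bigl(D_a(x_i) \land D_b(x_j)\bigr).\]
    If exactly \(i \in I\), translate it as \(D_e(x_i)\), or \emph{false} if \(e = 0\). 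If neither index is in \(I\), translate it as \emph{true}.
  \item All indices used are at most \(r\), so the \(D_a\) exist.
\end{itemize}
The translated atoms are Boolean combinations of atoms and distance atoms with bounds no larger than the original ones. Hence they stay in \(\FOplusParams{\removalsigr{\sigma}}{0}{q}\).

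\textbf{Inductive steps.} Boolean combinations are translated componentwise.
\begin{itemize}
  \item For \(\exists y\,\psi\), set
    \[\phi_I \deff \exists y\,\psi_{I\cup\{y\}} \ \lor\ \psi_I,\]
    where the first disjunct covers witnesses \(y \neq d\) and the second covers \(y = d\).
  \item For \(\exists y\,(\dist(x,y) \leq e \land \psi)\) with \(x\) indexed in \(I\), set
    \[\exists y\,\bigl(\dist(x,y)\leq e \land \psi'\bigr) \ \lor\ \bigvee_{a+b\leq e} \Bigl(D_a(x) \land \exists y\,\bigl(D_b(y) \land \psi'\bigr)\Bigr) \ \lor\ \bigl(D_e(x) \land \psi_I\bigr),\]
    where \(\psi' \deff \psi_{I \cup \{y\}}\).
  \item If \(x\) is not indexed in \(I\), that is, \(x\) is interpreted by \(d\), use \(\exists y\,(D_e(y) \land \psi') \lor \psi_I\).
\end{itemize}
The unguarded \(\exists y\) appearing here is allowed, because the target is \(\FOplus\) and not \(\locFOplus\). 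Moreover, \(D_b(y)\) is atomic, so \(D_b(y) \land \psi'\) is in layer \(p-1\) whenever \(\psi'\) is. Correctness in each case follows from the induction hypothesis together with the basic observation about distances.

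\textbf{Main obstacle.} The real work is the bookkeeping that keeps \(\phi_I\) inside \(\FOplusParams{\removalsigr{\sigma}}{p}{q}\). Three points need to be checked:
\begin{itemize}
  \item the free-variable bound \(q-p\) is preserved, since only variables are dropped;
  \item the nesting depth \(p\) does not increase, since each quantifier is replaced by quantifiers of the same kind or unguarded ones, over layer-\((p-1)\) formulas;
  \item every distance bound and every index of a symbol \(D_a\) stays below the thresholds \(\delta(p,q)\) and \(\delta(p,q)-\delta(p-1,q)\) at the appropriate level, which holds because all these are at most \(r = \delta(p,q)\).
\end{itemize}
The construction is clearly effective, which gives the algorithmic part of the lemma.
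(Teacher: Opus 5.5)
Your construction is correct. The paper gives no proof of this lemma and simply imports it from \cite[Lemma~9.6]{GroheSchweikardt_2026_Locality}, so your structural induction supplies a self-contained proof along the standard lines of removal lemmas. It splits every quantifier into the branches \(y\neq d\) and \(y=d\), records tuples through \(d\) with the symbols \(R_J\), and records paths through \(d\) with the predicates \(D_a\).

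The one delicate point is the guarded quantifier. The guard bound \(e\le\delta(p',q)-\delta(p'-1,q)\) can exceed \(\delta(p'-1,q)\), so \(\dist(x,y)\le e\) may not be moved into the body of an unguarded quantifier. You handle this correctly: you keep the guarded quantifier for paths avoiding \(d\), pull \(D_a(x)\) outside, and leave only the atom \(D_b(y)\) inside.

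A few small repairs are needed.
\begin{itemize}
  \item In the guarded case, the disjunction must range over \(a,b\geq 1\), as in your atom case.
  \item When \(e=0\), the disjuncts using \(D_e\) must be replaced by \emph{false}.
  \item The layer thresholds are respected because every original bound \(e\) is reused unchanged at the same layer. It is not because everything is at most \(r\); that only guarantees the symbols \(D_a\) exist in \(\removalsigr{\sigma}\).
  \item The argument of \(R_J\) should be \((x_{i_j})_{j\in J}\) rather than \(\tx_J\).
  \item If a quantified variable already occurs among the indexed ones, the \(y=d\) branch should use \(\psi_{I\setminus\{y\}}\). Alternatively, rename bound variables apart beforehand.
\end{itemize}
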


Analogously to \cite[Corollary~9.8]{GroheSchweikardt_2026_Locality},
which provides an iterated version of the Removal Lemma for formulae with a single free variable,
we do the same for arbitrary formulae.
Let \(\sigma\) be a signature, and let \(r \in \N\).
We let \(\removalsig{\sigma}{r,0} \deff \sigma\), and, for \(s \in \N\),
we let \(\removalsig{\sigma}{r,s+1} \deff \removalsigr{\tau}\)
for \(\tau \deff \removalsigrs{\sigma}\).
Furthermore, for a \(\sigma\)-structure \(\A\) with \(\abs{A} > s\)
and for distinct elements \(d_1, \dots, d_s \in A\),
we let \(\A \removaldistr d_1 \cdots d_s \deff
( \cdots ((\A \removaldistr d_1) \removaldistr d_2) \cdots ) \removaldistr d_s\).
The structure \(\A \removaldistr d_1 \cdots d_s\) is a \(\removalsigrs{\sigma}\)-structure
with universe \(A \setminus \set{d_1, \dots, d_s}\)
and Gaifman graph \(G_\A[A \setminus \set{d_1, \dots, d_s}]\).
Moreover, it holds that
\(\sigma = \removalsig{\sigma}{r,0} \subseteq \removalsig{\sigma}{r,1}
\subseteq \dots \subseteq \removalsigrs{\sigma}\).
For the statement of the iterated version,
recall that, for a \(k\)-tuple \(\tv\) and a function
\(\mu \colon [k] \to [0,s]\) for some \(s \in \N\),
the tuple \(\tv_\mu\) is obtained from \(\tv\)
by dropping all entries \(v_i\) with \(\mu(i) \neq 0\).

\begin{corollary}
  \label{cor:removal}
  Let \(k, p, q \in \N\) with \(k+p \leq q\), let \(r \deff \delta(p, q)\),
  let \(\sigma\) be a signature, and let \(s \in \Npos\).
  For every \(\FOplusSigmaPQ\) formula \(\phi(\tx)\) with \(\tx = (x_1, \dots, x_k)\)
  and for every function \(\mu \colon [k] \to [0,s]\),
  there is an \(\FOplusParams{\removalsigrs{\sigma}}{p}{q}\) formula \(\phi_\mu(\tx_f)\)
  such that, for all \(\sigma\)-structures \(\A\) with \(\abs{A} > s\),
  all distinct \(d_1, \dots, d_s \in A\),
  and all \(\tv = (v_1, \dots, v_k) \in A^k\) with
  \(v_i \not\in \set{d_1, \dots, d_s}\) if \(\mu(i) = 0\) and \(v_i = d_{\mu(i)}\) else,
  we have
  \(\A \models \phi(\tv) \iff \A \removaldistr d_1 \cdots d_s \models \phi_\mu(\tv_\mu)\).
  Furthermore, there is an algorithm that computes \(\phi_\mu(\tx_\mu)\)
  from \(p\), \(q\), \(\phi(\tx)\) and \(\mu\).
\end{corollary}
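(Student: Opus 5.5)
We prove \cref{cor:removal} by induction on \(s\), applying the Removal Lemma (\cref{lem:removal}) once per removed element. The base case \(s = 1\) is \cref{lem:removal} itself. A function \(\mu \colon [k] \to [0,1]\) determines the set \(I \deff \setc{i \in [k]}{\mu(i) = 0}\). The hypothesis on \(\tv\) says exactly that \(I = \setc{i}{v_i \neq d_1}\), and \(\tv_\mu = \tv_I\). So we set \(\phi_\mu \deff \phi_I\), which lies in \(\FOplusParams{\removalsig{\sigma}{r,1}}{p}{q}\) with \(\removalsig{\sigma}{r,1} = \removalsigr{\sigma}\).

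For the step from \(s\) to \(s+1\), fix \(\mu \colon [k] \to [0,s+1]\). Define \(\mu' \colon [k] \to [0,s]\) by \(\mu'(i) \deff 0\) if \(\mu(i) \in \set{0, s+1}\), and \(\mu'(i) \deff \mu(i)\) otherwise. Let \(\A\) be a \(\sigma\)-structure with \(\abs{A} > s+1\), let \(d_1, \dots, d_{s+1} \in A\) be distinct, and let \(\tv\) be compatible with \(\mu\). Then \(\tv\) is compatible with \(\mu'\) with respect to \(d_1, \dots, d_s\): entries with \(\mu(i) = s+1\) equal \(d_{s+1}\), which is not among \(d_1, \dots, d_s\). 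By the induction hypothesis we obtain \(\phi_{\mu'}(\tx_{\mu'}) \in \FOplusParams{\tau}{p}{q}\) for \(\tau \deff \removalsigrs{\sigma}\), with \(\A \models \phi(\tv) \iff \B \models \phi_{\mu'}(\tv_{\mu'})\), where \(\B \deff \A \removaldistr d_1 \cdots d_s\).

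Next we apply \cref{lem:removal} to the \(\tau\)-structure \(\B\), the formula \(\phi_{\mu'}\), and the element \(d_{s+1} \in B\). Since \(\abs{B} = \abs{A} - s \geq 2\), the lemma applies. The number of free variables is \(k' \deff \abs{\tx_{\mu'}} \leq k\), so \(k' + p \leq q\) still holds and \(r = \delta(p,q)\) is unchanged. Take \(J\) to be the set of positions of \(\tx_{\mu'}\) whose original index \(i\) satisfies \(\mu(i) = 0\). The entries of \(\tv_{\mu'}\) are exactly the \(v_i\) with \(\mu(i) \in \set{0,s+1}\). Those with \(\mu(i) = s+1\) equal \(d_{s+1}\), and those with \(\mu(i) = 0\) differ from \(d_{s+1}\). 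Hence \(J\) is precisely the set of positions where the entry differs from \(d_{s+1}\), as \cref{lem:removal} requires, and \((\tv_{\mu'})_J = \tv_\mu\).

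Setting \(\phi_\mu \deff (\phi_{\mu'})_J\) gives a formula in \(\FOplusParams{\removalsigr{\tau}}{p}{q} = \FOplusParams{\removalsig{\sigma}{r,s+1}}{p}{q}\). It satisfies \(\B \models \phi_{\mu'}(\tv_{\mu'}) \iff \B \removaldistr d_{s+1} \models \phi_\mu(\tv_\mu)\), and \(\B \removaldistr d_{s+1} = \A \removaldistr d_1 \cdots d_{s+1}\) by definition. Chaining the two equivalences completes the step. Computability follows because both the induction hypothesis and \cref{lem:removal} yield algorithms, and \(\mu'\) and \(J\) are computed directly from \(\mu\).

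The main point to get right is the bookkeeping rather than any real obstacle. First, the renumbering between indices of \(\tx\) and positions of \(\tx_{\mu'}\) must keep the order of the remaining variables, so that the final tuple is exactly \(\tv_\mu\). Second, the parameters \(p,q\) must stay fixed across the iteration; this works because the Removal Lemma preserves the class \(\FOplus[p,q]\), changing only the signature, and the condition \(k+p \leq q\) only gets weaker as variables are dropped.
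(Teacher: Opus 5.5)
Your proof is correct and follows the paper's approach. The paper gives no separate proof; it obtains the corollary by iterating the Removal Lemma (\cref{lem:removal}) one element at a time, as in Grohe and Schweikardt's iterated version for one free variable. That is your induction on \(s\), and your bookkeeping via \(\mu'\) and \(J\) (including the check that \(\tv_{\mu'}\) lies in \(B\) and that \((\tv_{\mu'})_J = \tv_\mu\)) supplies the details the paper leaves implicit.
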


In order to simplify notation, before proving \cref{thm:foplus-counting},
we first prove the following special case
where the formula \(\phi(\tx, \ty)\) is of the form \(\delta_{G, r}(\tx, \ty) \land \psi(\tx, \ty)\)
for a connected graph \(G \in \G_{k+\ell}\)
and a formula \(\psi(\tx, \ty) \in \locFOplusSigmaPQ\).

\begin{lemma}
  \label{lem:foplus-counting-connected}
  Let \(\C\) be an effectively nowhere dense graph class.
  There is a computable function \(f\) and an algorithm that does the following.
  The input of the algorithm consists of numbers
  \(k, \ell, q \in \Npos\) and \(p \in \N\) with \(k+\ell+p \leq q\),
  a signature \(\sigma\),
  a \(\sigma\)-structure \(\A\) with \(G_\A \in \C\),
  a rational number \(\epsilon \in \Qpos\),
  and a \(\locFOplusSigmaPQ\) formula
  \(\phi(\tx, \ty) = \delta_{G, r}(\tx, \ty) \land \psi(\tx, \ty)\)
  with \(\tx \deff (x_1, \dots, x_k)\), \(\ty \deff (y_1, \dots, y_\ell)\),
  \(r \deff \delta(p,q)\), a connected graph \(G \in \G_{k+\ell}\),
  and a formula \(\psi(\tx, \ty) \in \locFOplusSigmaPQ\).
  After preprocessing the input in time \(f(q, \sigma, \epsilon) \cdot \abs{A}^{1+\epsilon}\),
  the algorithm can answer the following queries in time \(f(q, \sigma, \epsilon)\):
  given a tuple \(\tv \in A^k\), output \(\bigabs{\phi(\A, \tv)}\).
\end{lemma}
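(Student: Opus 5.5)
We will prove the lemma by induction over the height of a game tree: we build a \((\lambda(2r'),2r')\)-game tree for \(G_\A\) with \(r' \deff r\cdot(k+\ell)\), using \cref{lem:game-tree}, and run a recursive procedure along it. At a node \(\tC\) of depth \(s\), the current structure is \(\A_{\tC}\) with universe \(V(G_{\tC})\) and signature \(\removalsigrs{\sigma}\). It is obtained from \(\A\) by restricting to clusters and removing Splitter's vertices, which are at most \(\lambda(2r')\) in total, via \(\removaldistr\). By \cref{cor:removal}, every query about \(\A\) and \(\phi\) translates into a query about \(\A_{\tC}\) and formulae \(\phi_\mu\), where \(\mu\) records which variables are assigned removed vertices, and which ones. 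The formulae \(\phi_\mu\) stay in \(\FOplus[\removalsigrs{\sigma},p,q]\). Their number and their signature are bounded in terms of \(q,\sigma,\lambda(2r'),t(2r')\), so all constants remain functions of \(q,\sigma,\epsilon\).

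The recursion works as follows. Since \(G\) is connected, every satisfying tuple \((\tv,\tw)\) lies within distance \(r'\) of \(v_1\) (assuming \(k\ge1\)). By \cref{lem:delta-connected-local}, we have \(\bigabs{\phi(\A,\tv)} = \bigabs{\phi(\A[C],\tv)}\) for the cluster \(C \deff \K_{()}(v_1)\); locality of \(\psi\) is what lets us pass to \(\A[C]\). Inside \(\A[C]\), Splitter deletes \(W_{(C)}\). We split the count according to which \(y_j\) take values in \(W_{(C)}\): for every \(\mu\colon[k+\ell]\to[0,\abs{W_{(C)}}]\) consistent with \(\tv\) (its values on \(x\)-positions are determined by \(\tv\)), we count tuples \(\tw\) with exactly that pattern. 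Removed positions are fixed vertices, so the remaining count is the number of extensions of \(\tv_\mu\) for \(\phi_\mu\) in \(\A_{(C)} = \A[C]\removaldistr W_{(C)}\).

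The obstacle is that \(\phi_\mu\) is no longer of the shape \(\delta_{G,r}\land\psi\) with \(\psi\) local and \(G\) connected: removing vertices destroys both locality and connectivity. To repair this, we apply \cref{thm:rank-preserving} to \(\phi_\mu\) and evaluate the resulting sentences \(\xi^i_{G'}\) once during preprocessing using \cref{thm:fo-testing-enumeration}. We then reduce to connected pieces with \cref{lem:reduction-connected}, summing over \(G'\in\G_{\abs{\tx_\mu}+\abs{\ty_\mu}}\) and over \(i\); uniqueness of \(i\) in \cref{thm:rank-preserving} is what makes this sum, rather than an inclusion--exclusion, correct. The connected pieces either contain some free \(x\)-variable, in which case we recurse at the child node \(\K_{(C)}(v)\) for that anchor \(v\), or contain only \(y\)-variables, in which case they are pure counts computable in preprocessing by \cref{thm:fo-counting-static}, or by a bottom-up aggregation over clusters. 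Rank is preserved and free variables only decrease, so the condition \(k+\ell+p\le q\) persists. The depth is bounded by \(t(2r')\). At leaves the structure has at most \(\lambda(2r')\) vertices, and we answer by brute force.

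For the preprocessing, we compute, for each node, the finitely many formulae needed there, and we build the testing structures for the root sentences. The total time is \(\sum_{\tC}\abs{V(G_{\tC})}^{1+\epsilon'}\cdot g(q,\sigma,\epsilon)\). Since the overlap at each level is at most \(\fNC\cdot n^{\epsilon'}\), this sum is at most \(n^{(1+\epsilon')^{t+1}}\le n^{1+\epsilon}\), as guaranteed by \cref{lem:game-tree}. A query follows at most \(k\) root-to-leaf paths, with constant branching over \(\mu\), \(G'\), \(i\) and the pieces, and combines the results with the polynomials \(P\). This gives query time \(f(q,\sigma,\epsilon)\). The case \(k=0\) is handled directly by \cref{thm:fo-counting-static}.
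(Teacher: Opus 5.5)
Your proposal is correct and follows the paper's proof essentially step for step: the same \((\lambda(2r'),2r')\)-game tree with \(r'=r(k+\ell)\), restriction to \(\K_{\tC}(v_1)\) via \cref{lem:delta-connected-local}, splitting over \(\mu\) with \cref{cor:removal}, and the rank-preserving normal form with its sentences decided in preprocessing. From there it continues as the paper does, with \cref{lem:reduction-connected} into connected pieces that are recursed on or counted with \cref{thm:fo-counting-static}, brute force at the leaves, and the size bound behind \cref{lem:game-tree}. The one place you deviate is to your credit: you push all of \(\phi\), including \(\delta_{G,r}\), through \cref{cor:removal} and then sum over all distance graphs of the surviving variables in the reduced structure (as in \cref{eq:foplus-counting-to-G}), whereas the paper translates only \(\psi'\) and re-imposes \(\delta_{G'_\mu,r}\) evaluated in \(\A_{\tC C}\). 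That is not literally equivalent to \(\delta_{G',r}\) in \(\A_{\tC}[C]\), since constraints involving removed variables are dropped and deleting Splitter's vertices can increase distances, so your extra summation is what makes the step in \cref{eq:gamma-to-gamma-mu} sound.
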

\begin{proof}
  Let \(\lambda, t\) be the computable functions from \cref{lem:winning-strategy},
  let \(\fNC\) be the computable function from \cref{thm:neighbourhood-cover},
  let \(\fSize\) be the computable function from \cref{lem:nowhere-dense-cliques},
  and let \(\fGameTree\) be the computable function from \cref{lem:game-tree} for \(\C\).

  Let \(k, \ell, q \in \Npos\) and \(p \in \N\) with \(k+\ell+p \leq q\),
  let \(\sigma\) be a signature,
  let \(\A\) be a \(\sigma\)-structure with \(G_\A \in \C\),
  let \(\epsilon \in \Qpos\),
  and \(\phi(\tx, \ty) = \delta_{G, r}(\tx, \ty) \land \psi(\tx, \ty)\)
  with \(\tx \deff (x_1, \dots, x_k)\), \(\ty \deff (y_1, \dots, y_\ell)\),
  \(r \deff \delta(p,q)\), a connected graph \(G \in \G_{k+\ell}\),
  and a formula \(\psi(\tx, \ty) \in \locFOplusSigmaPQ\).
  Let \(\tz \deff \tx \ty = (x_1, \dots, x_k, y_1, \dots, y_\ell)\).
  We set \(r' \deff r \cdot (k+\ell)\)
  and \(\epsilon' \in \Qpos\) such that \(0 < \epsilon' \leq (1+\epsilon)^{1/(t(2r')+1)} - 1\).

  First, we describe the computations in the preprocessing phase and analyse the running time.
  Afterwards, we describe how to answer queries efficiently,
  and we show that the computations are correct.

  In the preprocessing phase, we first compute a \((\lambda(2r'), 2r')\)-game tree for \(G_\A\)
  in time \(\fGameTree(2r', \epsilon) \cdot \abs{A}^{1+\epsilon}\)
  using \cref{lem:game-tree}.
  Then, with every vertex \(\tC\) of the game tree,
  we associate a signature \(\sigma_{\tC} \supseteq \sigma\),
  a set of \(\locFOplusParams{\sigma_{\tC}}{p}{q}\) formulae \(\Phi_{\tC}\),
  and a \(\sigma_{\tC}\)-structure \(\A_{\tC}\).
  Moreover, for every node \(\tC C\) (that is not the root),
  we build a data structure for computing \(\gamma(\A_{\tC C}, \tv)\) in constant time
  given a formula \(\gamma(\tx', \ty') \in \Phi_{\tC C}\)
  and a tuple \(\tv \in C^{\smallabs{\tx'}}\) with \(\K_{\tC}(v_1) = C\).
  For the root \(()\), we set \(\sigma_{()} \deff \sigma\),
  \(\Phi_{()} \deff \set{\phi}\)
  and \(\A_{()} \deff \A\).
  For every node \(\tC\) and every formula \(\gamma(\tx', \ty') \in \Phi_{\tC}\),
  there will be numbers \(k' \in [k]\), \(\ell' \in [\ell]\),
  a connected graph \(G' \in \G_{k'+\ell'}\),
  and a formula \(\psi'(\tx', \ty') \in \locFOplusParams{\sigma_{\tC}}{p}{q}\)
  such that \(\tx' = (x_1, \dots, x_{k'})\), \(\ty' = (y_1, \dots, y_{\ell'})\),
  and \(\gamma(\tx', \ty') = \delta_{G', r}(\tx', \ty') \land \psi'(\tx', \ty')\).

  In a top-down approach, starting at the root \(()\), we do the following.
  On a node \(\tC\) of the game tree,
  we iterate over all children \(\tC C\) of \(\tC\)
  (and hence, over all clusters \(C \in \K_{\tC}\)).
If \(\tC C\) is a leaf, then \(\abs{C} \leq \lambda(2r')\).
  In this case, we compute \(\A_{\tC C} \deff \A_{\tC}[C]\) from \(\A_{\tC}\)
  in time \(\bigO_{\sigma_{\tC}}(\abs{C})\),
  and we set \(\sigma_{\tC C} \deff \sigma_{\tC}\)
  and \(\Phi_{\tC C} \deff \Phi_{\tC}\).
  For every \(\FOplusParams{\sigma_{\tC C}}{p}{q}\) formula \(\gamma(\tx', \ty') \in \Phi_{\tC C}\),
  we iterate over all \(\tv \in C^{\smallabs{\tx'}}\) with \(\K_{\tC}(v_1) = C\),
  and we compute \(\bigabs{\gamma(\A_{\tC C}, \tv)}\) by brute force.
  Due to the bound on the size of \(C\),
  there is a computable function \(\fBase\) such that these computations
  take time at most \(\fBase(q, \sigma_{\tC C})\).

  Now suppose \(\tC C\) is not a leaf.
  Let \(s \deff \abs{W_{\tC C}} \leq \lambda(2r')\),
  and let \(W_{\tC C} \ffed \set{d_1, \dots, d_s}\)
  with \(d_1 < d_2 < \cdots < d_s\).
  We compute \(\A_{\tC C} \deff \A_{\tC}[C] \removaldistr d_1 \cdots d_s\)
  in time \(f'(r, \sigma', \epsilon') \cdot \abs{C}^{1+\epsilon'}\)
  for a computable function \(f'\),
  and we set \(\sigma_{\tC C} \deff \sigma(\A_{\tC C}) = \removalsigrs{\sigma_{\tC C}}\).
  For every formula \(\gamma(\tz') = \delta_{G',r}(\tz') \land \psi'(\tz') \in \Phi_{\tC}\),
  where \(G' \in \G_{k' + \ell'}\) is a connected graph
  for some \(k' \in [k]\) and \(\ell' \in [\ell]\),
  \(\psi'(\tz') \in \locFOplusParams{\sigma_{\tC}}{p}{q}\),
  \(\tx' = (x_1, \dots, x_{k'})\),
  \(\ty' = (y_1, \dots, y_{\ell'})\),
  and \(\tz' = \tx' \ty'\),
  we iterate over all functions \(\mu \colon [k' + \ell'] \to [0,s]\).
  For every such combination of a formula \(\gamma(\tz')\) and function \(\mu\),
  we apply \cref{cor:removal} to the input \(\psi'(\tz')\) and \(\mu\)
  and compute the \(\FOplusParams{\sigma_{\tC C}}{p}{q}\) formula \(\psi'_\mu(\tz'_\mu)\).
  If \(\mu(i) \neq 0\) for all \(i \in [k' + \ell']\),
  then \(\tz'_\mu = ()\),
  and we use \cref{thm:fo-testing-enumeration} to check whether
  \(\A_{\tC C} \models \psi'_\mu\)
  in time \(\fFOTE(\psi'_\mu, \sigma_{\tC C}, \epsilon') \cdot \abs{C}^{1+\epsilon'}\).
  For all other \(\mu\),
  let \(G'_\mu \deff G' \setminus \setc{i \in [k'+\ell']}{\mu(i) \neq 0}\),
  set \(\gamma_\mu(\tz'_\mu) \deff \delta_{G'_\mu, r}(\tz'_\mu) \land \psi'_\mu(\tz'_\mu)\),
  and observe that, for all \(\tu \in C^{k'+\ell'}\)
  with \(\mu(i) \neq 0\) if and only if \(u_i = d_{\mu(i)}\),
  \cref{cor:removal} implies that \(\A_{\tC}[C] \models \psi'(\tu)\)
  if and only if \(\A_{\tC C} \models \psi'_\mu(\tu_\mu)\),
  and we have \(\A_{\tC}[C] \models \gamma(\tu)\)
  if and only if \(\A_{\tC C} \models \gamma_\mu(\tu_\mu)\).
  Thus, by \cref{lem:delta-connected-local}, for all \(\tu \in C^{k'} \setminus W_{\tC C}^{k'}\)
  with \(\K_{\tC}(u_1) = C\)
  and for the set \(\CM_{\tu}\) of all \(\mu \colon [k' + \ell'] \to [0,s]\)
  with \(\mu(i) \neq 0\) if and only if \(\tu_i = d_{\mu(i)}\) for all \(i \in [k']\),
  it holds that
  \begin{equation}
    \label{eq:gamma-to-gamma-mu}
    \bigabs{\gamma(\A_{\tC}, \tu)} = \bigabs{\gamma(\A_{\tC}[C], \tu)}
    = \sum_{\mu \in \CM_{\tu}} \bigabs{\gamma_\mu(\A_{\tC C}, \tu_\mu)}.
  \end{equation}
Next, we apply \cref{thm:rank-preserving} to \(\psi'_\mu(\tz'_\mu)\) and \(G'_\mu\)
  and obtain some \(m_{G'_\mu} \in \N\),
  an \(\FO[\sigma_{\tC C}]\) sentence \(\xi^i_{G'_\mu}\) for every \(i \in [m_{G'_\mu}]\),
  and a formula \(\psi^i_{G'_\mu, I}(\tz'_I) \in \locFOplusParams{\sigma_{\tC C}}{p}{q}\)
  for every \(i \in [m_{G'_\mu}]\) and every connected component \(I\) of \(G'_\mu\).
  Using \cref{thm:fo-testing-enumeration},
  in time \(\fFOTE(\xi^i_{G'_\mu}, \sigma_{\tC C}, \epsilon') \cdot \abs{C}^{1+\epsilon'}\)
  for every \(i \in [m_{G'_\mu}]\),
  we check whether \(\A_{\tC C} \models \xi^i_{G'_\mu}\),
  and we let \(J_\mu \subseteq [m_{G'_\mu}]\) be the set of all indices where this is the case.
  For every \(j \in J_\mu\), we let
  \[\gamma^j_\mu(\tz'_\mu) \deff \delta_{G'_\mu, r}(\tz'_\mu)
  \land \Land_{I \in c(G'_\mu)} \psi^j_{G'_\mu, I}(\tz'_I).\]
  By \cref{thm:rank-preserving},
  for every \(\tu \in A_{\tC C}^{\smallabs{\tz'_\mu}}\),
  it holds that
  \begin{equation}
    \label{eq:gamma-mu-to-gamma-j-mu}
    \bigabs{\gamma_\mu(\A_{\tC C}, \tu)}
    = \sum_{j \in J_\mu} \bigabs{\gamma^j_\mu(\A_{\tC C}, \tu)}.
  \end{equation}
  Finally, we apply \cref{lem:reduction-connected} to \(G'_\mu\)
  and obtain a number \(m_\mu \in \N\),
  connected graphs \(G'_{\mu,1}, \dots, G'_{\mu, m_\mu} \subseteq G'_\mu\),
  and a polynomial \(P_\mu(X_1, \dots, X_{m_\mu}) \in \Z[X_1, \dots, X_{m_\mu}]\).
Let \(V_{\mu,i} \deff V(G'_{\mu,i})\) for \(i \in [m_\mu]\) and,
  for all \(i \in [m_\mu]\) and \(j \in J_\mu\),
  let
  \[\gamma^j_{\mu,i}(\tz'_{V_{\mu,i}}) \deff \delta_{G'_{\mu, i}, r}(\tz'_{V_{\mu,i}})
  \land \Land_{I \in c(G'_\mu), I \subseteq V_{\mu,i}} \psi^j_{G'_\mu, I}(\tz'_I).\]
  By \cref{lem:reduction-connected}, for every \(\tu \in A_{\tC C}^{\smallabs{\tx'_\mu}}\),
  it holds that
  \begin{equation}
    \label{eq:gamma-j-mu-to-connected}
    \bigabs{\gamma^j_\mu(\A_{\tC C}, \tu)}
    = P_\mu\Bigl(\bigabs{\gamma^j_{\mu,1}(\A_{\tC C}, \tu_{V_{\mu,1}})}, \dots,
    \bigabs{\gamma^j_{\mu,m_\mu}(\A_{\tC C}, \tu_{V_{\mu,m_\mu}})}\Bigr).
  \end{equation}
  If \(V_{\mu, i} \cap [k'] = \emptyset\),
  then we use \cref{thm:fo-counting-static} to compute \(\bigabs{\gamma^j_{\mu,i}(\A_{\tC C})}\)
  in time \(\fCount(\gamma^j_{\mu,i}, \sigma_{\tC C}, \epsilon') \cdot \abs{C}^{1+\epsilon'}\).
  If \(V_{\mu, i} \subseteq [k']\),
  then \(\gamma^j_{\mu,i}(\tz'_{V_{\mu, i}}) = \gamma^j_{\mu,i}(\tx'_{V_{\mu, i}})\),
  and, for all \(\tv \in A_{\tC C}^{\abs{V_{\mu, i}}}\),
  we have \(\bigabs{\gamma^j_{\mu,i}(\A_{\tC C}, \tv)} \in \set{0,1}\)
  with \(\bigabs{\gamma^j_{\mu,i}(\A_{\tC C}, \tv)} = 1\)
  if and only if \(\A_{\tC C} \models \gamma^j_{\mu,i}(\tv)\).
  In order to prepare efficient computation of
  \(\bigabs{\gamma^j_{\mu,i}(\A_{\tC C}, \tv)}\),
  we run the preprocessing of \cref{thm:fo-testing-enumeration}
  on \(\A_{\tC C}\) and \(\gamma^j_{\mu, i}(\tx'_{V_{\mu, i}})\)
  in time \(\fFOTE(\gamma^j_{\mu,i}, \sigma_{\tC C}, \epsilon') \cdot \abs{C}^{1+\epsilon'}\).
  In the remaining case, where \(V_{\mu, i} \cap [k'] \neq \emptyset\)
  and \(V_{\mu, i} \not\subseteq [k']\),
  we add \(\gamma^j_{\mu,i}(\tz'_{V_{\mu, i}})\) to \(\Phi_{\tC C}\)
  after renaming the variables such that the formula is of the right format.
After running the described procedure for all \(\gamma(\tz') \in \Phi_{\tC}\)
  and \(\mu \colon [k'+\ell'] \to [0,s]\),
  we continue the computation on the children of \(\tC C\).

  This completes the preprocessing.
  The game tree has been computed in time
  \(\fGameTree(2r', \epsilon) \cdot \abs{A}^{1+\epsilon}\).
  Since the computed game tree is a \((\lambda(2r'), 2r')\)-game tree,
  all formulae considered in the algorithm are
  \(\FOplusParams{\removalsig{\sigma}{r,\lambda(2r')}}{p}{q}\) formulae.
  As described in \cite[Section~2.5]{GroheSchweikardt_2026_Locality},
  there is a computable normal form for \(\FOplus\).
  By translating all formulae into this normal form,
  the number of formulae we need to consider is finitely bounded,
  and the bound can be computed from \(\removalsig{\sigma}{r,\lambda(2r')}\) and \(q\).
  Hence, there is a computable function \(f''\) such that,
  for every node \(\tC C\) of the game tree
  (except for the root \(()\), where we do not perform any further computations),
  the preprocessing takes time
  \(f''(q, \removalsig{\sigma}{r,\lambda(2r')}, \epsilon') \cdot \abs{C}^{1+\epsilon'}\).
  Analogously to the proof of \cref{lem:game-tree} and by the choice of \(\epsilon'\),
  it can be shown that there is a computable function \(f_{\textup{pre}}\)
  such that the whole preprocessing procedure takes time at most
  \(f_{\textup{pre}}(q, \sigma, \epsilon) \cdot \abs{A}^{1+\epsilon}\).

  After the preprocessing, whenever we are given a node \(\tC\) of the game tree,
  a formula \(\gamma(\tx', \ty') \in \Phi_{\tC}\),
  and a tuple \(\tv \in A_{\tC}^{\smallabs{\tx'}}\)
  (this includes the case \(\tC = ()\), \(\gamma(\tx', \ty') = \phi(\tx, \ty)\),
  and \(\tv \in A^k\)),
  we recursively compute \(\bigabs{\gamma(\A_{\tC}, \tv)}\) as follows.
  Let \(k' \deff \abs{\tx'}\), \(\ell' \deff \abs{\ty'}\), \(\tz' \deff \tx'\ty'\),
  and \(C \deff \K_{\tC}(v_1)\).
  In time \(\bigO(k')\), we check whether \(\tilde{v} \subseteq C\).
  If this is not the case, then, by \cref{lem:delta-connected-local},
  there is no \(\tw \in A_{\tC}^{\ell'}\) with \(\A_{\tC} \models \gamma(\tv, \tw)\).
  Thus, we return \(0\).
Otherwise, again by \cref{lem:delta-connected-local},
  we have \(\bigabs{\gamma(\A_{\tC}, \tv)} = \bigabs{\gamma(\A_{\tC}[C], \tv)}\).
  If \(\tC C\) is a leaf, then we have already precomputed \(\abs{\gamma(\A_{\tC}[C], \tv)}\),
  and we simply return this precomputed value.
  If \(\tC C\) is not a leaf,
  then we let \(s \deff \abs{W_{\tC C}}\),
  \(\set{d_1, \dots, d_s} \deff W_{\tC C}\) with \(d_1 < d_2 < \cdots < d_s\),
  and we let \(\CM_{\tv}\) be the set of all \(\mu \colon [k'+\ell'] \to [0,s]\)
  with \(\mu(i) \neq 0\) if and only if \(v_i = d_{\mu(i)}\) for all \(i \in [k']\).
  For \(\mu \in \CM_{\tv}\), let \(G'_{\mu, 1}, \dots, G'_{\mu, m_\mu}\) be the connected graphs,
  \(V_{\mu, i} \deff V(G'_{\mu, i})\) for \(i \in [m_\mu']\),
  and \(\gamma^j_{\mu,i}(\tz'_{V_{\mu, i}})\)
  for \(i \in [m_\mu]\) and \(j \in J_\mu\)
  be the formulae computed in the preprocessing from \(\gamma\) and \(\mu\).
  If \(V_{\mu, i} \cap [k'] = \emptyset\),
  then \(\bigabs{\gamma^j_{\mu,i}(\A_{\tC C}, \tv_{V_{\mu, i}})}
  = \bigabs{\gamma^j_{\mu,i}(\A_{\tC C})}\) has already been computed in the preprocessing.
  If \(V_{\mu, i} \subseteq [k']\),
  then, based on the preprocessing, we can compute
  \(\bigabs{\gamma^j_{\mu,i}(\A_{\tC C}, \tv_{V_{\mu, i}})}\)
  in time \(\fFOTE(\gamma^j_{\mu,i}, \sigma_{\tC C}, \epsilon')\).
  Otherwise, if \(V_{\mu, i} \cap [k'] \neq \emptyset\) and \(V_{\mu, i} \not\subseteq [k']\),
  then we compute \(\bigabs{\gamma^j_{\mu,i}(\A_{\tC C}, \tv_{V_{\mu, i}})}\) by recursion.
  The depth of the recursion is bounded by the height of the game tree.
  After having computed the values
  \(\bigabs{\gamma^j_{\mu,i}(\A_{\tC C}, \tv_{V_{\mu, i}})}\)
  for all \(\mu\), \(j\), and \(i\),
  we can compute \(\bigabs{\gamma(\A_{\tC}, \tv)}\) by combining
  \cref{eq:gamma-to-gamma-mu,eq:gamma-mu-to-gamma-j-mu,eq:gamma-j-mu-to-connected}.
  Since the running time for every single step in the recursion can be bounded
  in terms of \(\sigma\), \(q\), and \(\epsilon\),
  and the depth is also bounded in terms of \(q\),
  there is a computable function \(f_{\textup{lookup}}\)
  such that computing \(\bigabs{\phi(\A, \tv)}\) on input \(\tv \in A^k\)
  can be done in time \(f_{\textup{lookup}}(q, \sigma, \epsilon)\).

  Setting \(f(q, \sigma, \epsilon) \deff
  \max \set{f_{\textup{pre}}(q, \sigma, \epsilon), f_{\textup{lookup}}(q, \sigma, \epsilon)}\)
  finishes the proof of \cref{lem:foplus-counting-connected}.
\end{proof}

Based on \cref{lem:foplus-counting-connected}, we can now prove \cref{thm:foplus-counting}.

\begin{proof}[Proof of \cref{thm:foplus-counting}]
  If \(\ell = 0\),
  then \(\bigabs{\phi(\A, \tv)} \in \set{0,1}\) for all \(\tv \in A^k\),
  and \(\bigabs{\phi(\A, \tv)} = 1\) if and only if \(\A \models \phi(\tv)\).
  Hence, for this case, we can use the algorithm from \cref{thm:fo-testing-enumeration}
  with \(\fFOTE(q, \sigma, \epsilon) \cdot \abs{A}^{1+\epsilon}\) preprocessing
  that can afterwards test whether \(\A \models \phi(\tv)\) for any tuple \(\tv \in A^k\)
  in time \(\fFOTE(q, \sigma, \epsilon)\),
  and \(\fFOTE\) is a computable function.

  If \(k = 0\),
  then we only need to compute \(\bigabs{\phi(\A)}\),
  which we can do using \cref{thm:fo-counting-static}
  in time \(\fCount(\phi, \sigma, \epsilon) \cdot \abs{A}^{1+\epsilon}\),
  and \(\fCount\) is a computable function.

  Thus, in the following, assume \(k, \ell \in \Npos\).
  Let \(r \deff \delta(p,q)\),
  \(\tx \deff (x_1, \dots, x_k)\),
  \(\ty \deff (y_1, \dots, y_\ell)\),
  and \(\tz \deff \tx \ty\).
  Moreover, for \(G \in \G_{k+\ell}\), let
  \(\phi_G(\tx, \ty) \deff \delta_{G,r}(\tx, \ty) \land \phi(\tx, \ty)\).
  It holds that
  \begin{equation}
    \label{eq:foplus-counting-to-G}
    \bigabs{\phi(\A, \tv)} = \sum_{G \in \G_{k+\ell}} \bigabs{\phi_G(\A, \tv)}.
  \end{equation}
  Hence, it suffices to prove the result for formulae of the form
  \(\phi_G(\tx, \ty) = \delta_{G,r}(\tx, \ty) \land \phi(\tx, \ty)\)
  for a graph \(G \in \G_{k+\ell}\).
For this, we first apply \cref{thm:rank-preserving} to \(\phi\) and \(G\),
  and we obtain a number \(m_G \in \N\),
  an \(\FO[\sigma]\) sentence \(\xi^i_G\) for each \(i \in [m_G]\),
  and, for each \(i \in [m_G]\) and every \(I \in c(G)\),
  a formula \(\psi^i_{G,I}(\tz_I) \in \locFOplusSigmaPQ\).
  Let \[\phi^i_G(\tz) \deff \delta_{G,r}(\tz) \land \Land_{I \in c(G)} \psi^i_{G,I}(\tz_I).\]
  For every \(i \in [m_G]\),
  we use \cref{thm:fo-testing-enumeration} to check whether \(\A \models \xi^i_G\),
  and we let \(J_G\) be the set of all \(i\) where this is the case.
  Then, by \cref{thm:rank-preserving},
  we have \(\phi_G(\A) = \biguplus_{j \in J_G} \phi^j_G(\A)\).
  Hence, for every \(\tv \in A^k\), it holds that
  \begin{equation}
    \label{eq:foplus-counting-G-to-j-G}
    \bigabs{\phi_G(\A, \tv)} = \sum_{j \in J_G} \bigabs{\phi^j_G(\A, \tv)}.
  \end{equation}

  Next, we apply \cref{lem:reduction-connected} to \(G\)
  and obtain a number \(s_G \in \N\),
  connected graphs \(G_1, \dots, G_{s_G}\) with \(V_i \deff V(G_i) \subseteq V(G)\)
  for all \(i \in [s_G]\),
  and a polynomial \(P(X_1, \dots, X_{s_G}) \in \Z[X_1, \dots, X_{s_G}]\).
  For \(j \in J_G\) and \(i \in [s_G]\),
  let \[\phi^j_{G,i}(\tz_{V_i}) \deff \delta_{G_i,r}(\tz_{V_i})
  \land \Land_{I \in c(G), I \subseteq V_i} \psi^i_{G,I}(\tz_I).\]
Then, for every \(\tv \in A^k\), we have
  \begin{equation}
    \label{eq:foplus-counting-j-G-to-j-G-i}
    \bigabs{\phi^j_G(\A, \tv)}
    = P\Bigl(\bigabs{\phi^j_{G,1}(\A, \tv_{V_1})}, \dots,
    \bigabs{\phi^j_{G,s_G}(\A, \tv_{V_{s_G}})}\Bigr).
  \end{equation}
  If \(V_i \cap [k] = \emptyset\),
  then \(\bigabs{\phi^j_{G,i}(\A, \tv_{V_i})} = \bigabs{\phi^j_{G,i}(\A)}\),
  which we can compute using \cref{thm:fo-counting-static}.
  If \(V_i \subseteq [k]\),
  then \(\bigabs{\phi^j_{G,i}(\A, \tv_{V_i})} \in \set{0,1}\),
  and we run the preprocessing of \cref{thm:fo-testing-enumeration}.
  After the preprocessing, given a tuple \(\tv \in A^k\),
  we can compute \(\bigabs{\phi^j_{G,i}(\A, \tv_{V_i})}\)
  in time \(\fFOTE(\phi^j_{G,i}, \sigma, \epsilon)\).
  Finally, if \(V_i \cap [k] \neq \emptyset\) and \(V_i \not\subseteq [k]\),
  then we run the preprocessing of \cref{lem:foplus-counting-connected}.
  Also in this case, after the preprocessing and given a tuple \(\tv \in A^k\),
  we can compute \(\bigabs{\phi^j_{G,i}(\A, \tv_{V_i})}\)
  in time \(f'(q, \sigma, \epsilon)\),
  where \(f'\) is the computable function from \cref{lem:foplus-counting-connected}.

  All in all, there is a computable function \(f\) such that the preprocessing runs
  in time \(f(q, \sigma, \epsilon) \cdot \abs{A}^{1+\epsilon}\),
  and, after the preprocessing and given a tuple \(\tv \in A^k\),
  we can compute \(\bigabs{\phi^j_{G,i}(\A, \tv_{V_i})}\)
  for all \(G \in \G_k\), \(j \in J_G\), and \(i \in [s_G]\)
  and compute \(\bigabs{\phi(\A, \tv)}\) based on these results using
  \cref{eq:foplus-counting-to-G,eq:foplus-counting-G-to-j-G,eq:foplus-counting-j-G-to-j-G-i}
  in time \(f(q, \sigma, \epsilon)\).
\end{proof}
 \section{Query Answering and Enumeration for cgFOC}
\label{sec:cgfoc}

In this section, we use \cref{thm:fo-counting} to prove \cref{thm:answering-enumeration}.
First, we give the definition of the first-order logic with counting \(\FOC\)
from~\cite{KuskeSchweikardt_2017_FOCN} and its fragment
\emph{clique-guarded first-order logic with counting} (\(\cgFOC\))
from~\cite{vanBergeremLangeSchweikardt_2026_cgFOC}.

Let \(\sigma\) be a signature,
and let \(\vars\) be a fixed and countably infinite set of variables.
A \emph{\(\sigma\)-interpretation} \(\I = (\A, \beta)\) consists of a
\(\sigma\)-structure \(\A\) and an \emph{assignment} \(\beta \colon \vars \to A\).
For \(k \in \N\), pairwise distinct variables \(x_1, \dots, x_k \in \vars\),
and elements \(v_1, \dots, v_k \in A\),
we write \(\I \frac{v_1, \dots, v_k}{x_1, \dots, x_k}\)
for the interpretation \((\A, \beta \frac{v_1, \dots, v_k}{x_1, \dots, x_k})\),
where \(\beta \frac{v_1, \dots, v_k}{x_1, \dots, x_k}\) is the assignment \(\beta'\) with
\(\beta'(x_i) = v_i\) for every \(i \in [k]\)
and \(\beta'(y) = \beta(y)\) for all \(y \in \vars \setminus \set{x_1, \dots, x_k}\).

A \emph{numerical predicate collection} is a triple \((\PP, \ar, \sem{.})\)
where \(\PP\) is a countable set of \emph{predicate names},
and, to each \(\Pred \in \PP\),
\(\ar\) assigns an \emph{arity} \(\ar(\Pred) \in \Npos\)
and \(\sem{.}\) assigns a \emph{semantics}
\(\sem{\Pred} \subseteq \Z^{\ar(\Pred)}\).
For the remainder of this paper,
fix a numerical predicate collection \((\PP, \ar, \sem{.})\).
In algorithms, we will assume that machines have access to oracles
for evaluating the numerical predicates in constant time.
That is, given a predicate \(\Pred \in \PP\)
and a tuple \((i_1, \dots, i_{\ar(\Pred)})\) of integers,
answering the oracle call `\((i_1, \dots, i_{\ar(\Pred)}) \in \sem{\Pred}\)?'
takes time \(\bigO(1)\).

\begin{definition}[\(\FOC{[}\sigma{]}\)]
  \label[definition]{def:foc}
  The set of \emph{formulae} and \emph{counting terms} for \(\FOC[\sigma]\)
  is built according to the following rules.
\begin{bracketenumerate}
    \item\label{def:fo-atomic}
      \(x_1{=}x_2\) and \(R(x_1, \dots, x_k)\) are formulae for
      \(R \in \sigma\), \(k \deff \ar(R)\),
      and \(x_1, x_2, \dots, x_k \in \vars\).
      In particular, if \(\ar(R) = 0\), then \(R()\) is a formula.
    \item\label{def:fo-bool}
      If \(\phi\) and \(\psi\) are formulae,
      then \(\neg \phi\) and \((\phi \lor \psi)\) are formulae.
    \item\label{def:fo-exists}
      If \(\phi\) is a formula and \(x \in \vars\),
      then \(\exists x\, \phi\) is a formula.
    \item\label{def:foc-countingterm}
      If \(\phi\) is a formula and
      \(\tx = (x_1, \dots, x_k)\) is a tuple of \(k\) pairwise distinct variables
      for some \(k \in \N\),
      then \(\FOCCount{\tx}{\phi}\) is a counting term.
      This includes \(k=0\), so \(\FOCCount{()}{\phi}\) is a counting term.
    \item\label{def:foc-constterm}
      Every integer \(i \in \Z\) is a counting term.
    \item\label{def:foc-plustimesterm}
      If \(t_1\) and \(t_2\) are counting terms,
      then \((t_1 + t_2)\) and \((t_1 \cdot t_2)\) are also counting terms.
    \item\label{def:foc-P}
      If \(\Pred \in \PP\), \(m = \ar(\Pred)\)
      and \(t_1, \dots, t_m\) are counting terms,
      then \(\Pred(t_1, \dots, t_m)\) is a formula.
  \end{bracketenumerate}
Let \(\I = (\A, \beta)\) be a \(\sigma\)-interpretation.
  For a formula or counting term \(\xi\) from \(\FOC[\sigma]\),
  the semantics \(\sem{\xi}^\I\) is defined as follows:
\begin{bracketenumerate}
    \item
      \(\sem{x_1{=}x_2}^\I = 1\) if \(\beta(x_1) = \beta(x_2)\),
      and \(\sem{x_1{=}x_2}^\I = 0\) otherwise;
      \(\sem{R(x_1, \dots, x_k)}^\I = 1\) if
      \(\bigl(\beta(x_1), \dots, \beta(x_k)\bigr) \in R(\A)\), and
      \(\sem{R(x_1, \dots, x_k)}^\I = 0\) otherwise.
    \item
      \(\sem{\neg \phi}^\I = 1 - \sem{\phi}^\I\)
      and \(\sem{(\phi \lor \psi)} = \max \set{\sem{\phi}^\I, \sem{\psi}^\I}\).
    \item
      \(\sem{\exists x\, \phi}^\I = \max \setc{\sem{\phi}^{\I\frac{v}{x}}}{v \in A}\).
    \item \(\sem{\FOCCount{\tx}{\phi}}^\I =
      \bigabs{\bigsetc{(v_1, \dots, v_k) \in A^k}
      {\sem{\phi}^{\I \frac{v_1, \dots, v_k}{x_1, \dots, x_k}} = 1}}\),
      where \(\tx = (x_1, \dots, x_k)\).
    \item \(\sem{i}^\I = i\) for \(i \in \Z\).
    \item \(\sem{(t_1 + t_2)}^\I = \sem{t_1}^\I + \sem{t_2}^\I\)
      and \(\sem{(t_1 \cdot t_2)}^\I = \sem{t_1}^\I \cdot \sem{t_2}^\I\).
    \item \(\sem{\Pred(t_1, \dots, t_m)}^\I = 1\)
      if \((\sem{t_1}^\I, \dots, \sem{t_m}^\I) \in \sem{\Pred}\),
      and \(\sem{\Pred(t_1, \dots, t_m)}^\I = 0\) otherwise.
  \end{bracketenumerate}
\end{definition}

We write \((\phi \land \psi)\), \((\phi \rightarrow \psi)\), and \(\forall x\, \phi\)
as shorthands for \(\neg(\neg \phi \lor \neg \psi)\), \((\neg \phi \lor \psi)\),
and \(\neg \exists x\, \neg \phi\).
For counting terms \(t_1\) and \(t_2\),
we write \((t_1 - t_2)\) as a shorthand for \(\bigl(t_1 + ((-1) \cdot t_2)\bigr)\).
Counting terms \(t\) of the form \(\FOCCount{\tx}{\phi}\)
(i.e., obtained by applying rule~\eqref{def:foc-countingterm})
are also called \emph{\(\#\)-terms}.
An \emph{expression} is a formula or a counting term.
For an expression \(\xi\), we write \(\sigma(\xi)\) for the set of all
relation symbols that occur in \(\xi\).

The \emph{free variables} \(\free(\xi)\) of an expression \(\xi\) are inductively defined as follows:
\eqref{def:fo-atomic}
\(\free(x_1{=}x_2) = \set{x_1, x_2}\)
and \(\free\bigl(R(x_1, \dots, x_k)\bigr) = \set{x_1, \dots, x_k}\);
\eqref{def:fo-bool}
\(\free(\neg \phi) = \free(\phi)\)
and \(\free\bigl((\phi \lor \psi)\bigr) = \free(\phi) \cup \free(\psi)\);
\eqref{def:fo-exists}
\(\free(\exists x\, \phi) = \free(\phi) \setminus \set{x}\);
\eqref{def:foc-countingterm}
\(\free\bigl(\FOCCount{(x_1, \dots, x_k)}{\phi}\bigr)
= \free(\phi) \setminus \set{x_1, \dots, x_k}\);
\eqref{def:foc-constterm}
\(\free(i) = \emptyset\) for \(i \in \Z\);
\eqref{def:foc-plustimesterm}
\(\free\bigl((t_1 + t_2)\bigr) = \free\bigl((t_1 \cdot t_2)\bigr) = \free(t_1) \cup \free(t_2)\);
\eqref{def:foc-P} \(\free\bigl(\Pred(t_1, \dots, t_m)\bigr) = \bigcup_{i=1}^m \free(t_i)\).
We write \(\xi(z_1, \dots, z_k)\) to indicate that
\(\free(\xi) \subseteq \set{z_1, \dots, z_k}\).
A \emph{sentence} is a formula without free variables, and
a \emph{ground term} is a counting term without free variables.

For an expression \(\xi(x_1, \dots, x_k)\), a \(\sigma\)-structure \(\A\),
and a tuple \(\tv = (v_1, \dots, v_k) \in A^k\),
we set \(\sem{\xi(\tv)}^\A \deff \sem{\xi}^{(\A, \beta)}\)
for one (and hence every) assignment \(\beta\)
with \(\beta(x_i) = v_i\) for all \(i \in [k]\).

The set of formulae for \(\FO[\sigma]\)
is built according to the rules \eqref{def:fo-atomic}--\eqref{def:fo-exists}.
Based on \cref{def:foc}, the definition of \(\cgFOC\)
from~\cite{vanBergeremLangeSchweikardt_2026_cgFOC} looks as follows.

\begin{definition}[\(\cgFOC{[}\sigma{]}\)]
  \label[definition]{def:cgfoc}
  The set of \emph{formulae} and \emph{counting terms} for \(\cgFOC[\sigma]\)
  (clique-guarded first-order logic with counting) is built according to
  rules~\eqref{def:fo-atomic}--\eqref{def:foc-plustimesterm} of \cref{def:foc}
  and the following modified version of rule~\eqref{def:foc-P}.
  \begin{enumerate}[(cg1)]
    \setcounter{enumi}{6}
    \renewcommand{\labelenumi}{\textbf{(\theenumi)}}
    \renewcommand{\theenumi}{cg\arabic{enumi}}
    \item\label{def:cgfoc-P}
      If \(\Pred \in \PP\), \(m = \ar(\Pred)\),
      \(k \in \N\),
      \(R_1, \dots, R_k \in \sigma\),
      \(\tz_i \in \vars^{\ar(R_i)}\) for all \(i \in [k]\),
      and \(t_1, \dots, t_m\) are counting terms
      such that for all \(z, z' \in \bigcup_{p=1}^m \free(t_p)\) with \(z \neq z'\),
      there is some \(i \in [k]\) with \(z, z' \in \tilde{z}_i\),
      then \(\bigl(\Land_{i=1}^k R_i(\tz_i) \land \Pred(t_1, \dots, t_m)\bigr)\) is a formula.
  \end{enumerate}
\end{definition}

By \(\FOC\), we denote the union of all \(\FOC[\sigma]\) for arbitrary signatures \(\sigma\).
This applies analogously to \(\cgFOC\) and \(\FO\).

The following lemma provides an efficient reduction from \(\cgFOC\) formulae to \(\FO\) formulae.
It generalises \cite[Lemma~8]{vanBergeremLangeSchweikardt_2026_cgFOC}
from classes of locally bounded expansion to nowhere dense classes.

\begin{lemma}
  \label{lem:cgfoc-preprocessing}
  For every signature \(\sigma\) and every \(\cgFOC[\sigma]\) formula \(\phi(\tx)\),
  there is a signature \(\sigma_\phi \supseteq \sigma\)
  and an \(\FO[\sigma_\phi]\) formula \(\phi'(\tx)\)
  that are computable from \(\sigma\) and \(\phi(\tx)\)
  such that for every effectively nowhere dense graph class \(\C\),
  there is a computable function \(f\) and an algorithm that does the following.
  Given a \(\sigma\)-structure \(\A\) and an \(\epsilon \in \Qpos\),
  the algorithm computes a \(\sigma_\phi\)-expansion \(\A_\phi\) of \(\A\)
  with \( G_{\A_\phi} = G_\A\) such that,
  for all \(\tv \in A^{\abs{\tx}}\),
  it holds that \(\A \models \phi(\tv)\) if and only if \(\A_\phi \models \phi'(\tv)\).
  Furthermore, on input \(\A, \epsilon\), the algorithm runs in time
  \(f(\phi, \sigma, \epsilon) \cdot \abs{A}^{1+\epsilon}\).
\end{lemma}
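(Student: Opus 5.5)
We proceed by induction on the nesting depth of constructs of type~\eqref{def:cgfoc-P} in \(\phi\), replacing innermost such subformulae by fresh relation symbols, exactly as in \cite[Lemma~8]{vanBergeremLangeSchweikardt_2026_cgFOC}. The only place where that proof used locally bounded expansion is the evaluation of counting terms; here we use \cref{thm:fo-counting} instead. If \(\phi\) contains no subformula built by rule~\eqref{def:cgfoc-P}, then we set \(\sigma_\phi \deff \sigma\), \(\phi' \deff \phi\), and \(\A_\phi \deff \A\). Otherwise, we pick a subformula \(\chi = \bigl(\Land_{i=1}^k R_i(\tz_i) \land \Pred(t_1, \dots, t_m)\bigr)\) such that every formula occurring inside \(t_1, \dots, t_m\) is already an \(\FO\) formula. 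This is possible after replacing \(\#\)-terms and \(\Pred\)-subformulae bottom-up. Note that \(\#\)-terms and Boolean/existential constructs that do not go through \eqref{def:cgfoc-P} contain no numerical predicates, so all formulae inside the innermost \(\Pred\)-construct are \(\FO\).

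Let \(\tz = (z_1, \dots, z_n)\) list \(\bigcup_p \free(t_p)\). The guard ensures that every pair of distinct variables of \(\tz\) occurs together in some \(\tz_i\). Hence, for every tuple \(\tu\) satisfying \(\Land_i R_i(\tz_i)\), the set \(\tilde u\) is a clique in \(G_\A\) of size at most \(n\). Alternatively, \(\tu\) is contained in a tuple of a single relation when \(k = 1\); in general, it is a clique. By \cref{lem:nowhere-dense-cliques}\eqref{item:nowhere-dense-cliques}, there are at most \(\fSize(n,\epsilon)\cdot\abs{A}^{1+\epsilon}\) such cliques. We can enumerate the candidate tuples satisfying the guard within this time by the enumeration of \cref{thm:fo-testing-enumeration}\eqref{item:fo-enumeration} applied to the \(\FO\) formula \(\Land_i R_i(\tz_i)\). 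The number of answers is bounded by \(n^n\) times the number of cliques, so the total time is \(f\cdot\abs{A}^{1+\epsilon}\). For each \(\#\)-term \(\FOCCount{\ty}{\theta}\) occurring in some \(t_p\), with \(\theta(\tz,\ty)\) an \(\FO\) formula, we run the preprocessing of \cref{thm:fo-counting}. Then, for every enumerated tuple \(\tu\), we obtain the value of each \(\#\)-term in constant time, evaluate the polynomials \(t_p\), and query the oracle for \(\Pred\). This gives the set \(S_\chi\) of tuples \(\tu\) for which \(\chi\) holds.

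We introduce a fresh relation symbol \(R_\chi\) of arity \(n\), interpret it as \(S_\chi\), and replace \(\chi\) in \(\phi\) by \(R_\chi(\tz)\); if \(n = 0\), \(R_\chi\) is a \(0\)-ary relation. Since each tuple in \(S_\chi\) spans a clique in \(G_\A\), the Gaifman graph is unchanged. The new formula has one fewer rule-\eqref{def:cgfoc-P} construct, and the expanded structure still has Gaifman graph in \(\C\). The signature and resulting formula depend only on \(\phi\), \(\sigma\), so they are computable, and the induction yields \(\sigma_\phi\), \(\phi'\), and \(\A_\phi\). Each step costs \(f_i(\phi,\sigma,\epsilon)\cdot\abs{A}^{1+\epsilon}\), and the number of steps is bounded by \(\abs{\phi}\).

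The main obstacle is the time bound for computing \(S_\chi\). It requires bounding the number of guarded tuples via the clique bound together with constant-time evaluation of the counting terms with free variables \(\tz\) at each tuple. The latter is exactly what \cref{thm:fo-counting} supplies, since \(\#\)-terms may count over tuples \(\ty\) that are far from \(\tz\). A minor point to check is that the \(\epsilon\) parameters in the repeated applications compose correctly; this is fine because we apply each theorem with the same \(\epsilon\) a bounded number of times on structures of size \(\abs{A}\).
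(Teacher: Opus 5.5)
Your overall route matches the paper's: replace each rule-\eqref{def:cgfoc-P} construct by a fresh relation over clique tuples, bound their number by \cref{lem:nowhere-dense-cliques}, list them with \cref{thm:fo-testing-enumeration}, evaluate the \(\#\)-terms per tuple via \cref{thm:fo-counting}, and call the oracle for \(\Pred\). Innermost-first replacement instead of structural induction is an inessential difference. There is, however, a concrete error in how you treat the guard.

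\textbf{The gap.} Rule \eqref{def:cgfoc-P} only requires that the free variables of \(t_1,\dots,t_m\) are pairwise covered by guard atoms. The atoms \(R_i(\tz_i)\) may contain further variables that occur freely in no \(t_p\). For example, \(E(x,y) \land \Pred(t(x))\) and \(E(x,y) \land E(y,w) \land \Pred(t(x,y))\) are both legal. For such \(\chi\), three of your claims fail.
\begin{itemize}
  \item A tuple satisfying \(\Land_i R_i(\tz_i)\) over all its variables need not span a clique.
  \item The number of such tuples is not bounded by \(n^n\) times the number of cliques. The second example on a star has quadratically many answers, so enumerating that formula breaks the almost-linear bound.
  \item Replacing \(\chi\) by \(R_\chi(\tz)\) with \(R_\chi\) of arity \(n = \smallabs{\tz}\) discards the guard's constraints, so \(\phi'\) is not equivalent to \(\phi\). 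In the first example, \(\chi\) has free variables \(x, y\) and expresses \(E(x,y)\) when \(\Pred\) is true, whereas \(R_\chi(x)\) cannot express \(E(x,y)\).
\end{itemize}
Letting \(R_\chi\) range over all free variables of \(\chi\) does not rescue this. Its tuples need not be cliques, so \(G_\A\) would change, and its size could be superlinear.

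\textbf{The fix} is what the paper does: keep the guard in the formula and let the new relation encode only the numerical condition. Replace \(\chi\) by \(\Land_i R_i(\tz_i) \land R_\chi(\tz)\). Here \(R_\chi\) consists of all \(\tu \in A^n\) such that \(\tilde u\) is a clique in \(G_\A\) and \(\Pred\) holds on the values of \(t_1,\dots,t_m\) at \(\tu\). The paper lists these tuples by enumerating an \(\FO\) formula \(\phi_{\textup{clique}}(\tz)\). Equivalently, you could enumerate the projection \(\exists \bar w\, \Land_i R_i(\tz_i)\) onto \(\tz\). In both cases you enumerate only over \(\tz\), so there are at most \(n^n \cdot \fSize(n,\epsilon)\cdot\abs{A}^{1+\epsilon}\) tuples.

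This repair has three consequences.
\begin{itemize}
  \item Every tuple of \(R_\chi\) spans a clique, so the Gaifman graph is preserved.
  \item The equivalence holds: whenever the guard is satisfied, the values assigned to \(\tz\) form a clique, and the terms depend only on these values.
  \item The rest of your argument, including the time analysis, goes through unchanged.
\end{itemize}
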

\begin{proof}
  Let \(\fCountQA\) be the computable function from \cref{thm:fo-counting} for \(\C\).
  We prove the result by structural induction on \(\phi\).
  If \(\phi\) is built without the use of rule~\eqref{def:cgfoc-P} of \cref{def:cgfoc},
  then \(\phi\) is an \(\FO[\sigma]\) formula,
  and we can simply set \(\sigma_\phi \deff \sigma\),
  \(\phi' \deff \phi\), and \(\A_\phi \deff \A\).

  If \(\phi\) is of the form \(\neg \psi\),
  then we apply the induction hypothesis on \(\psi\) to obtain
  a signature \(\sigma_\psi \supseteq \sigma\),
  an \(\FO[\sigma_\psi]\) formula \(\psi'(\tx)\),
  and a \(\sigma_\psi\)-expansion \(\A_\psi\) of \(\A\)
  with the desired properties.
  We set \(\sigma_\phi \deff \sigma_\psi\),
  \(\phi' \deff \neg \psi'\),
  and \(\A_\phi \deff \A_\psi\).

  If \(\phi\) is of the form \(\psi_1 \lor \psi_2\),
  then we apply the induction hypothesis on \(\psi_1\) and \(\psi_2\) to obtain
  signatures \(\sigma_{\psi_1}, \sigma_{\psi_2} \supseteq \sigma\),
  formulae \(\psi_1'(\tx), \psi_2'(\tx)\),
  and structures \(\A_{\psi_1}, \A_{\psi_2}\)
  with the desired properties.
  Without loss of generality,
  by possibly renaming the relation symbols in \(\sigma_{\psi_2} \setminus \sigma\),
  we may assume that \(\sigma_{\psi_1} \cap \sigma_{\psi_2} = \sigma\).
  We set \(\sigma_\phi \deff \sigma_{\psi_1} \cup \sigma_{\psi_2}\),
  \(\phi' \deff \psi'_1 \lor \psi'_2\),
  and we let \(\A_\phi\) be the \(\sigma_\phi\)-expansion of \(\A_{\psi_1}\) and \(\A_{\psi_2}\).

  If \(\phi\) is of the form \(\exists x\, \psi\),
  we proceed analogously to the case \(\neg \psi\),
  and we set \(\sigma_\phi \deff \sigma_\psi\),
  \(\phi' \deff \exists x\, \psi'\),
  and \(\A_\phi \deff \A_\psi\).

  In all these cases, the computation of \(\A_\phi\)
  trivially runs in time linear in \(\norm{\A_\phi}\),
  based on the recursively computed structure(s) for the respective subformula(e).
  Furthermore, by \cref{lem:nowhere-dense-cliques},
  \(\norm{\A_\phi} \leq \smallabs{\sigma_\phi} \cdot \fSize(k, \epsilon) \cdot \abs{A}^{1+\epsilon}\),
  where \(k\) is the maximum arity of a relation symbol in \(\sigma_\phi\),
  and \(\sigma_\phi\) and \(k\) can be computed from \(\phi\) and \(\sigma\).

  It remains to consider the case where \(\phi\) is the result of applying rule~\eqref{def:cgfoc-P},
  that is, \(\phi\) is of the form
  \(\Land_{i=1}^k R_i(z_{i,1}, \dots, z_{i, \ar(R_i)}) \land \Pred(t_1, \dots, t_m)\).
  Let \(\tx'\) be a tuple of pairwise distinct variables such that
  \(\tilde{x}' = \bigcup_{i=1}^m \free(t_i)\).
We set \(\sigma_\phi \deff \sigma \uplus \set{R_\phi}\)
  for a fresh relation symbol \(R_\phi\) of arity \(\ell \deff \ar(R_\phi) \deff \abs{\tx'}\),
  and we set \(\phi'(\tx) \deff
  \Land_{i=1}^k R_i(z_{i,1}, \dots, z_{i, \ar(R_i)}) \land R_\phi(\tx')\).
  Furthermore, we let \(\A_\phi\) be the \(\sigma_\phi\)-expansion of \(\A\)
  where \(R_\phi(\A_\phi)\) is the set of all tuples \(\tw \in A^\ell\)
  that form a clique in \(G_\A\) and where
  \((\sem{t_1}^{(\A, \beta_{\tw})}, \dots, \sem{t_m}^{(\A, \beta_{\tw})}) \in \sem{\Pred}\)
  for \(\beta_{\tw}(x'_i) \deff w_i\) for all \(i \in [\ell]\).
  Since all tuples of \(R_\phi(\A_\phi)\) form a clique in \(G_\A\),
  it holds that \(G_\A = G_{\A_\phi}\).
  Moreover, for all \(\tv \in A^{\abs{\tx}}\),
  it holds that \(\A \models \phi(\tv)\) if and only if \(\A_\phi \models \phi'(\tv)\).

  There is an \(\FO[\sigma]\) formula \(\phi_{\textup{clique}}(\tx')\)
  that only depends on \(\sigma\) and \(\tx'\)
  such that, for all \(\tw \in A^{\smallabs{\tx'}}\),
  it holds that \(\A \models \phi_{\textup{clique}}(\tw)\)
  if and only if \(\tw\) forms a clique in \(G_\A\).
  Moreover, by \cref{lem:nowhere-dense-cliques}, the number of such tuples \(\tw\)
  is bounded by \(\fSize(\ell, \epsilon) \cdot \abs{A}^{1+\epsilon}\).
  Hence, by applying the enumeration result of \cref{thm:fo-testing-enumeration}
  to \(\A\) and \(\phi_{\textup{clique}}\),
  we can compute a list of all such tuples \(\tw\) in time
  \begin{align*}
    &\fFOTE(\phi_{\textup{clique}}, \sigma, \epsilon) \cdot \abs{A}^{1+\epsilon}
    + \fFOTE(\phi_{\textup{clique}}, \sigma, \epsilon) \cdot
    \fSize(\ell, \epsilon) \cdot \abs{A}^{1+\epsilon}\\
    =
    \ &\fFOTE(\phi_{\textup{clique}}, \sigma, \epsilon)
    \cdot \bigl(1+\fSize(\ell, \epsilon)\bigr) \cdot \abs{A}^{1+\epsilon}.
  \end{align*}

  It remains to check for every such tuple \(\tw\) whether we have
  \((\sem{t_1}^{(\A, \beta_{\tw})}, \dots, \sem{t_m}^{(\A, \beta_{\tw})}) \in \sem{\Pred}\).
  For this, note that the counting terms \(t_1, \dots, t_m\) are built using
  rules~\eqref{def:foc-countingterm}--\eqref{def:foc-plustimesterm},
  that is, they are built using integer constants, addition, multiplication,
  and \#-terms of the form \(\FOCCount{\ty}{\psi}\).
  We recursively compute \(\sigma_\psi\), \(\psi'\), and \(\A_\psi\).
  By the induction hypothesis, these can be computed in time
  \(f(\psi, \sigma, \epsilon) \cdot \abs{A}^{1+\epsilon}\).
  Then \(\sem{\FOCCount{\ty}{\psi}}^{(\A, \beta_{\tw})}
  = \sem{\FOCCount{\ty}{\psi'}}^{(\A_\psi, \beta_{\tw})}\),
  and \(\psi' \in \FO[\sigma_\psi]\).
  Hence, we can run the algorithm from \cref{thm:fo-counting} on \(\A_\psi\) and \(\psi'\).
  After preprocessing in time
  \(\fCountQA(\psi', \sigma_\psi, \epsilon) \cdot \abs{A}^{1+\epsilon}\),
  we can compute \(\sem{\FOCCount{\ty}{\psi'}}^{(\A_\psi, \beta_{\tw})}\)
  for a single tuple \(\tw\) in time \(\fCountQA(\psi', \sigma_\psi, \epsilon)\),
  so we can compute the results after the preprocessing in total time
  \(\fCountQA(\psi', \sigma_\psi, \epsilon) \cdot \fSize(\ell, \epsilon) \cdot \abs{A}^{1+\epsilon}\)
  for all tuples \(\tw \in A^\ell\) that form a clique in \(G_\A\).

  Doing this for every \#-term occurring in \(t_1, \dots, t_m\) allows us to compute the values
  \(\sem{t_1}^{(\A, \beta_{\tw})}, \dots, \sem{t_m}^{(\A, \beta_{\tw})}\)
  for all tuples \(\tw \in A^\ell\) that form a clique in \(G_\A\)
  in time \(f'(\phi, \sigma, \epsilon) \cdot \abs{A}^{1+\epsilon}\)
  for a computable function \(f'\) that only depends on \(\C\).
  Once we have computed \(\sem{t_1}^{(\A, \beta_{\tw})}, \dots, \sem{t_m}^{(\A, \beta_{\tw})}\),
  we can check whether \(\tw \in R_\phi(\A_\phi)\)
  with an oracle call to \(\sem{\Pred}\) in time \(\bigO(1)\).

  Thus, all in all we can define a computable function \(f\)
  based on \(\fSize\), \(\fCountQA\), and \(\fFOTE\)
  such that we can compute \(\A_\phi\)
  in time \(f(\phi, \sigma, \epsilon) \cdot \abs{A}^{1+\epsilon}\).
\end{proof}

Based on \cref{lem:cgfoc-preprocessing},
we can now prove \cref{thm:answering-enumeration},
which we repeat for convenience.

\answeringAndEnumeration*

\begin{proof}
  First, we prove \cref{thm:answering-enumeration}\ref{item:answering}.
  Let
  \(\fCountQA\),
  \(\fFOTE\),
  \(\fSize\),
  and \(\fPre\),
  be the functions from
  \cref{thm:fo-counting},
  \cref{thm:fo-testing-enumeration},
  \cref{lem:nowhere-dense-cliques},
  and \cref{lem:cgfoc-preprocessing},
  respectively.

  If \(\sigma \neq \sigma(\xi)\),
  then we can let \(\A'\) be the \(\sigma(\xi)\)-reduct of \(\A'\)
  and continue with \(\A'\) instead of \(\A\).
  By \cref{lem:nowhere-dense-cliques},
  \(\norm{\A} \leq \abs{\sigma} \cdot \fSize(k, \epsilon) \cdot \abs{A}^{1+\epsilon}\),
  where \(k\) is the maximum arity of a relation symbol in \(\sigma\).
  Since we can compute \(\A'\) from \(\A\) in time linear in \(\norm{\A}\),
  in the following, we may assume that \(\sigma = \sigma(\xi)\).

  If \(\xi(\tx) = \phi(\tx)\) for a \(\cgFOC\) formula \(\phi\),
  then we apply \cref{lem:cgfoc-preprocessing}
  to \(\A\) and \(\phi\), and, in time
  \(\fPre(\phi, \sigma, \epsilon) \cdot \abs{A}^{1+\epsilon}\),
  we obtain a signature \(\sigma_\phi \supseteq \sigma\),
  an \(\FO[\sigma_\phi]\) formula \(\phi'\),
  and a \(\sigma_\phi\)-expansion \(\A_\phi\) of \(\A\) with \(G_\A = G_{\A_\phi}\)
  such that for all \(\tv \in A^{\abs{\tx}}\),
  we have \(\A \models \phi(\tv)\) if and only if \(\A_\phi \models \phi'(\tv)\).
Then, since \(G_{\A_\phi} \in \C\),
  we can apply \cref{thm:fo-testing-enumeration} to \(\A_\phi\) and \(\phi'\).
  Thus, after preprocessing in time
  \(\fFOTE(\phi', \sigma_\phi, \epsilon) \cdot \abs{A}^{1+\epsilon}\),
  given a tuple \(\tv \in A^{\abs{\tx}}\),
  we can output \(\sem{\phi'(\tv)}^{\A_\phi} = \sem{\phi(\tv)}^\A\)
  in time \(\fFOTE(\phi', \sigma_\phi, \epsilon)\).
  Since \(\sigma_\phi\) and \(\phi'\) only depend on \(\phi\) and \(\sigma\),
  we can set
  \(f_a(\phi, \sigma, \epsilon) \deff \fPre (\phi, \sigma, \epsilon)
  + \fFOTE(\phi', \sigma_\phi, \epsilon)\),
  which then satisfies the requirements of \cref{thm:answering-enumeration}\ref{item:answering}.

  If \(\xi(\tx)\) is a \(\cgFOC\) counting term,
  then \(\xi\) is built using rules~\eqref{def:foc-countingterm}--\eqref{def:foc-plustimesterm}
  of \cref{def:cgfoc},
  that is, it is built using integer constants, addition, multiplication,
  and \#-terms of the form \(\FOCCount{\ty}{\psi(\tx', \ty)}\).
  For every such \#-term \(\FOCCount{\ty}{\psi(\tx', \ty)}\),
  we apply \cref{lem:cgfoc-preprocessing} to \(\A\) and \(\psi\), and, in time
  \(\fPre(\psi, \sigma, \epsilon) \cdot \abs{A}^{1+\epsilon}\),
  we obtain a signature \(\sigma_\psi \supseteq \sigma\),
  an \(\FO[\sigma_\psi]\) formula \(\psi'\),
  and a \(\sigma_\psi\)-expansion \(\A_\psi\) of \(\A\) with \(G_\A = G_{\A_\psi}\)
  such that for all \(\tv' \in A^{\smallabs{\tx'}}\),
  we have \(\A \models \psi(\tv')\) if and only if \(\A_\psi \models \psi'(\tv')\).
  Hence, it holds that
  \(\sem{\FOCCount{\ty}{\psi(\tv', \ty)}}^\A = \sem{\FOCCount{\ty}{\psi'(\tv', \ty)}}^{\A_\psi}\).
  Since \(\psi'\) is a first-order formula and \(G_{\A_\psi} \in \C\),
  we can apply \cref{thm:fo-counting} to \(\A_\psi\) and \(\psi'\).
  Thus, after preprocessing in time
  \(\fCountQA(\psi', \sigma_\psi, \epsilon) \cdot \abs{A}^{1+\epsilon}\),
  given a tuple \(\tv' \in A^{\smallabs{\tx'}}\),
  we can compute
  \(\sem{\FOCCount{\ty}{\psi(\tv', \ty)}}^\A =
  \sem{\FOCCount{\ty}{\psi'(\tv', \ty)}}^{\A_\psi} =
  \psi'(\A_\psi, \tv')\)
  in time \(\fCountQA(\psi', \sigma_\psi, \epsilon)\).
  Hence, whenever we are given a tuple \(\tv \in A^{\abs{\tx}}\),
  we can efficiently compute the results of all \#-terms that \(\xi\) consists of,
  and then compute \(\sem{\xi(\tv)}^\A\) with additions and multiplications.
  Thus, all in all, we can set \(f_a(\xi, \sigma, \epsilon)\) such that the preprocessing
  takes time at most \(f_a(\xi, \sigma, \epsilon) \cdot \abs{A}^{1+\epsilon}\)
  and, given \(\tv \in A^{\abs{\tx}}\), the computation of \(\sem{\xi(\tv)}^\A\)
  takes time at most \(f_a(\xi, \sigma, \epsilon)\).

  Now, we prove \cref{thm:answering-enumeration}\ref{item:enumeration}.
  As above, we may assume that \(\sigma = \sigma(\xi)\).
We apply \cref{lem:cgfoc-preprocessing} to \(\A\) and \(\xi\), and, in time
  \(\fPre(\xi, \sigma, \epsilon) \cdot \abs{A}^{1+\epsilon}\),
  we obtain a signature \(\sigma_\xi \supseteq \sigma\),
  an \(\FO[\sigma_\xi]\) formula \(\xi'\),
  and a \(\sigma_\xi\)-expansion \(\A_\xi\) of \(\A\) with \(G_\A = G_{\A_\xi}\)
  such that for all \(\tv \in A^{\abs{\tx}}\),
  we have \(\A \models \xi(\tv)\) if and only if \(\A_\phi \models \xi'(\tv)\).
Then, since \(G_{\A_\xi} \in \C\),
  we can apply \cref{thm:fo-testing-enumeration} to \(\A_\xi\) and \(\xi'\).
  Thus, after preprocessing in time
  \(\fFOTE(\xi', \sigma_\xi, \epsilon) \cdot \abs{A}^{1+\epsilon}\),
  we can enumerate all tuples \(\tv \in A^{\abs{\tx}}\) such that \(\A \models \xi[\tv]\)
  with \(\fFOTE(\xi', \sigma_\xi, \epsilon)\) delay in lexicographic order,
  without duplicates.
Since \(\sigma_\xi\) and \(\xi'\) only depend on \(\xi\) and \(\sigma\),
  we can set
  \(f_b(\xi, \sigma, \epsilon) \deff \fPre(\xi, \sigma, \epsilon)
  + \fFOTE(\xi', \sigma_\xi, \epsilon)\),
  which then satisfies the requirements of \cref{thm:answering-enumeration}\ref{item:enumeration}.

  Finally, it suffices to set \(f(\xi, \sigma, \epsilon)
  \deff \max\bigl(f_a(\xi, \sigma, \epsilon), f_b(\xi, \sigma, \epsilon)\bigr)\).
\end{proof}
 \section{Proof of Lemma~\ref{lem:game-tree}}
\label{sec:game-tree-proof}

\gameTree*

\begin{proof}
  Let \(\lambda, t\) be the computable functions from \cref{lem:winning-strategy},
  let \(\fNC\) be the computable function from \cref{thm:neighbourhood-cover},
  and let \(\fSize\) be the computable function from \cref{lem:nowhere-dense-cliques} for \(\C\).
  On input \(r\), \(G\), and \(\epsilon\),
  we choose \(\epsilon' \in \Qpos\) with \(0 < \epsilon' \leq (1+\epsilon)^{1/(t(2r)+1)} - 1\),
  which implies \((1+\epsilon')^{t(2r)+1} \leq 1 + \epsilon\).

  For the construction of the distance-\(r\) neighbourhood cover \(\K_{\tC}\) of a graph \(G_{\tC}\),
  we use \cref{thm:neighbourhood-cover} on input \(\epsilon', r, G\),
  and we let \(\centre_{\tC}\) be the centre function computed by \cref{thm:neighbourhood-cover}.
  This guarantees that \(\K_{\tC}\) has radius at most \(2r\),
  overlap at most \(\fNC(r, \epsilon') \cdot \abs{V(G_{\tC})}^{\epsilon'}\),
  and that \(C = \neighb{2r}{G_{\tC}[C]}{\centre_{\tC}(C)}\) for every cluster \(C \in \K_{\tC}\).

  Furthermore, for the computation of Splitter's answer,
  we use \cref{lem:winning-strategy}.
  This guarantees that Splitter wins the \((\lambda(2r), 2r)\)-splitter game
  in at most \(t(2r)\) rounds, and \(T\) has height at most \(t(2r)\).

  It remains to show that the computation runs in time
  \(\fGameTree(r, \epsilon) \cdot \abs{V(G)}^{1+\epsilon}\)
  for a suitable function \(\fGameTree\).
  The computation of \(\K_{()}\) and \(\centre_{()}\) takes time
  \(\fNC(r, \epsilon') \cdot \abs{V(G)}^{1+\epsilon'}\).
  Moreover, for every node \(\tC\) of \(T\) that is not a leaf,
  and for every cluster \(C \in \K_{\tC}\),
  the computation of Splitter's answer \(W_{\tC C}\) takes time
  \(\bigO\bigl(\norm{H_{\tC C}} + 2r \cdot t(2r)\bigr)
  \leq \bigO\bigl(\fSize(2,\epsilon') \cdot \abs{C}^{1+\epsilon'} + 2r \cdot t(2r)\bigr)\).
  If \(\tC C\) is not a leaf,
  then we also compute the distance-\(r\) neighbourhood cover \(\K_{\tC C}\)
  and the centre function \(\centre_{\tC C}\).
  This takes time \(\fNC(r, \epsilon') \cdot \abs{C}^{1+\epsilon'}\),
  and the neighbourhood cover has overlap at most \(\fNC(r, \epsilon') \cdot \abs{C}^{\epsilon'}\).
  Thus, there is a computable function \(f'\) such that the computation
  of all children at a non-leaf node \(\tC\) takes time at most
  \begin{align*}
    \sum_{C \in \K_{\tC}} f'(r, \epsilon') \cdot \abs{C}^{1+\epsilon'}
    &\leq f'(r, \epsilon') \cdot \Bigl(\sum_{C \in \K_{\tC}} \abs{C}\Bigr)^{1+\epsilon'}\\
    &\leq f'(r, \epsilon') \cdot \Bigl(\fNC(r, \epsilon') \cdot \abs{V(G_{\tC})}^{1+\epsilon'}\Bigr)^{1+\epsilon'}\\
    &= f'(r, \epsilon') \cdot \fNC(r, \epsilon')^{1+\epsilon'} \cdot \abs{V(G_{\tC})}^{(1+\epsilon')^2}.
  \end{align*}
  Here, the first inequality uses the triangle inequality
  \(\bigl(\sum_{i=1}^m x_i^p\bigr)^{1/p} \leq \sum_{i=1}^m x_i\) for all \(m \in \N\)
  and \(x_1, \dots, x_m \geq 0\), and \(p \geq 1\),
  which implies \(\sum_{i=1}^m x_i^p \leq \bigl(\sum_{i=1}^m x_i\bigr)^p\).

  By induction on \(i \in \N\), we show that
  \[\sum_{\tC \in V(T), \smallabs{\tC} = i} \abs{V(G_{\tC})}
  \leq \fNC(r, \epsilon')^{\bigl((1+\epsilon')^i-1\bigr)/\epsilon'} \cdot \abs{V(G)}^{(1+\epsilon')^i},\]
  which bounds the combined size of all graphs \(G_{\tC}\) for nodes \(\tC\) at depth \(i\) in \(T\).
  Indeed, for \(i = 0\), the statement trivially holds.
  Moreover, we have
  \begin{align*}
    \sum_{\tC \in V(T), \smallabs{\tC} = i+1} \abs{V(G_{\tC})}
    &\leq \sum_{\tC \in V(T), \smallabs{\tC} = i} \sum_{C \in \K_{\tC}} \abs{C}\\
    &\leq \sum_{\tC \in V(T), \smallabs{\tC} = i}
    \fNC(r, \epsilon') \cdot \abs{V(G_{\tC})}^{1+\epsilon'}\\
    &\leq \fNC(r, \epsilon') \cdot \Bigl(\sum_{\tC \in V(T), \smallabs{\tC} = i}
    \abs{V(G_{\tC})}\Bigr)^{1+\epsilon'}\\
    &\leq \fNC(r, \epsilon') \cdot \Bigl(\fNC(r, \epsilon')^{\bigl((1+\epsilon')^i-1\bigr)/\epsilon'}
    \cdot \abs{V(G)}^{(1+\epsilon')^i}\Bigr)^{1+\epsilon'}\\
    &= \fNC(r, \epsilon')^{\bigl((1+\epsilon')^{i+1}-1\bigr)/\epsilon'}
    \cdot \abs{V(G)}^{(1+\epsilon')^{i+1}},
  \end{align*}
  which is the statement for \(i+1\) instead of \(i\).

  We combine the bound on the running time for computing all children of a node \(\tC\)
  with the bound on the combined size of all graphs at depth \(i\).
  Furthermore, we use the fact that \(\epsilon'\) can be computed from \(\epsilon\),
  the tree \(T\) has height at most \(t(2r)\),
  and the whole tree is computed by running the above-mentioned steps
  for all nodes up to depth \(t(2r)-1\) (which also creates the leaves at depth \(t(2r)\)).
  With this, we obtain that there is a computable function \(\fGameTree\)
  such that the computation of the whole game graph takes time at most
  \(\fGameTree(r, \epsilon) \cdot \abs{V(G)}^{(1+\epsilon')^{t(2r)+1}}
  \leq \fGameTree(r, \epsilon) \cdot \abs{V(G)}^{1+\epsilon}\).
\end{proof}
 \section{Proof of Lemmas~\ref{lem:reduction-connected} and~\ref{lem:delta-connected-local}}
\label{sec:tools-formulae-proofs}

\reductionConnected*
\begin{proof}
  We prove the result by the number of connected components of \(G\).
  If \(G\) is connected, then we choose \(m \deff 1\), \(G_1 \deff G\), and \(p(X_1) \deff X_1\),
  which implies that \(\phi_{G,1}(\tx_{V(G_1)}) = \phi_{G,1}(\tx) = \phi_G(\tx)\).

  Now suppose \(G\) is not connected, let \(V_1 \in c(G)\) be the connected component of \(G\)
  containing the vertex \(1\),
  and let \(\tilde{V} \deff V(G) \setminus V_1\).
  We let \(G_1 \deff G[V_1]\) and \(\tilde{G} \deff G[\tilde{V}]\).
  Then \(G_1\) is connected, and \(\tilde{G}\) has \(\abs{c(G)}-1\) connected components.
  We apply the induction hypothesis to \(\tilde{G}\)
  and compute \(\tilde{m} \in \N\),
  connected graphs \(\tilde{G}_1, \dots, \tilde{G}_{\tilde{m}}\)
  with \(V(\tilde{G}_i) \subseteq V(\tilde{G})\),
  and a polynomial \(\tilde{P}[X_1, \dots, X_{\tilde{m}}]\).

  Let \(\G_{\neg G}\) be the set of graphs \(H \in \G_k\) with
  \(H[V_1] = G[V_1]\), \(H[\tilde{V}] = G[\tilde{V}]\), and \(H \neq G\).
  Since such a graph \(H\) coincides with \(G\) on \(V_1\) and \(\tilde{V}\),
  and there are no edges between \(V_1\) and \(\tilde{V}\) in \(G\),
  the graph \(H\) needs to have an edge between \(V_1\) and \(\tilde{V}\),
  connecting two connected components of \(G\),
  so \(H\) has less connected components than \(G\).
  Let \(g \deff \abs{\G_{\neg G}}\),
  and let \(H_1, \dots, H_g\) be such that \(\G_{\neg G} = \set{H_1, \dots, H_g}\).
  For every \(i \in [g]\), we apply the induction hypothesis to \(H_i\)
  and compute \(m_i \in \N\),
  connected graphs \(H_{i,1}, \dots, H_{i, m_i}\)
  with \(V(H_{i,j}) \subseteq V(H_i)\),
  and a polynomial \(P_i[X_1, \dots, X_{m_i}]\).

  Set \(m \deff 1 + \tilde{m} + \sum_{i=1}^g m_i\),
  and let \(G_{i+1} \deff \tilde{G}_i\) and \(V_{i+1} \deff V(\tilde{G}_i)\)
  for all \(i \in [\tilde{m}]\).
  We let \(\mu \colon \N^2 \to \N\) be a function with
  \(\mu(i,j) \deff 1 + \tilde{m} + j + \sum_{s=1}^{i-1} m_s\)
  for all \(i \in [g]\) and \(j \in [m_i]\).
  This function can be used to enumerate the graphs \(H_{i,j}\).
  That is, we set \(G_{\mu(i,j)} \deff H_{i,j}\)
  and \(V_{\mu(i,j)} \deff V(H_{i,j})\)
  for all \(i \in [g]\) and \(j \in [m_i]\).
  Furthermore, we set
  \[P[X_1, \dots, X_m] \deff X_1 \cdot \tilde{P}(X_2, \dots, X_{\tilde{m} + 1})
  - \sum_{i \in [g]} P_i(X_{\mu(i,1)}, \dots, X_{\mu(i,m_i)}).\]
Let
  \begin{align*}
    \phi_{G,i}(\tx_{V_i})
    &\deff \delta_{G_i,r}(\tx_{V_i}) \land \psi_{G,V_i}(\tx_{V_i}) \text{ for all } i \in [m],\\
    \phi_{\tilde{G}}(\tx_{\tilde{V}})
    &\deff \delta_{\tilde{G},r}(\tx_{\tilde{V}})
    \land \Land_{I \in c(\tilde{G})} \psi_{G,I}(\tx_I),\\
    \intertext{and, for \(i \in [g]\), let}
    \phi_{H_i}(\tx) &\deff \delta_{H_i,r}(\tx) \land \Land_{I \in c(G)} \psi_{G,I}(\tx_I).
  \end{align*}
  Without loss of generality, in order to simplify notation,
  assume \(V_1 = \set{1, \dots, s}\) for some \(s \in \N\).
  For every \(\sigma\)-structure \(\A\),
  it holds that
  \(\phi_{G,1}(\A) \times \phi_{\tilde{G}}(\A)
  = \phi_G(\A) \uplus \biguplus_{i \in [g]} \phi_{H_i}(\A)\).
  Thus, for every \(\tv \in A^\ell\) for some \(\ell \leq k\),
  we have
  \[\abs{\phi_G(\A, \tv)} = \abs{\phi_{G_1}(\A, \tv_{V_1})}
  \cdot \abs{\phi_{\tilde{G}}(\A, \tv_{\tilde{V}})}
  - \sum_{i \in [g]} \abs{\phi_{H_i}(\A, \tv)}.\]
  Moreover, by the induction hypothesis, it holds that
  \[\abs{\phi_{\tilde{G}}(\A, \tv_{\tilde{V}})} = \tilde{P}\bigl(\abs{\phi_{G,2}(\A, \tv_{V_2})},
  \dots, \abs{\phi_{G,\tilde{m}+1}(\A, \tv_{V_{\tilde{m}+1}})}\bigr)\]
  and
  \[\abs{\phi_{H_i}(\A, \tv)} = P_i\bigl(\bigabs{\phi_{G,\mu(i,1)}(\A, \tv_{V_{\mu(i,1)}})},
  \dots, \bigabs{\phi_{G,\mu(i,m_i)}(\A, \tv_{V_{\mu(i,m_i)}})}\bigr)\]
  for all \(i \in [g]\).
  All in all, this shows that
  \[\abs{\phi_G(\A, \tv)}
  = P\bigl(\abs{\phi_{G,1}(\A, \tv_{V_1})}, \dots, \abs{\phi_{G,m}(\A, \tv_{V_m})}\bigr),\]
  which concludes the proof.
\end{proof}

\deltaConnectedLocal*
\begin{proof}
  For the forward direction,
  suppose we have \(\A \models \phi(\tv)\).
  Since \(G\) is connected,
  having \(\A \models \delta_{G,r}(\tv)\) implies that
  \(\dist^\A(v_1, v_i) \leq (k-1) \cdot r\) for all \(i \in [k]\),
  so \(\tilde{v} \subseteq \neighbA{(k-1)r}{v_1}\)
  and \(\nrA{\tv} \subseteq \neighbA{kr}{v_1} \subseteq S\).
  In particular, \(\tv \in S^k\).
  Furthermore, \(\phi(\tx) \in \locFOplusSigmaPQ\),
  so \(\phi(\tx)\) is \(r\)-local by \cref{lem:locfo-locality}.
  Thus, \(\A \models \phi(\tv)\) if and only if
  \(\NrA{\tv} \models \phi(\tv)\)
  if and only if \(\A[S] \models \phi(\tv)\).

  For the backward direction,
  suppose we have \(\tv \in S^k\) and \(\A[S] \models \phi(\tv)\).
  Again, \(\A[S] \models \delta_{G,r}(\tv)\) implies that
  \(\dist^{\A[S]}(v_1, v_i) \leq (k-1) \cdot r\) for all \(i \in [k]\),
  which implies that
  \(\dist^{\A}(v_1, v_i) \leq (k-1) \cdot r\) for all \(i \in [k]\).
  Thus, \(\nrA{\tv} \subseteq \neighbA{kr}{v_1} \subseteq S\).
  By \(r\)-locality of \(\phi(\tx)\), again,
  we have \(\A[S] \models \phi(\tv)\)
  if and only if \(\NrA{\tv} \models \phi(\tv)\)
  if and only if \(\A \models \phi(\tv)\).
\end{proof}
 \section{Conclusion}
\label{sec:conclusion}

In this paper, for every effectively nowhere dense class \(\C\) of relational structures,
we have presented an algorithm that can compute
\(\bigabs{\phi(\A, \tv)} = \abs{\bigsetc{\tw \in A^{\abs{\ty}}}{\A \models \phi(\tv, \tw)}}\),
\ie the number of \(\tw \in A^{\abs{\ty}}\) with \(\A \models \phi(\tv, \tw)\),
in constant time for any given tuple \(\tv \in A^{\abs{\tx}}\)
after performing an almost-linear-time preprocessing step
on a given structure \(\A\) from \(\C\) and a given first-order formula \(\phi(\tx, \ty)\)
(\cref{thm:fo-counting}).
Based on this algorithm, we have presented algorithms for constant-time query answering
and constant-delay enumeration after almost-linear-time preprocessing
for the first-order logic with counting \(\cgFOC\) on effectively nowhere dense classes
(\cref{thm:answering-enumeration}).

For the proof of \cref{thm:fo-counting},
we have introduced the concept of \emph{game trees},
and we have proved in \cref{lem:game-tree} that these can be computed efficiently
for effectively nowhere dense classes.
In addition, with \cref{lem:reduction-connected},
we have provided a tool for turning interdependent computations in various regions of a structure
into independent computations in connected regions.
We believe that these concepts and results can be of independent interest.

As discussed in the introduction,
the results of \cref{thm:answering-enumeration} are optimal:
they cannot be generalised to monotone classes
(that is, classes closed under taking subgraphs)
beyond nowhere dense classes,
slight generalisations of \(\cgFOC\) are intractable on very simple sparse classes,
and even very restrictive variants of \(\cgFOC\) are already intractable on simple dense classes.
Meanwhile, it remains an interesting open question whether \cref{thm:fo-counting}
can be generalised to well-behaved dense classes
such as structurally nowhere dense classes,
monadically stable classes,
or classes of bounded twin-width or bounded merge-width.
Based on this, we believe that \(\FOC_1\) could be a good candidate
for a tractable first-order logic with counting on dense classes.

\bibliography{main}

\appendix
\crefalias{section}{appendix}
\crefalias{subsection}{appendix}
\section*{Appendix}
\section{Details on the Radius in Theorem~\ref{thm:neighbourhood-cover}}

As stated in \cref{thm:neighbourhood-cover},
for a nowhere dense class \(\C\), a graph \(G \in \C\),
\(\epsilon \in \Qpos\), and \(r \in \N\),
the computed distance-\(r\) neighbourhood cover \(\K\)
and the centre function \(\centre \colon \K \to V(G)\)
satisfy \(C = \neighb{2r}{G[C]}{\centre(C)}\) for every \(C \in \K\).

From the construction of the neighbourhood cover in~\cite{GroheKreutzerSiebertz_2017_NowhereDense},
it is easy to see that the weaker statement
\(C \subseteq \neighb{2r}{G}{\centre(C)}\) holds for every \(C \in \K\).
Since the proof in~\cite[Theorem~6.2]{GroheKreutzerSiebertz_2017_NowhereDense} does not explicitly
show how to obtain the stronger statement, we give a proof for this in the present section.

For this, we introduce some notation used in~\cite{GroheKreutzerSiebertz_2017_NowhereDense}.
For a graph \(G\), a linear order \(<\) on \(V(G)\),
and for \(k \in \N\) and  \(s, t \in V(G)\),
we say that \(t\) is \emph{weakly \(k\)-accessible} from \(s\) in \(G\) with respect to \(<\)
if \(t < s\) and there is a path \(P\) of length at most \(k\) from \(s\) to \(t\) in \(G\)
such that for all \(u \in V(P)\), we have \(t \leq u\).
Let \(\WReach_k(G, <, s)\) be the set of all vertices in \(V(G)\) that are weakly \(k\)-accessible
from \(s\) in \(G\) with respect to \(<\),
and let \(\WReach_k[G, <, s] \deff \WReach_k(G, <, s) \cup \set{s}\).

The clusters of the neighbourhood cover computed
in~\cite[Theorem~6.2]{GroheKreutzerSiebertz_2017_NowhereDense}
are of the form \(X_{2r}[G, <, t] \deff \setc{s \in V(G)}{t \in \WReach_{2r}[G, <, s]}\)
with centre \(\centre(X_{2r}[G, <, t]) = t\) for some vertex \(t \in V(G)\).
Hence, in order to show the stronger statement mentioned above,
it suffices to prove the following result.

\begin{lemma}
  Let \(G\) be a graph, let \(<\) be a linear order on \(V(G)\),
  let \(k \in \N\), \(t \in V(G)\), and
  \(X \deff X_k[G, <, t] \deff \setc{s \in V(G)}{t \in \WReach_k[G, <, s]}\).
  For all \(s \in X\), it holds that \(\dist^{G[X]}(s, t) \leq k\).
\end{lemma}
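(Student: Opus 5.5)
The plan is to take, for a given \(s \in X\), a witnessing path for \(t \in \WReach_k[G, <, s]\) and show that every vertex on it lies in \(X\). The path then lives in \(G[X]\), and its length bounds \(\dist^{G[X]}(s,t)\).

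If \(s = t\), the distance is \(0\) and there is nothing to do. Otherwise \(t \in \WReach_k(G, <, s)\), so \(t < s\) and there is a path \(P = (s = u_0, u_1, \dots, u_m = t)\) in \(G\) with \(m \leq k\) and \(t \leq u_j\) for all \(j\). It suffices to show \(u_j \in X\) for every \(j \in [0,m]\), that is, \(t \in \WReach_k[G, <, u_j]\).

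Fix \(j\). If \(u_j = t\), then \(t \in \WReach_k[G,<,u_j]\) trivially. Otherwise \(t \leq u_j\) and \(u_j \neq t\) give \(t < u_j\). The suffix \(P_j \deff (u_j, u_{j+1}, \dots, u_m)\) is a path from \(u_j\) to \(t\) in \(G\) of length \(m - j \leq k\). All its vertices are vertices of \(P\), so each of them is \(\geq t\). Hence \(t\) is weakly \(k\)-accessible from \(u_j\), and \(u_j \in X\).

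Therefore \(V(P) \subseteq X\), so \(P\) is a path in \(G[X]\) from \(s\) to \(t\) of length \(m \leq k\), and \(\dist^{G[X]}(s,t) \leq k\).

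There is no real obstacle. The one point to handle carefully is that suffixes of the witnessing path are again witnessing paths: both the length bound and the minimality of \(t\) are inherited by subpaths ending at \(t\). The degenerate cases \(u_j = t\) and \(k = 0\) are covered by the \(\WReach_k[\cdot]\) convention that includes the vertex itself.
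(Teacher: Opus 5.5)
Your proof is correct and follows the paper's argument: each suffix of the witnessing path that ends at \(t\) is again a witness for weak \(k\)-accessibility, so every vertex of the path lies in \(X\), and the path lies in \(G[X]\). The only difference is cosmetic: you handle the vertex \(u_j = t\) through the closed set \(\WReach_k[\cdot]\), whereas the paper uses pairwise distinctness of the path's vertices to get \(t < u_i\) for the inner vertices.
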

\begin{proof}
  Let \(s \in X\).
  By definition, we have \(t \in \WReach_k[G, <, s] = \WReach_k(G, < s) \cup \set{s}\).
  If \(t = s\), then \(\dist^{G[X]}(s, t) = 0 \leq k\).
  If \(t \neq s\), then \(t \in \WReach_k(G, <, s)\),
  so it holds that \(t < s\) and there is a path \(P = s \to u_1 \to \dots \to u_\ell \to t\)
  of length \(\ell + 1 \leq k\) in \(G\)
  such that \(t \leq u\) for all \(u \in V(P)\).
  Since \(<\) is a linear order and \(s, u_1, \dots, u_\ell, t\) are pairwise distinct,
  this implies that \(t < u_i\) for all \(i \in [\ell]\).
  Moreover, for all \(i \in [\ell]\), \(P_i \deff u_i \to u_{i+1} \to \dots \to u_\ell \to t\)
  is a path of length at most \(k\) in \(G\)
  with \(t \leq u\) for all \(u \in V(P_i)\),
  so \(t \in \WReach_k[G, <, u_i]\), and thus \(u_i \in X\).
  This shows that \(P \subseteq G[X]\), which certifies \(\dist^{G[X]}(s, t) \leq \ell + 1 \leq k\).
\end{proof}
 
\end{document}